\newcommand{\IGNORE}[1]{}
\tikzstyle{block}=[draw opacity=0.7,line width=1.4cm]
\tikzstyle{graphnode}=[circle, draw, fill=black!20, inner sep=0pt, minimum width=6pt]
\tikzstyle{point}=[circle, draw, fill=black!30, inner sep=0pt, minimum width=1pt]
\tikzstyle{input}=[rectangle, draw, fill=black!75,inner sep=3pt, inner ysep=3pt, minimum width=4pt]
\tikzstyle{unmatched}=[graphnode,fill=black!0]
\tikzstyle{shaded}=[graphnode,fill=black!20]
\tikzstyle{matched}=[graphnode,fill=black!100]  	
\tikzstyle{matching} = [ultra thick]
\tikzset{
    >=stealth',
    pil/.style={
           ->,
           thick,
           shorten <=2pt,
           shorten >=2pt,}
}
\tikzset{->-/.style={decoration={
  markings,
  mark=at position .5 with {\arrow{>}}},postaction={decorate}}}
\newtheorem{theorem}{Theorem}[section]
\newtheorem{claim}[theorem]{Claim}
\newtheorem{lemma}[theorem]{Lemma}
\newtheorem{corollary}[theorem]{Corollary}
\newtheorem{fact}[theorem]{Fact}
\theoremstyle{definition}
\newtheorem{definition}[theorem]{Definition}
\def\bz {{z}}
\def\eps {\varepsilon}
\newcommand{\poly}{\operatorname{poly}}
\newcommand{\opt}{\textit{opt\/}}
\title{Non-clairvoyant Precedence Constrained Scheduling\footnote{This research was supported in part by NSF awards CCF-1536002, CCF-1540541, and CCF-1617790, and the Indo-US Joint Center for Algorithms Under Uncertainty. Sahil Singla was supported in part by the Schmidt Foundation.}}
\date{ \today}
	\author{Naveen Garg\thanks{
	(naveen@cse.iitd.ac.in)
        Computer Science and Engineering Department,
        Indian Institute of Technology, Delhi.
	}
	\and Anupam Gupta\thanks{
        (anupamg@cmu.edu)
        Computer Science Department,
        Carnegie Mellon University.
        }
    \and Amit Kumar\thanks{
        (amitk@cse.iitd.ac.in)
        Computer Science and Engineering Department,
        Indian Institute of Technology, Delhi.
        }
	\and Sahil Singla\thanks{
        (singla@cs.princeton.edu)
        Department of Computer Science at
        Princeton University and
        School of Mathematics at Institute for Advanced Study, Princeton.   
             }
}
\newcommand{\gin}{\gamma^{\text{in}}}
\newcommand{\chain}{{\small\mathsf{chain}}}
\newcommand{\gout}{\gamma^{\text{out}}}
\newcommand{\A}{{\cal A}}
\newcommand{\B}{{\cal B}}
\newcommand{\tz}{{\tilde z}}
\newcommand{\calS}{\mathcal{S}}
\newcommand{\bL}{{\bar L}}
\renewcommand{\bz}{{\bar z}}
\newcommand{\bR}{{\bar R}}
\newcommand{\btR}{\bR^t}
\newcommand{\btz}{\bz^t}
\newcommand{\btL}{\bL^t}
\newcommand{\thetat}{\theta^t}
\newcommand{\etat}{\eta^t}
\newcommand{\nut}{\nu^t}
\newcommand{\Jact}{J^{\text{act}}}
\newcommand{\ind}[1]{{\mathbf{I}[#1]}}
\newcommand{\wh}{{\widehat w}}
\newcommand{\sse}{\subseteq}
\newcommand{\ts}{\textstyle}
\newcommand{\R}{\mathbb{R}}
\newcommand{\redd}[1]{{\color{red}#1}}
\begin{document}

\maketitle
\thispagestyle{empty}

\begin{abstract}
  We consider the online problem of scheduling jobs on identical
  machines, where jobs have precedence constraints. We are interested in
  the demanding setting where the jobs sizes are not known up-front, but
  are revealed only upon completion (the {\em non-clairvoyant} setting).
  Such precedence-constrained scheduling problems routinely arise in map-reduce
  and large-scale optimization. In this paper, we make progress on this problem.  For the objective of total
  weighted completion time, we give a constant-competitive
  algorithm. And for total weighted flow-time, we give an
  $O(1/\eps^2)$-competitive algorithm under $(1+\eps)$-speed
  augmentation and a natural ``no-surprises'' assumption on release
  dates of jobs (which we show is necessary in this context).

  Our algorithm proceeds by assigning {\em virtual rates} to all the
  waiting jobs, including the ones which are dependent on other
  uncompleted jobs, and then use
  these virtual rates to decide on the actual rates of minimal jobs
  (i.e., jobs which do not have dependencies and hence are eligible to
  run). Interestingly, the virtual rates are obtained by allocating time
  in a fair manner, using a Eisenberg-Gale-type convex program (which we
  can also solve optimally using a primal-dual scheme). The optimality condition of this convex
  program allows us to show dual-fitting proofs more easily, without
  having to guess and hand-craft the duals.  We feel that this idea of
  using fair virtual rates should have broader applicability in scheduling
  problems.
\end{abstract}

\newpage
\setcounter{page}{1}




\newcommand{\wjcjfactor}{10}

\section{Introduction}
\label{sec:introduction}

We consider the problem of online scheduling of jobs under precedence constraints. We seek to minimize the average weighted flow time 
of the jobs on multiple parallel machines, in the online 
\emph{non-clairvoyant} setting. Formally, there are $m$
identical machines, each capable of one unit of processing per
unit of time. A set of $[n]$ jobs arrive online. Each job has a
processing requirement $p_j$ and a weight $w_j$, and is released at some
time $r_j$. If the job finishes at time $C_j$, its
flow or response time is defined to be $C_j - r_j$. The goal is to give
a preemptive schedule that minimizes the total (or, equivalently, the
average) weighted flow-time $\sum_{j \in [n]} w_j\cdot (C_j - r_j)$. 
The main constraints of
our model are the following: (i)~the scheduling is done \emph{online},
so  the scheduler does not know of the jobs before they are released;
(ii)~the scheduler is \emph{non-clairvoyant}---when a job arrives, the
scheduler knows its weight but \emph{not its processing time $p_j$}. (It
is only when the job finishes its processing that the scheduler
knows the job is done, and hence knows $p_j$.); And (iii)~there are
\emph{precedence constraints} between jobs given by a partial order
$([n],\prec)$: $j \prec j'$ means job $j'$ cannot be started until $j$
is finished. Naturally, the partial order should respect release dates: if
$j \prec j'$ then $r_j \leq r_j'$. (We will require a stronger
assumption for some of our results.)

This model for constrained parallelism is a  natural one, both in
theory and in practice. In theory, this precedence-constrained
(and non-clairvoyant!) scheduling model (with other objective functions) goes back to Graham's
work on list scheduling~\cite{Graham66}.
In practice, most languages and libraries produce parallel code that can
be modeled using precedence DAGs~\cite{RS,ALLM,GrandlKRAK16}. Often these jobs (i.e., units of processing) are distributed among some $m$
workstations or servers, either in  server farms or on
the cloud, i.e., they use identical parallel machines.

\subsection{Our Results and Techniques}

\medskip\noindent\textbf{Weighted Completion Time.}  We develop our
techniques on the problem of minimizing the \emph{average weighted
  completion time} $\sum_j w_j C_j$. Our
convex-programming approach gives us:

\begin{theorem}
  \label{thm:main1}
  There is a $\wjcjfactor$-competitive deterministic online algorithm
  for minimizing the average weighted completion time on parallel
  machines with both release dates and precedences, in the online
  non-clairvoyant setting.
\end{theorem}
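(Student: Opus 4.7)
The plan is to design an algorithm that at each instant $t$ computes \emph{virtual rates} $z_j(t)$ for every uncompleted job $j$ (whether eligible to run or still blocked by predecessors) by solving an Eisenberg-Gale-type fair-sharing convex program: maximize $\sum_{j \text{ alive}} w_j \log z_j$ subject to $\sum_j z_j \le m$, together with structural constraints linking each job's rate to the rates of its unfinished predecessors. The \emph{actual} processing schedule is then obtained by ``pushing down'' the virtual mass of every blocked job onto its minimal ancestors in the precedence DAG, so that the rate dedicated to an eligible job $j'$ aggregates $z_{j'}(t)$ with the virtual rates of all waiting descendants whose progress is currently bottlenecked by $j'$. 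The total mass pushed down equals $m$, so the schedule is implementable, and fairness handles the fact that we do not know $p_j$.

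To prove the $10$-competitiveness, I would set up a time-indexed LP lower bound on $\sum_j w_j C_j$ (with fractional processing variables $x_j(t)$, capacity constraints $\sum_j x_j(t) \le m$, total-work constraints $\int x_j \ge p_j$, and precedence-respecting completion-time constraints) and construct dual variables directly from the KKT conditions of the fair-sharing convex program. The Eisenberg-Gale stationarity condition, roughly $w_j / z_j(t) = \alpha(t) + \beta_j(t)$ with $\alpha(t)$ the multiplier on the total-rate constraint and $\beta_j(t)$ the multipliers on the precedence-linkage constraints, \emph{suggests the dual variables for free}---which, as the abstract emphasizes, is the conceptual payoff over hand-crafted dual-fitting proofs.

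The analysis then proceeds in three steps: (i)~rewrite the algorithm's cost as $\int_0^\infty W(t)\,dt$, where $W(t)$ is the total weight of jobs alive at time $t$; (ii)~use the KKT identity and concavity of $\log$ to argue that at every $t$, the alive weight $W(t)$ is within a constant factor of the instantaneous dual contribution at time $t$; (iii)~verify dual feasibility globally, so that the constructed $\alpha(\cdot), \beta_\cdot(\cdot)$ certify a lower bound of at least $\frac{1}{10}$ of the algorithm's cost on $\OPT$.

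The main obstacle I expect is handling precedences cleanly under non-clairvoyance. A job can remain alive for a long time while its chain of predecessors executes, and during that window we keep paying its weight; showing that the fair-sharing virtual rates do enough work along critical chains to discharge both in-progress jobs and their blocked descendants within a constant factor of optimum is the crux. This is where the precedence-linkage constraints in the convex program, and the corresponding $\beta_j(t)$ multipliers, must be chosen so that the induced dual remains feasible for the scheduling LP (in particular respecting any precedence-aware completion-time inequalities) while the virtual-to-actual conversion on the primal side still fits into $m$ machines. If these two pieces fit together with constant-factor slack, the $10$-competitive bound will follow.
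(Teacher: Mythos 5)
Your high-level plan --- Nash-welfare virtual rates, a bipartite ``push-down'' from blocked jobs onto their minimal ancestors, a time-indexed LP, and KKT multipliers recycled as LP duals --- matches the paper's architecture. But three concrete ingredients are missing, and without them the argument as stated does not close.

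First, your capacity constraint is only $\sum_j z_j \le m$, and you then claim ``the total mass pushed down equals $m$, so the schedule is implementable.'' This is false: a single minimal job could end up absorbing far more than one unit of rate (for instance when a long chain of blocked descendants all push onto it), and no job can be processed on more than one machine at a time. The paper's convex program explicitly imposes the per-job cap $L^t_j \le 1$ on the \emph{actual} rate of every minimal job, in addition to $\sum_j L^t_j \le m$. The per-job cap is not a cosmetic detail; it is what makes the schedule realizable on $m$ parallel machines, it introduces the extra multipliers $\theta^t_j$ in the KKT system, and it is the source of the ``active/inactive'' distinction that drives the entire dual-fitting.

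Second, you have no analogue of the chain lower bound. The time-indexed LP you describe (and the paper's LP) effectively treats the $m$ machines as one machine of speed $m$, so it has an unbounded integrality gap: a long chain of unit jobs has LP value $\Theta(n)$ but any real schedule needs $\Theta(n^2)$ weighted completion time. The paper repairs this with a second, combinatorial lower bound $\sum_j w_j\cdot\chain_j$ (the longest processing-time chain ending at $j$), and the final bound on the algorithm's cost is the sum of the LP dual objective and this chain term. Your step (iii), verifying dual feasibility, cannot on its own certify a bound within a constant factor of $\mathrm{OPT}$; you must also charge the ``waiting for my ancestors to finish'' portion of each job's completion time against the chain bound.

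Third, you do not account for how the factor of $10$ actually arises. The paper first analyzes an auxiliary algorithm $\mathcal A$ that runs the machines at speed $2$, proving it $5$-competitive via the dual-plus-chain decomposition (the speed $2$ is used crucially in bounding $\sum_s \bar L^s_j \le p_j/2$). It then defines $\mathcal B$ as the time-dilated, speed-$1$ simulation of $\mathcal A$, which doubles every completion time and gives $10$. Your plan never introduces this two-phase structure, so it is unclear where a concrete constant would come from; the ``within a constant factor'' language in your step (ii) is doing unearned work here.

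In short, the conceptual template is right, but the per-job rate cap $L_j\le 1$, the auxiliary $\sum_j w_j\cdot\chain_j$ lower bound, and the speed-$2$-then-slow-down trick are all load-bearing and absent from the proposal.
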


For this result, at each time $t$, the algorithm has to know only the
partial order restricted to $\{ j \in [n] \mid r_j \leq t\}$, i.e., the
jobs released by time $t$. The algorithmic idea is simple in hindsight:
the algorithm looks at the \emph{minimal} unfinished jobs (i.e., they do
not depend on any other unfinished jobs): call them $I_t$. If $J_t$ is
the set of (already released and) unfinished jobs at time $t$, then $I_t
\sse J_t$. To figure out how to divide our processing among the jobs in
$I_t$, we write a convex program that fairly divides the time among all
jobs in the larger set $J_t$, such that (a)~these jobs can ``donate''
their allocated time to some preceding jobs in $I_t$, and that (b)~the
jobs in $I_t$ do not get more than $1$ unit of processing per time-step.

For this fair allocation, we maximize the (weighted) Nash Welfare
$\sum_{j \in J_t} w_j \log R_j$, where $R_j$ is the \emph{virtual rate}
of processing given to job $j \in J_t$, regardless of whether it can
currently be run (i.e., is in $I_t$). This tries to fairly distribute
the virtual rates among the jobs~\cite{Nash}, and can be solved using an
Eisenberg-Gale-type convex program. (We can solve this convex program in
our setting using a simple primal-dual algorithm,
see~\S\ref{sec:convsolv}.)  The proof of Theorem~\ref{thm:main1} is via
writing a linear-programming relaxation for the weighted completion time
problem, and fitting a dual to it. Conveniently, the dual variables for
the completion time LP naturally fall out of the dual (KKT) multipliers
for the convex program!

\medskip\noindent\textbf{Weighted Flow Time.}  We then turn to the
\emph{weighted flow-time minimization} problem. We first observe that
the problem has no competitive algorithm if there are jobs $j$ that 
depend on jobs released before $r_j$. Indeed, if OPT
ever has an empty queue while the algorithm is processing jobs, the
adversary could give a stream of tiny new jobs, and we would be sunk.
Hence we make an additional \emph{no-surprises} assumption about our
instance: when a job $j$ is released, all the jobs having a precedence
relationship to $j$ are also released at the same time. In other words,
the partial order is a collection of disjoint connected DAGs, where all
jobs in each connected component have the same release date. 
A special case of
 this model has been
 studied in \cite{RS,ALLM} where each DAG is viewed as a ``hyper-job''
 and there are no precedence constraints between different hyper-jobs.
In this model, we show:

\begin{theorem}
  \label{thm:flow1}
  There is an $O(1/\eps^2)$-competitive deterministic non-clairvoyant
  online algorithm for the problem of minimizing the average weighted
  flow time on parallel machines  with release dates and precedences,
  under the no-surprises and  $(1+\eps)$-speedup assumptions.
\end{theorem}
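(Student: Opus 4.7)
The plan is to adapt the virtual-rate algorithm from Theorem~\ref{thm:main1} to the weighted flow-time objective, and carry out a dual-fitting analysis against a time-indexed LP relaxation of flow time, using the $(1+\eps)$-speedup to provide slack in the dual feasibility conditions. At each time $t$, let $J_t$ denote the alive jobs (released and unfinished) and $I_t \sse J_t$ the minimal ones (no unfinished predecessor). The algorithm runs essentially the same Eisenberg-Gale-style convex program as before---maximize $\sum_{j \in J_t} w_j \log R_j$ subject to $R_j$ being realizable as a ``donation'' of processing from $j$ down the DAG to its minimal ancestors, with each minimal job receiving rate at most $1$ and total processing at most $(1+\eps)m$---and actually processes the minimal jobs at the resulting donated rates. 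The no-surprises assumption is invoked here to guarantee that when a job is released its entire DAG component is known, so the donation structure is well-defined at every time $t$.

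For the analysis, I would write a natural time-indexed LP relaxation for weighted flow time, with variables for the fractional processing of each job at each time, a total-capacity constraint of $m$ per time-slot, and precedence constraints expressed via cumulative-throughput inequalities along each chain of the DAG. The dual has variables $\beta_t$ for the per-time capacity constraints, $\alpha_j$ for the job-completion constraints, and multipliers on the precedence constraints. The key idea, mirroring the completion-time proof, is to read the dual variables directly off the KKT multipliers of the online convex program solved at time $t$: $\beta_t$ from the resource-capacity multiplier and the precedence duals from the donation multipliers. The KKT stationarity conditions then automatically align the ``cost per unit'' in the dual with the contribution of alive jobs to the algorithm's flow time, integrated over the lifetime $[r_j, C_j)$.

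The main technical obstacle is to simultaneously verify dual feasibility and lower-bound the dual objective by $\Omega(\eps^2)$ times the algorithm's flow time. Feasibility reduces to a per-$(j,t)$ inequality relating $w_j$ to a weighted sum of virtual rates of alive jobs; this follows from the Eisenberg-Gale first-order conditions, whose ``equalize bang-per-buck'' structure gives $\sum_{j \in J_t} w_j / R_j(t) \le (1+\eps)m$. One $1/\eps$ factor in the competitive ratio comes from using speed augmentation to peel off the $(1+\eps)$ in the aggregate dual capacity, leaving an $\eps/(1+\eps)$ slack to charge the algorithm's flow time against $\OPT$; a second $1/\eps$ factor appears when converting the instantaneous per-time charging into a global flow-time bound via a standard amortization. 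The no-surprises assumption is indispensable and not merely convenient: as observed in the excerpt, without it an adversary could release arbitrarily many tiny jobs whenever the algorithm has a non-empty queue, forcing an unbounded competitive ratio against a non-clairvoyant scheduler that cannot distinguish the ``right'' job to run.
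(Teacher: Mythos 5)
Your overall blueprint---virtual rates via an Eisenberg--Gale convex program, rate-to-schedule conversion using the $L_j \le 1$ throttle, dual fitting against a time-indexed LP with duals read off KKT multipliers, and using the no-surprises assumption to know each DAG component at release---matches the paper's framework. But the proposal is missing the single idea that actually delivers $(1+\eps)$-speedup, and as written it would only yield the paper's \emph{weaker} Theorem~\ref{thm:flowtime}, namely $O(1/\eps)$-competitiveness with $(2+\eps)$-speedup.

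The gap is in the algorithm itself, not just the analysis. You propose to keep maximizing $\sum_{j \in J_t} w_j \log R_j$ and claim that peeling off the $(1+\eps)$ factor from the capacity dual plus ``a standard amortization'' somehow yields an additional $1/\eps$ saving. The paper shows this is not so: with the unmodified Nash-welfare objective, the dual feasibility argument inherently needs a factor-$2$ slack (Lemma~\ref{lem:af} gives $\alpha_{j,s} \le 2 w_j$, and the inequality~\eqref{eq:dualf0} carries an explicit factor of $2$), which is exactly why \S\ref{sec:flow} stops at $(2+\eps)$-speedup. To get down to $(1+\eps)$, \S\ref{sec:improve-flowtime} changes the convex program objective to $\sum_{j\in J_t} \wh_{j,t}\ln R^t_j$ where $\wh_{j,t} = \big(w(J_{\le j,t})^k - w(J_{<j,t})^k\big)/w(J_t)^k$ with $k = 1/\eps$; this is a LAPS/WLAPS-style reweighting that drives almost all the virtual rate toward the most recently released jobs. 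That reweighting is what improves Lemma~\ref{lem:af} to $\alpha_{j,s} \le ke\,w_j$ (pushing the loss into the $O(1/\eps^2)$ competitive ratio) while simultaneously replacing the factor-$2$ in Lemma~\ref{lem:afnew} by $(1+\eps)$, so that dual feasibility now only needs $(1+3\eps)$-speedup. It also forces the introduction of ``nice'' time slots (where $w(\Jact_t) \ge (1-\eps)w(J_t)$) so that the dual objective still lower-bounds the algorithm's flow time. None of this falls out of running the completion-time algorithm with extra speed and a generic amortization; the scheduling policy must genuinely change. (A smaller issue: the paper keeps the convex program constraint at $\sum_{j} L^t_j \le m$ and applies the speedup to the machines in the rate-to-schedule step, rather than inflating the constraint to $(1+\eps)m$ as you suggest, but that is cosmetic compared to the missing reweighting.)
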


Interestingly, the algorithm for weighted flow-time is almost the same
as for 
weighted completion time. In fact, \emph{exactly} the same algorithm
works for both the completion time and flow time cases, if we allow a
speedup of $(2+\eps)$ for the latter. To get the $(1+\eps)$-speedup
algorithm, we give preference to the recently-arrived jobs, since they
have a smaller current time-in-system and each unit of waiting
proportionally hurts them more. This is along the lines of strategies
like LAPS and WLAPS~\cite{EdmondsP}.

\subsection{The Intuition} 

Consider the case of unit
weight jobs on a single machine. Without precedence constraints, the
round-robin algorithm, which runs all jobs at the same rate, is
$O(1)$-competitive for the flow-time objective with a $2$-speed
augmentation. Now consider precedences, and let the partial order be a
collection of disjoint chains: only the first remaining job from each
chain can be run at each time.
We generalize round-robin to this setting by running all minimal jobs
simultaneously, but at rates proportional to length of the corresponding
chains. We can show this algorithm is also $O(1)$-competitive with a
$2$-speed augmentation. While this is easy for chains and trees, let us
now consider the case when the partial order is the union of general
DAGs, where each DAG may have several minimal jobs. Even though the sum
of the rates over all the minimal jobs in any particular DAG should be
proportional to the number of jobs in this DAG, running all minimal jobs
at equal rates does not work. (Indeed, if many jobs depend on one of
these minimal jobs, and many fewer depend on the other minimal jobs in
this DAG, we want to prioritize the former.) 

Instead, we use a convex program to find rates. Our approach assigns a
``virtual rate'' $R_j$ to each job in the DAG (regardless of whether it
is minimal or not). This virtual rate allows us to ensure that even
though this job may not run, it can help some minimal jobs to run at
higher rates. This is done by an assignment problem where these virtual
rates get translated into actual rates for the minimal jobs. The virtual
rates are then calculated using Nash fairness, which gives us max-min
properties that are crucial for our analysis.

\medskip
\noindent\emph{Analysis Challenges:} In
typical applications of the dual-fitting technique, the dual variables
for each job encode the {\em increase in total flow-time} caused by
arrival of this job. Using this notion turns out to create problems.
Indeed, consider a minimal job of low weight which is running at a high
rate (because a large number of jobs depend on it). The increase in
overall flow-time because of its arrival is very large. However the dual
LP constraints require these dual variables to be bounded by the weights
of their jobs, which now becomes difficult to ensure. To avoid this, we
define the dual variables directly in terms of the virtual rates of the
jobs, given by the convex program.

Having multiple machines instead of a single machine creates new
problems. The \emph{actual rates} assigned to any minimal job cannot
exceed $1$, and hence we have to throttle certain actual rates. Again
the versatility of the convex program helps us, since we can add this as
a constraint. Arguing about the optimal solution to such a convex
program requires dealing with the suitable KKT conditions, from which we
can infer many useful properties. We also show
in \S\ref{sec:convsolv} that the optimal solution corresponds to a
natural ``water-filling'' based algorithm.

Finally, we obtain matching results for the case of $(1+\eps)$-speed
augmentation. Im et al.~\cite{ImKM18} gave a general-purpose technique
to translate a round-robin based algorithm to a LAPS-like algorithm.
In our setting, it turns out that the
LAPS-like policy needs to be run on the virtual rates of jobs. Analyzing
this algorithm does not follow in a black-box manner (as prescribed
by~\cite{ImKM18}), and we need to adapt our
dual-fitting analysis suitably.




\subsection{Related Work and Organization}
\label{sec:related-work}

\noindent\textbf{Completion Time.}~ Minimizing $\sum_j w_jC_j$
on parallel machines with precedence constraints has
$O(1)$-approximations in the \emph{offline} setting: Li~\cite{Li17}
improves on~\cite{Hall,MQS} to give a $3.387+\eps$-approximation. For
\emph{related} machines, the precedence constraints make the problem
much harder: there is a $O(\log m/\log \log
m)$-approximation~\cite{Li17} improving on a prior $O(\log m)$
result~\cite{ChudakS99}, and a hardness of $\omega(1)$ under
certain complexity assumptions~\cite{BNF15}. In the \emph{online}
setting, any offline algorithm for (a dual problem to) $\sum_j w_jC_j$
gives an \emph{clairvoyant} online algorithm, losing $O(1)$
factors~\cite{Hall}. 
Two caveats: it is unclear (a) how to make this algorithm
non-clairvoyant, and (b) how to solve the (dual of the) weighted
completion time problem with precedences in poly-time.

\medskip\noindent\textbf{Flow Time without Precedence.}~ To minimize
$\sum_j w_j(C_j-r_j)$, 
strong lower bounds are known for the competitive ratio of any online
algorithm even on a single machine~\cite{MPT}. Hence we use speed
augmentation~\cite{KP00}. 
For the general setting of non-clairvoyant weighted flow-time on
unrelated machines, Im et al.~\cite{ImKMP14} showed that weighted
round-robin with a suitable migration policy yields a
$(2+\eps)$-competitive algorithm using $(1+\eps)$-speed
augmentation. They gave a general purpose technique, based on the LAPS
scheduling policy, to convert any such round-robin based algorithm to a
$(1+\eps)$-competitive algorithm while losing an extra $1/\eps$ factor
in the competitive ratio. Their analysis also uses a dual-fitting
technique~\cite{AGK12,GuptaKP12}. However, they do not consider
precedence constraints.

\medskip\noindent\textbf{Flow Time with Precedence.}~
Much less is known for flow-time problems with precedence constraints.
For the offline setting on identical
machines, \cite{KL18} 
give $O(1)$-approximations with $O(1)$-speedup, even for general delay
functions. 
In the current paper, we achieve a $\poly(1/\eps)$-approximation with
$(1+\eps)$-speedup for flow-time.  Interestingly, \cite{KL18} show that
beating a $n^{1-c}$-approximation for any constant $c \in [0,1)$
requires a speedup of at least the optimal approximation factor of
makespan minimization in the same machine environment. 
However, this lower bound requires different jobs with a precedence
relationship to have different release dates, which is something our
model disallows. (Appendix~\S\ref{sec:lower-bounds} gives another lower bound
showing why we disallow such precedences in the online setting.)

In the \emph{online} setting, \cite{RS} introduced the DAG model where
each \emph{job} is a directed acyclic graph (of \emph{tasks}) released
at some time, and a job/DAG completes when all the tasks in it are
finished, and we want to minimize the total \emph{unweighted} flow-time.
They gave a $(2+\eps)$-speed $O(\kappa/\eps)$-competitive algorithm,
where $\kappa$ is the largest antichain within any job/DAG. \cite{ALLM}
show $\poly(1/\eps)$-competitiveness with $(1+\eps)$-speedup, again in
the non-clairvoyant setting. The case where jobs are entire DAGs, and
not individual nodes within DAGs, is captured in our weighted model by
putting zero weights for all original jobs, and adding a unit-weight
zero-sized job for each DAG which now depends on all jobs in the
DAG. Assigning arbitrary weights to individual nodes within DAGs makes
our problem quite non-trivial---we need to take into account the
structure of the DAG to assign rates to jobs. Another model to capture
parallelism and precedences uses \emph{speedup functions}~\cite{EdmondsCBD97,Edmonds99,EdmondsP}: 
relating our model to this setting remains an open question.

Our work is closely related to Im et al.~\cite{ImKM18} who use a Nash
fairness approach for completion-time and flow-time problems with
multiple resources. While our approaches are similar, to the best of our
understanding their approach does not immediately extend to the setting
with precedences. Hence we have to introduce new ideas of using virtual
rates (and being fair with respect to them), and throttling the induced
actual rates at $1$. The analyses of~\cite{ImKM18} and our work are both
based on dual-fitting; however, we need
some new ideas for the setting with precedences.


\medskip\noindent\textbf{Organization.} The weighted completion time
case is solved in \S\ref{sec:compleTime}. A $(2+\eps)$-speedup result
for weighted flow-time is in \S\ref{sec:flow}; this is improved to a
$(1+\eps)$-speedup in \S\ref{sec:improve-flowtime}. The proof that we
need the ``no-surprises'' assumption on release dates is in
\S\ref{sec:lower-bounds}.  Finally, we show how to solve the convex
program in \S\ref{sec:convsolv}. Some deferred proofs can be found in 
\S\ref{sec:proofs}.



\section{Minimizing Weighted Completion Time}
\label{sec:compleTime}

In this section, we describe and analyze the scheduling algorithm for
the problem of minimizing weighted completion time on parallel
machines. Recall that the precedence constraints are given by a DAG $G$,
and each job $j$ has a release date $r_j$, processing size $p_j$ and
weight $w_j$.

\subsection{The Scheduling Algorithm}
\label{sec:sched-algo}

We first assume that each of the $m$ machines run at rate $2$ (i.e.,
they can perform 2 units of processing in a unit time). We will show
later how to remove this assumption (at a constant loss of competitive
ratio). We begin with some notation. We say that a job $j$ is {\em
  waiting} at time $t$ (with respect to a schedule) if $r_j \leq t$, but
$j$ has not been processed to completion by time $t$. We use $J_t$ to
denote the set of waiting jobs at time $t$. Note that at time $t$, the
algorithm gets to see the subgraph $G_t$ of $G$ which is induced by the
jobs in $J_t$. We say that a job $j$ is {\em unfinished} at time $t$ if
it is either waiting at time $t$, or its release date is at least $t$
(and hence the algorithm does not even know about this job). Let $U_t$
denote the set of unfinished jobs at time $t$. Clearly,
$J_t \subseteq U_t$. At time $t$, the algorithm can only process those
jobs in $J_t$ which do not have a predecessor in $G_t$ -- denote these
\emph{minimal} jobs by $I_t$: they are {independent} of all other current jobs. For every time $t$, the
scheduling algorithm needs to assign a rate to each job $j \in I_t$. We
now describe how it decides on these rates.

Consider a time $t$. The algorithm considers a bipartite graph
$H_t = (I_t, J_t, E_t)$ with vertex set consisting of the minimal jobs
$I_t$ on left and the waiting jobs $J_t$ on right. Since $I_t \sse J_t$,
a job in $I_t$ appears as a vertex on both sides of this bipartite
graph. When there is no confusion, we slightly overload terminology by
referring to a job as a vertex in $H_t$.  The set of edges $E_t$ are as
follows: let $j_l \in I_t, j_r \in J_t$ be vertices on the left and the
right side respectively. Then $(j_l, j_r)$ is an edge in $E_t$ if and
only if there is a directed path from $j_l$ to $j_r$ in the DAG $G_t$.

The following convex program now computes the rate for each vertex in
$I_t$. It has variables $z^t_e$ for each edge $e \in E_t$. For each job
$j$ on the left side, i.e., for $j \in I_t$, define
$L_j^t := \sum_{e \in \partial j} z^t_e$ as the sum of $z_e$ values of
edges incident to $j$. Similarly, define
$R_j^t := \sum_{e \in \partial j} z^t_e$ for a job $j \in J_t$, i.e., on
the right side. The objective function is the Nash bargaining objective
function on the $R_j^t$ values, which ensures that each waiting job gets
some attention. In \S\ref{sec:convsolv} we give a combinatorial algorithm to efficiently solve 
this convex program.
\begin{alignat}{2}
\label{cp}
  \max & \sum_{j \in J_t} w_j \ln R^t_j \tag{CP} \\
  \label{eq:c1}
  L^t_j & =  \sum_{j' \in J_t: (j,j') \in E_t} z^t_{jj'} & \quad \quad & \forall j \in I_t \\
  \label{eq:c2}
  R^t_j & =  \sum_{j' \in I_t: (j',j) \in E_t} z^t_{j'j} & \quad \quad & \forall j \in J_t \\
  \label{eq:c3}
  L^t_j & \leq  1 & \qquad \qquad &\forall j \in I_t \\
  \label{eq:c4}
   \sum_{j \in I_t} L_j^t & \leq m \\
  \label{eq:c5}
  z^t_e & \geq 0 & \quad \quad & \forall e \in E_t
\end{alignat}
Let $(\btz, \btL, \btR)$ be an optimal solution to the above convex
program. We define the \emph{rate of a job} $j \in I_t$ as being $\btL_j$.

Although we have defined this as a continuous time process, it is easy
to check that the rates only change if a new job arrives, or if a job
completes processing. Also observe that we have effectively combined the
$m$ machines into one in this convex program. But assuming that all
events happen at integer times, 
we can translate the rate assignment to an actual schedule
as follows. For a time slot $[t, t+1]$, the total rate is at most $m$
(using~(\ref{eq:c4})), so we create $m$ time slots $[t,t+1]_i$, one for
each machine $i$, and iteratively assign each job $j$ an interval of
length $\btL_j$ within these time slots. It is possible that a job may
get assigned intervals in two different time slots, but the fact that
$\btL_j \leq 1$ means it will not be assigned the same time in two
different time slots. Further, we will never exceed the slots because
of~\eqref{eq:c4}. Thus, we can process these jobs in the $m$ time slots
on the $m$ parallel machines such that each job $j$ gets processed for
$\btL_j$ amount of time and no job is processed concurrently on multiple
machines. This completes the description of the algorithm; in this, we
assume that we run the machines at twice the speed. Call this algorithm
$\A$.

The final algorithm $\B$, which is only allowed to run the machines at
speed $1$, is obtained by running $\A$ in the background, and
setting $\B$ to be a slowed-down version of $\A$. Formally, if $\A$
processes a job $j$ on machine $i$ at time $t \in \R_{\geq 0}$, then
$\B$ processes this at time $2t$.
This completes the description of the algorithm.

\subsection{A Time-Indexed LP formulation}
\label{sec:lp}

We use the dual-fitting approach 
to analyze the above algorithm. We write a time-indexed linear programming
relaxation~(\ref{eq:LP}) for the weighted completion time problem, and
use the solutions to the convex program~(\ref{cp}) to obtain feasible
primal and dual solutions for~(\ref{eq:LP}) which differ by only a constant factor.

We divide time into integral time slots (assuming all quantities are integers). Therefore, the variable $t$ will refer to integer times only. For every job $j$ and time $t$, we have a variable $x_{j,t}$ which denotes the volume of $j$ processed during $[t,t+1]$. Note that this is defined only for $t \geq r_j$.
The LP relaxation is as follows:
 \begin{align}
 \min \quad & \ts \sum_{j,t} w_j \cdot \frac{t \cdot
              x_{j,t}}{p_j} \tag{LP} \label{eq:LP} \\
 \label{eq:1}
 \ts \sum_{t \geq r_j} \frac{x_{j,t}}{p_j} & \geq 1 \quad \forall j\\
 \label{eq:2}
 \ts \sum_j x_{j,t} & \leq m \quad \forall t \\
 \label{eq:3}
 \ts \sum_{s \leq t} \frac{x_{j,s}}{p_j} & \geq \ts \sum_{s \leq t} \frac{x_{j',s}}{p_{j'}} \quad \forall t, j \prec j'
\end{align}

The following claim, whose proof is deferred to the appendix,  shows that it is a valid relaxation.
\begin{claim}
\label{cl:lprelx}
Let $\opt$ denote the weighted completion time of an optimal off-line policy (which knows the processing time of all the jobs). Then the optimal value of the LP relaxation is at most $\opt$.
\end{claim}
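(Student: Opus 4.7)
The plan is to construct a feasible solution to~\eqref{eq:LP} directly from an optimal offline schedule and bound its objective value by $\opt$. Write $x^\star_{j,t}$ for the total amount of processing that the optimal schedule performs on job $j$ during the slot $[t,t+1]$, aggregated across all $m$ machines. Since the LP is indexed by integer time slots (and the excerpt assumes all quantities are integers), this is well defined; otherwise one first rounds or discretizes OPT at a constant-factor loss, which can be absorbed up front.

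I would then verify the three constraint types in order. Constraint~\eqref{eq:1} is immediate: OPT completes each job, so $\sum_t x^\star_{j,t}=p_j$. Constraint~\eqref{eq:2} follows because $m$ unit-speed machines perform at most $m$ units of work per unit of time. The only nontrivial check is the precedence constraint~\eqref{eq:3}: fix $j \prec j'$ and a time $t$. If OPT has done no processing on $j'$ by time $t$, the inequality is trivial. Otherwise, let $s^\star \leq t$ be the latest slot during which $j'$ is processed; by the precedence relation, $j$ must have completed by time $s^\star$, hence $\sum_{s \leq s^\star} x^\star_{j,s} = p_j$, which yields $\sum_{s \leq t} x^\star_{j,s}/p_j = 1 \geq \sum_{s \leq t} x^\star_{j',s}/p_{j'}$.

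For the objective, observe that $x^\star_{j,t}=0$ for every integer $t \geq C_j$, since all processing of $j$ in OPT finishes by $C_j$. Therefore
\[
\sum_t t \cdot \frac{x^\star_{j,t}}{p_j} \;\leq\; (C_j-1)\cdot \sum_t \frac{x^\star_{j,t}}{p_j} \;=\; C_j-1 \;\leq\; C_j.
\]
Multiplying by $w_j$ and summing over all jobs gives $\sum_{j,t} w_j\cdot t \cdot x^\star_{j,t}/p_j \leq \sum_j w_j C_j = \opt$, as required.

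The only step that genuinely requires care is the precedence argument; the rest is routine bookkeeping modulo the initial discretization. I do not expect any hidden obstacles, as this is a standard LP-relaxation validity check of the kind used to derive lower bounds for time-indexed formulations.
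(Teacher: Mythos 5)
Your proposal is correct and follows essentially the same route as the paper's own proof: plug in the natural time-indexed solution $x^\star_{j,t}$ from the optimal schedule, verify constraints \eqref{eq:1}--\eqref{eq:3} (with the precedence constraint being the only one needing a short argument), and bound the objective by $\sum_j w_j C_j$. Your objective bound is in fact marginally tighter ($C_j-1$ vs.\ $C_j$), and your precedence argument is a touch more explicit than the paper's, but these are cosmetic differences.
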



The~(\ref{eq:LP}) has a large integrality gap. Observe that the LP just
imagines the $m$ machines to be a single machine with speed $m$. 
Therefore,
(\ref{eq:LP}) has a large integrality gap for two reasons: (i)~a job $j$
can be processed concurrently on multiple machines, and (ii)~suppose we
have a long chain of jobs of equal size in the DAG $G$. Then the LP
allows us to process all these jobs at the same rate in parallel on
multiple machines. We augment the LP lower bound with another quantity
and show that the sum of these two lower bounds suffice.

A {\em chain} $C$ in $G$ is a sequence of jobs $j_1, \ldots, j_k$ such that $j_1 \prec j_2 \prec \ldots \prec j_k$. Define the processing time of $C$, $p(C)$, as the sum of the processing time of jobs in $C$. For a job $j$, define
$\chain_j$ as the maximum over all chains $C$ ending in $j$ of $p(C)$. It is easy to see that $\sum_j w_j \cdot (r_j + \chain_j)$ is a lower bound (up to a factor 2) on the objective of an optimal schedule.

We now write down the dual of the LP relaxation above. We have dual variables $\alpha_j$ for every job $j$, and $\beta_t$ for every time $t$, and $\gamma_{s, j \rightarrow j'}$
\begin{align}
 \max \quad  \sum_{j} \alpha_j - m \sum_t \beta_t \label{eq:dp} \tag{DLP}\\
 \label{eq:dual}
 \alpha_j - w_j \cdot t + \sum_{s \geq t} \Big( \sum_{ j \prec j'} \gamma_{s,j \rightarrow j'} - \sum_{ j' \prec j} \gamma_{s, j' \rightarrow j}  \Big) & \leq p_j \cdot \beta_t   \quad \forall j,t \geq r_j \\
 \notag
 \alpha_j, \beta_t & \geq 0
 \end{align}

 We write the dual constraint~\eqref{eq:dual} in a more readable manner. For a job $j$ and time $s$, let $\gin_{s,j}$ denote $\sum_{j' \prec j} \gamma_{s, j' \rightarrow j}$, and define $\gout_{s,j}$ similarly.
We now write the dual constraint~\eqref{eq:dual} as
\begin{align}
\label{eq:dualnew}
 \alpha_j - w_j \cdot t + \sum_{s \geq t} \Big( \gout_{s,j} - \gin_{s,j} \Big) \leq p_j \cdot \beta_t   \quad \forall j,t \geq r_j
\end{align}

\subsection{Properties of the Convex Program}
\label{sec:prop-CP}

We now prove certain properties of an optimal solution
$(\btz, \btL, \btR)$ to the convex program~\eqref{cp}. The first
property, 
whose proof is deferred to the appendix, 
is easy to see:
\begin{claim}
  \label{cl:full}
  If $\sum_{j \in I_t} \btL_j < m$, then $\btL_j = 1$ for all
  $j \in I_t$.
\end{claim}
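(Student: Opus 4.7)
The approach is a one-step local perturbation of the claimed optimal solution, which shows that the hypothesis of the claim is incompatible with strict suboptimality of any $\btL_{j_0}$. Assume for contradiction that $\sum_{j \in I_t} \btL_j < m$ (so constraint \eqref{eq:c4} is slack) yet there is some $j_0 \in I_t$ with $\btL_{j_0} < 1$; I will exhibit a feasible perturbation of $\btz$ that strictly increases the Nash-bargaining objective, contradicting optimality of $(\btz,\btL,\btR)$.

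The key observation is that because $j_0 \in I_t \subseteq J_t$, the job $j_0$ appears on both sides of the bipartite graph $H_t$, and the trivial zero-length directed path from $j_0$ to itself in $G_t$ places the self-edge $(j_0,j_0)$ in $E_t$. I would then perturb $\btz$ by raising $z^t_{j_0,j_0}$ by a small $\delta > 0$, leaving every other $z^t_e$ fixed. The only derived quantities that change are $L^t_{j_0}$ and $R^t_{j_0}$, each rising by $\delta$; all other $L^t_j$ and $R^t_j$ are unaffected. For $\delta$ sufficiently small, the strict slackness in both $\btL_{j_0} < 1$ and $\sum_j \btL_j < m$ is preserved, nonnegativity of the $z$'s is trivially maintained, and no other constraint of \eqref{cp} is even touched, so the perturbed solution is feasible.

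The resulting change in the objective is exactly $w_{j_0}\bigl(\ln(\btR_{j_0}+\delta) - \ln \btR_{j_0}\bigr)$, which is strictly positive whenever $w_{j_0} > 0$ and $\btR_{j_0} > 0$. The condition $\btR_{j_0} > 0$ must hold at any optimum, since otherwise the term $w_{j_0}\ln \btR_{j_0}$ in the objective is $-\infty$. Hence, as long as weights are strictly positive, this yields the promised contradiction. The only genuine subtlety I foresee is the degenerate $w_{j_0} = 0$ case, where the perturbation is merely nondecreasing; but then one can iteratively raise every such slack $\btL_{j_0}$ up to $1$ without changing the objective to obtain an equally optimal solution satisfying the claim, so one reads the claim as referring to that canonical optimum. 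I do not expect this to pose any real difficulty; the entire argument is a textbook convex-analysis perturbation, and it should fit in just a few lines.
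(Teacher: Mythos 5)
Your proof is correct and takes essentially the same approach as the paper's: a local perturbation raising a single $z_e$ value incident to the slack job, which strictly increases the Nash objective. The paper picks an arbitrary edge incident to $j$ on the left side (justifying its existence via the self-edge), whereas you pick the self-edge itself; this is a cosmetic difference. Your extra attention to the degenerate $w_{j_0}=0$ case is a minor refinement that the paper's terser argument glosses over, but the core idea is identical.
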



We now write down the KKT conditions for the convex program. (In fact,
we can use~(\ref{eq:c1}) and~(\ref{eq:c2}) to replace $\bL^t_j$ and
$\bR^t_j$ in the objective and the other constraints.) Then letting
$\thetat_j \geq 0, \etat \geq 0, \nut_e \geq 0$ be the Lagrange
multipliers corresponding to constraints~\eqref{eq:c3},~\eqref{eq:c4}
and~\eqref{eq:c5}, we get
\begin{alignat}{2}
  \label{k1}
  \frac{w_j}{\btR_j} & = \theta^t_{j'} + \eta^t - \nu_e^t &\qquad & \forall e=(j',j), j' \in I_t, j \in J_t \\
\label{k4}
\thetat_j\;(\btL_j - 1) & = 0 && \forall j \in I_t\\
\label{k5}
\ts \etat \;(\sum_{j \in I_t} \btL_j - m) & = 0 && \\
\label{k6}
\nut_e \cdot \btz_e & = 0 && \forall e \in E_t
\end{alignat}

We derive a few consequences of these conditions, the proofs are deferred to the appendix.
\begin{claim}
  \label{cl:sum0}
  Consider a job $j \in J_t$ on the right side of $H_t$. Then
  $w_j \geq \btR_j \cdot \etat$.
\end{claim}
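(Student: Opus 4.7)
The plan is to derive the inequality directly from the stated KKT conditions \eqref{k1} and \eqref{k6}, combined with the nonnegativity of the multiplier $\theta^t_{j'}$.

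First, I would handle the trivial case: if $w_j = 0$, then the claim reduces to $0 \geq \btR_j \cdot \etat$, which is immediate because $\btR_j \geq 0$ by \eqref{eq:c5} and \eqref{eq:c2}, while $\etat \geq 0$ as a Lagrange multiplier for the inequality constraint \eqref{eq:c4}. So from here on, assume $w_j > 0$.

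Next, I would argue that $\btR_j > 0$. Since the objective of \eqref{cp} contains the term $w_j \ln R^t_j$ with $w_j > 0$, any optimal solution must have $\btR_j > 0$ (otherwise the objective is $-\infty$, which is strictly dominated by any feasible solution assigning a tiny positive $z$-value on an edge incident to $j$; such an edge exists because every $j \in J_t$ has at least one ancestor in $I_t$, namely a minimal job in the subgraph $G_t$ that lies below $j$, and in the degenerate case $j \in I_t$ itself the self-edge $(j,j) \in E_t$ is available). Consequently, by \eqref{eq:c2}, there exists some edge $e = (j', j) \in E_t$ with $j' \in I_t$ and $\btz^t_e > 0$.

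The main (and only) step is then to apply complementary slackness: from \eqref{k6} together with $\btz^t_e > 0$, we get $\nut_e = 0$. Substituting into \eqref{k1} gives
\[
\frac{w_j}{\btR_j} \;=\; \thetat_{j'} + \etat \;\geq\; \etat,
\]
where the inequality uses $\thetat_{j'} \geq 0$ (nonnegativity of the Lagrange multiplier for the inequality \eqref{eq:c3}). Multiplying through by $\btR_j > 0$ yields $w_j \geq \btR_j \cdot \etat$, as claimed. I do not anticipate a real obstacle here; the only mild subtlety is justifying $\btR_j > 0$ when $w_j > 0$, which follows from the log barrier in the objective and the existence of at least one edge in $E_t$ incident to $j$ on the right side of $H_t$.
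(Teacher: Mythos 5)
Your proof is correct and follows essentially the same route as the paper: locate an edge $e=(j',j)$ with $\btz^t_e>0$, invoke complementary slackness \eqref{k6} to kill $\nut_e$, and then read off the inequality from \eqref{k1} using $\thetat_{j'}\geq 0$. Your treatment is slightly more careful (explicitly handling $w_j=0$ and justifying $\btR_j>0$ via the log barrier rather than just citing \eqref{k1}), but the core argument is identical.
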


\begin{claim}
  \label{cl:sum}
  Consider a job $j \in J_t$ on the right side of $H_t$. Suppose $j$ has
  a neighbor $j' \in I_t$ such that $\btL_{j'} < 1$ and
  $\btz_{j'j} > 0$.  Then $w_j = \btR_j \cdot \etat$.
\end{claim}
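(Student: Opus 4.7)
The plan is to derive the equality $w_j = \btR_j \cdot \etat$ as a direct consequence of the KKT conditions \eqref{k1}, \eqref{k4}, and \eqref{k6} applied to the specific edge $e = (j', j)$ that the hypothesis singles out.

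First I would use the hypothesis $\btL_{j'} < 1$ together with the complementary slackness condition \eqref{k4}, namely $\thetat_{j'}(\btL_{j'} - 1) = 0$, to conclude that the Lagrange multiplier $\thetat_{j'}$ vanishes. Next, I would invoke the other hypothesis $\btz^t_{j'j} > 0$ together with condition \eqref{k6}, i.e., $\nut_e \cdot \btz^t_e = 0$, to conclude that $\nut_{j'j} = 0$ for the edge $e = (j',j) \in E_t$.

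With both of these multipliers eliminated, the stationarity condition \eqref{k1} for the edge $e = (j',j)$ reduces from
\[
\frac{w_j}{\btR_j} \;=\; \thetat_{j'} + \etat - \nut_e
\]
to simply $w_j / \btR_j = \etat$, and cross-multiplying gives the claimed identity $w_j = \btR_j \cdot \etat$. Since the argument is essentially a two-line application of complementary slackness, I do not anticipate any real obstacle; the only point to double-check is that $\btR_j > 0$ (so dividing by it in \eqref{k1} is legitimate), which follows because $\btz^t_{j'j} > 0$ forces $\btR_j \geq \btz^t_{j'j} > 0$ via constraint \eqref{eq:c2}, and also because the objective $\sum_{j\in J_t} w_j \ln \btR_j$ would be $-\infty$ otherwise (an infeasible value for the maximizer whenever the feasible region admits strictly positive $R$-values, which is the generic case).
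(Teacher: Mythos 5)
Your proof is correct and follows exactly the same route as the paper's own argument: use \eqref{k4} with $\btL_{j'}<1$ to kill $\thetat_{j'}$, use \eqref{k6} with $\btz^t_{j'j}>0$ to kill $\nut_{j'j}$, and then read the identity off the stationarity condition \eqref{k1}. The extra observation that $\btR_j>0$ is a sensible sanity check that the paper leaves implicit.
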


A crucial notion is that of an \emph{active} job:

\begin{definition}[Active Jobs]
  A job $j \in J_t$ is {\em active} at time $t$ if it has at least one
  neighbor in $I_t$ (in the graph $H_t$) running at rate strictly less
  than 1.
\end{definition}

Let $\Jact_t$ denote the set of active jobs at time $t$.  We can
strengthen the above claim as follows. 
\begin{corollary}
  \label{cor:sum}
  Consider an active job $j$ at time $t$.  Then
  $w_j = \btR_j \cdot \etat$.
\end{corollary}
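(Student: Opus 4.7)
The plan is to sandwich the quantity $w_j/\btR_j$ between $\etat$ and $\etat$. One direction will come for free from Claim~\ref{cl:sum0}, which applies to every right-side vertex of $H_t$ and hence in particular to $j\in\Jact_t\subseteq J_t$; the other direction will come from the KKT stationarity condition~\eqref{k1}, applied to a carefully chosen edge whose left endpoint has slack rate.

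Concretely, since $j$ is active, the definition hands me a neighbour $j'\in I_t$ of $j$ in $H_t$ with $\btL_{j'}<1$. I would use the complementary-slackness condition~\eqref{k4} to conclude $\thetat_{j'}=0$, and then read off~\eqref{k1} on the edge $e=(j',j)\in E_t$ to obtain
\[
  \frac{w_j}{\btR_j} \;=\; \thetat_{j'} + \etat - \nut_e \;=\; \etat - \nut_e \;\le\; \etat,
\]
where the final step uses $\nut_e\ge 0$. Combined with the reverse inequality $w_j\ge \btR_j\cdot\etat$ from Claim~\ref{cl:sum0}, this pins down $w_j/\btR_j = \etat$, which is the claim.

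The main thing I had to check is that I do not need the stronger hypothesis of Claim~\ref{cl:sum}, which additionally demands $\btz_{j'j}>0$ in order to force $\nut_e=0$ via~\eqref{k6} and thereby obtain an exact identity on a single edge. Here, the one-sided inequality $\etat-\nut_e\le\etat$ already suffices, because Claim~\ref{cl:sum0} supplies the matching lower bound for free; there is no need to pin $\nut_e$ down exactly. So the corollary is really a light repackaging of the two KKT facts already established, the only subtlety being to invoke~\eqref{k4} at the slack neighbour $j'$ and bound (rather than evaluate) $\nut_e$ on the edge $(j',j)$.
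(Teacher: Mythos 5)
Your proof is correct, and it takes a genuinely cleaner route than the paper's. The paper proves the corollary by reducing to Claim~\ref{cl:sum}: it case-splits on whether $\btz_{j'j}>0$, and in the case $\btz_{j'j}=0$ it perturbs the optimal solution (shifting mass from another edge $e''$ incident to $j$ onto $(j',j)$) so as to enter the hypotheses of Claim~\ref{cl:sum}. Your argument sidesteps both the case analysis and the perturbation: since $\btL_{j'}<1$, condition~\eqref{k4} gives $\thetat_{j'}=0$, and then~\eqref{k1} on the edge $(j',j)$ yields $w_j/\btR_j=\etat-\nut_{j'j}\le\etat$ (using only $\nut_{j'j}\ge 0$, not $\nut_{j'j}=0$), which, paired with the reverse inequality from Claim~\ref{cl:sum0}, forces equality. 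This sandwich argument is more direct, avoids appealing to the structure of the optimum beyond the stationarity and complementary-slackness conditions already stated, and makes it clear that Claim~\ref{cl:sum} is not actually needed for the corollary---a nice observation worth keeping in mind, since the same two-sided-bound pattern would generalize more readily if the objective or constraints were varied.
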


\begin{claim}
  \label{cl:upper}
  $w(\Jact_t)/m \leq \etat \leq w(J_t)/m$.
\end{claim}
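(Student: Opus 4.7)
The plan is to derive both bounds from the KKT conditions together with Claim \ref{cl:sum0} and Corollary \ref{cor:sum}, using that $\sum_{j\in J_t}\btR^t_j=\sum_{j'\in I_t}\btL^t_{j'}$ (since each edge variable $\btz^t_e$ appears once in each sum).

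For the \emph{upper bound} $\etat\le w(J_t)/m$, I would first sum Claim \ref{cl:sum0} over all $j\in J_t$ to get
\[
w(J_t)\;\ge\;\etat\cdot\sum_{j\in J_t}\btR^t_j\;=\;\etat\cdot\sum_{j'\in I_t}\btL^t_{j'}.
\]
If $\etat>0$, then KKT condition \eqref{k5} forces $\sum_{j'\in I_t}\btL^t_{j'}=m$, so $w(J_t)\ge\etat\cdot m$ and we are done. If $\etat=0$, the inequality holds trivially. The only subtlety here is invoking complementary slackness at the right moment.

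For the \emph{lower bound} $w(\Jact_t)/m\le\etat$, I would split into two cases. If $\Jact_t=\emptyset$, then $w(\Jact_t)=0\le\etat\cdot m$ and we are done. Otherwise, some $j\in\Jact_t$ has a neighbor $j'\in I_t$ with $\btL^t_{j'}<1$, so by Corollary \ref{cor:sum} every active job satisfies $w_j=\btR^t_j\cdot\etat$ exactly. Summing,
\[
w(\Jact_t)\;=\;\etat\cdot\sum_{j\in\Jact_t}\btR^t_j\;\le\;\etat\cdot\sum_{j\in J_t}\btR^t_j\;=\;\etat\cdot\sum_{j'\in I_t}\btL^t_{j'}\;\le\;\etat\cdot m,
\]
where the last inequality is just constraint \eqref{eq:c4}. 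Dividing by $m$ gives the claim.

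The main obstacle, if any, is a small one: making sure the edge cases where $\etat=0$ or $\Jact_t=\emptyset$ are handled cleanly, since in those cases Corollary \ref{cor:sum} degenerates into $w_j=0$. Once we observe that the sums $\sum_{j\in J_t}\btR^t_j$ and $\sum_{j'\in I_t}\btL^t_{j'}$ coincide (each counting total edge weight in $H_t$), both directions reduce to two-line computations from the KKT data.
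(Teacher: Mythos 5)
Your proof is correct. For the upper bound you follow the paper's argument essentially verbatim: sum Claim~\ref{cl:sum0} over $J_t$, use the identity $\sum_{j\in J_t}\btR_j=\sum_{j\in I_t}\btL_j$, and invoke complementary slackness~\eqref{k5} to replace that sum by $m$ when $\etat>0$.

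For the lower bound you take a slightly different and, I'd say, cleaner route. The paper splits on $\etat=0$ versus $\etat>0$: when $\etat>0$ it uses complementary slackness to get $\sum_{j\in I_t}\btL_j=m$ exactly, and when $\etat=0$ it runs a separate KKT argument (via~\eqref{k1} and positivity of $w_j/\btR_j$) to conclude $\Jact_t=\emptyset$. You instead apply Corollary~\ref{cor:sum} uniformly to every active job to write $w(\Jact_t)=\etat\sum_{j\in\Jact_t}\btR_j$, and then bound $\sum_{j\in\Jact_t}\btR_j\le\sum_{j\in J_t}\btR_j=\sum_{j\in I_t}\btL_j\le m$ directly from the primal constraint~\eqref{eq:c4}, using only $\etat\ge 0$. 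This bypasses the case split and the auxiliary argument that $\etat=0$ forces every job to be inactive; it is also a touch more robust, since it does not rely on $w_j>0$ to derive a contradiction in the degenerate case. (Your explicit case on $\Jact_t=\emptyset$ is actually unnecessary — the chain of inequalities already handles the empty sum — but it does no harm.) The tradeoff is that the paper's proof reuses the same equality~\eqref{eq:4} for both bounds, giving a more parallel presentation, whereas yours mixes complementary slackness for one direction with the raw constraint for the other.
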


\subsection{Analysis via Dual Fitting}
\label{sec:dual-completion}

We analyze the algorithm $\A$ first.
We  define feasible dual variables for~(\ref{eq:dp}) such that the value of the dual objective function (along with the $\chain_j$ values that capture the maximum processing time over all chains ending in $j$) forms a lower bound on the weighted completion time of our algorithm. Intuitively, $\alpha_j$ would be the weighted completion time of $j$, and $\beta_t$ would be $1/2m$ times the total weight of unfinished jobs at time $t$. Thus, $\sum_j \alpha_j - m \sum_t \beta_t$ would be at $1/2$ times the total weighted completion time. This idea works as long as all the machines are busy at any point of time, the reason being that the primal LP essentially views the $m$ machines as  a single speed-$m$ machine. Therefore, we can generate enough dual lower bound if the rate of processing in each time slot is $m$. If all machines are not busy, we need to appeal to the lower bound given by the $\chain_j$ values.

We use the notation used in the description of the algorithm. In the graph $H_t$, we had assigned rates $\btL_j$ to all the nodes $j$ in $I_t$. Recall that a vertex $j \in J_t$ on the right side of $H_t$ is said to be {\em active} at time $t$ if it has a neighbor $j' \in I_t$ for  which $\btL_{j'} < 1$. Otherwise, we say that
$j$ is inactive at time $t$.
 We say that an edge $e=(j_l, j_r) \in E_t$, where $j_l \in I_t, j_r \in J_t$ is {\em active} at time $t$ if the vertex $j_r$ is active. Let $A_t$ denote the set of active edges in $E_t$.  Let $e = (j_l, j_r)$ be an edge in $E_t$. By definition, there is a path from $j_l$ to $j_r$ in $G_t$ -- we fix such a path $P_e$.
As before, let $C_j$ denote the completion time of job $j$.
The dual variables are defined as follows:
\begin{itemize}
\item For each job $j$ and time $t$, we define quantities $\alpha_{j,t}$. The dual variable $\alpha_j$ would be equal to $\sum_{t \geq 0} \alpha_{j,t}$. Fix a job $j$. If $t \notin [r_j, C_j]$ we set $\alpha_{j,t}$ to 0. Now, suppose $j \in J_t$. Consider the job $j$ as a vertex in $J_t$ (i.e., right side)  in the bipartite graph $H_t$. We set $\alpha_{j,t}$ to $w_j$ if $j$ is active at time $t$,
    otherwise it is inactive.
\item For each time $t$, we set $\beta$ to $1/2m \cdot w(U_t)$ (Recall that $U_t$ is the set of unfinished jobs at time $t$).
\item We now need to define $\gamma_{t, j' \rightarrow j}$, where $j' \prec j$. If $j$ or $j'$ does not belong to $J_t$, we set this variable to 0. So assume that $j, j' \in J_t$ (and so the edge $(j', j)$ lies in $G_t$). We define
$$\gamma_{t, j' \rightarrow j} := \etat \cdot \sum_{e: e \in A_t, (j' \rightarrow j) \in P_e} \btz_e.$$
In other words, we consider all the active edges $e$ in the graph $H_t$ for which the corresponding path $P_e$ contains $(j', j)$. We add up the fractional assignment $\btz_e$ for all such edges.
\end{itemize}

This completes the description of the dual variables.

We first show that the objective function for~(\ref{eq:dp}) is close to the weighted completion time incurred by the algorithm. 
The proof is deferred to the appendix. 
\begin{claim}
\label{cl:obj}
The total weighted completion time of the jobs in $\A$ is at most $2(\sum_j \alpha_j  - m \cdot \sum_t \beta_t) + \sum_j w_j \cdot (\chain_j + 2r_j)$.
\end{claim}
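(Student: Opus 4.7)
The plan is to lower-bound $2\bigl(\sum_j \alpha_j - m\sum_t \beta_t\bigr) + \sum_j w_j(\chain_j + 2r_j)$ by the total weighted completion time $\sum_j w_j C_j$ of $\A$. Two of the three pieces are straightforward to compute from the definitions: since $\beta_t = w(U_t)/(2m)$ and each job $j$ sits in $U_t$ for exactly $C_j$ integer slots, $m\sum_t \beta_t = \tfrac{1}{2}\sum_j w_j C_j$; and since $\alpha_{j,t} = w_j$ precisely when $t \in [r_j,C_j]$ and $j$ is active at time $t$, we get $\sum_j \alpha_j = \sum_j w_j\, a_j$, where $a_j$ denotes the number of active slots of $j$ within $[r_j,C_j]$. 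Plugging these in, the claim is equivalent to the per-job inequality $a_j \geq (C_j - r_j) - \chain_j/2$.

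To prove this, I would introduce the potential $\phi_j(t)$ defined as the maximum, over chains of currently-unfinished jobs ending at $j$, of their total remaining processing at time $t$. Then $\phi_j(r_j) \leq \chain_j$, $\phi_j(C_j) = 0$, and $\phi_j$ is nonincreasing (since remaining processing on every chain is nonincreasing). The key observation is: at any time $t$ at which $j$ is inactive, the leading element $j^\star$ of a chain attaining $\phi_j(t)$ must itself lie in $I_t$; for if $j^\star$ had an unfinished predecessor in $G_t$, prepending it would strictly enlarge $\phi_j(t)$ and contradict maximality. By the definition of ``inactive'', every left-neighbor of $j$ in $H_t$ has $\btL_{j^\star}^t = 1$, and since $\A$ runs its machines at speed $2$, the job $j^\star$ is processed by $2$ units during the slot. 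Hence $\phi_j$ drops by at least $2$ per unit of inactive time, which combined with the endpoint bounds gives $|\{t\in [r_j,C_j] : j \text{ inactive}\}| \leq \chain_j/2$, and so $a_j \geq (C_j - r_j) - \chain_j/2$ as desired.

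Putting the pieces together, $2\sum_j \alpha_j \geq 2\sum_j w_j(C_j - r_j) - \sum_j w_j \chain_j$, and so $2\bigl(\sum_j \alpha_j - m\sum_t \beta_t\bigr) \geq \sum_j w_j C_j - 2\sum_j w_j r_j - \sum_j w_j \chain_j$; adding $\sum_j w_j(\chain_j + 2r_j)$ yields exactly $\sum_j w_j C_j$, proving the claim.

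The main technical obstacle lies in tracking $\phi_j$ cleanly through events: the leader of a current maximum chain may change mid-slot (e.g.\ when it completes), so one must verify that $\phi_j$ still drops at rate $\geq 2$ throughout each inactive sub-interval. This is handled by a per-interval argument between consecutive events, in which the rate-$2$ drop of $\phi_j$ on the current max chain is the only relevant contribution, while any discontinuous drops at event times are absorbed for free by the nonincreasingness of $\phi_j$.
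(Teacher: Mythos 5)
Your reduction is exactly the paper's: you observe $m\sum_t\beta_t = \tfrac12\sum_j w_jC_j$ and $\sum_j\alpha_j = \sum_j w_j a_j$, and then bound the number of inactive slots of each job $j$ in $[r_j,C_j]$ by $\chain_j/2$. Where you diverge is in \emph{how} that last bound is proved. The paper fixes a single chain $C$ with $p(C)=\chain_j$ once and for all, and asserts that in every inactive slot exactly one job of $C$ lies in $I_t$ and is therefore processed for $2$ units; you instead track a time-varying potential $\phi_j(t)$ (the maximum remaining work on any chain of currently-unfinished jobs ending at $j$). Your prepending argument is the right justification for why the chain's leader is minimal --- for a \emph{fixed} chain $C$, its first unfinished job could in principle have an unfinished predecessor \emph{outside} $C$ and thus not be in $I_t$, whereas for the chain attaining the current maximum this is impossible by maximality. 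So your potential makes explicit, and robust, a step that the paper's fixed-chain phrasing leaves implicit. The cost is that you then have to control $\phi_j$ as the argmax chain changes over time, which you acknowledge and handle correctly: between events all data is static, every chain currently attaining $\phi_j(t)$ has a minimal leader running at rate $1$ (speed $2$), so the max itself falls at rate $2$; and since $\phi_j$ is a maximum of continuous nonincreasing functions it is itself continuous and nonincreasing, so there is nothing to lose at event times. One small caution in your phrasing: the slope of $\phi_j$ at a given moment is governed by \emph{all} chains currently attaining the maximum, not just one of them, so you need the prepending argument to apply to every maximizer --- which it does. With that understood, your per-job inequality $a_j \ge (C_j-r_j) - \chain_j/2$ follows, and the arithmetic you give recombines it into the claimed bound. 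Overall: correct, same high-level decomposition as the paper, but a cleaner and more careful mechanism for the ``$2$ units of chain work per inactive slot'' step.
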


We now argue about feasibility of the dual constraint~\eqref{eq:dual}. Consider a job $j$ and time $t \geq r_j$. Since $\alpha_{j,s} \leq w_j$ for all time $s$, $\sum_{s \leq t} \alpha_{j,s} \leq w_j \cdot t$. Therefore, it suffices to show:
\begin{align}
\label{eq:dualsuf}
\sum_{s \geq t} \alpha_{j,s} + \sum_{s \geq t} \Big(   \gout_{s,j} - \gin_{s,j} \Big)  \leq p_j \cdot \beta_t
\end{align}

Let $t_j^\star$ be the first time $t$ when the job $j$ appears in the set $I_t$. This would also be the first time when the algorithm starts processing $j$ because a job that enters $I_t$ does not leave $I_t$ before completion.
\begin{claim}
\label{cl:tj}
For any time $s$ lying in the range $[r_j, t_j^\star)$, $\alpha_{j,s} + \gout_{s,j} - \gin_{s,j} = 0$.
\end{claim}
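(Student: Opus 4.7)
The plan is to exploit the key structural fact that during $[r_j, t_j^\star)$ the job $j$ is waiting but has not yet become minimal, so $j \in J_s \setminus I_s$; in particular, $j$ appears only as a right-side vertex of $H_s$ and never as a left-side vertex. Expanding $\gin_{s,j}$ and $\gout_{s,j}$ using the definition of $\gamma_{s, j' \to j}$, I would rewrite
\[
 \gin_{s,j} - \gout_{s,j} = \etat \sum_{e \in A_s} \btz_e \cdot \Big( \mathbf{I}[P_e \text{ enters } j] - \mathbf{I}[P_e \text{ leaves } j] \Big),
\]
and then analyze each active edge $e = (j_l, j_r) \in A_s$ separately according to whether the chosen path $P_e$ passes through $j$.

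Next, I would argue that each interior occurrence of $j$ on some $P_e$ contributes equally to $\gin$ and $\gout$ and thus cancels: an interior vertex is both entered and left exactly once along the simple path $P_e$. Since $j \notin I_s$, $j$ can never equal the source $j_l$ of $P_e$. Hence the only surviving contributions come from active edges $e$ whose path $P_e$ terminates at $j$, i.e., edges of the form $e = (j_l, j) \in A_s$. This yields
\[
 \gin_{s,j} - \gout_{s,j} = \etat \sum_{e = (j_l, j) \in A_s} \btz_e.
\]

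The proof then splits on whether $j$ is active at time $s$. If $j$ is active, then by definition of $A_s$ (active right endpoint) every edge $(j_l, j) \in E_s$ is in $A_s$, so the sum above equals $\btR_j$ by constraint~\eqref{eq:c2}. Corollary~\ref{cor:sum} gives $w_j = \btR_j \cdot \etat$, and the dual definition gives $\alpha_{j,s} = w_j$; therefore $\alpha_{j,s} + \gout_{s,j} - \gin_{s,j} = w_j - \etat \btR_j = 0$. If $j$ is inactive, then no edge with right endpoint $j$ is in $A_s$, so the sum vanishes and $\gin_{s,j} = \gout_{s,j}$; meanwhile $\alpha_{j,s} = 0$, and the claim again holds.

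The only nontrivial step is the interior-cancellation argument in the second paragraph, which relies on $P_e$ being a (simple) directed path in $G_s$ and on $j$ never being a source of any such path during $[r_j, t_j^\star)$; the former is guaranteed by the DAG construction of $P_e$ for each edge of $E_t$, and the latter is precisely the defining property of this time window. Everything else is immediate from the KKT-based consequences already derived in~\S\ref{sec:prop-CP}.
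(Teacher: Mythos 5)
Your proof is correct and follows essentially the same route as the paper's: you observe that $j$ is right-side only in $H_s$, cancel the interior contributions along each $P_e$, note that $j$ can never be the source of a $P_e$, and then split on whether $j$ is active, invoking Corollary~\ref{cor:sum} in the active case. The only cosmetic difference is that you spell out the interior-cancellation step via indicator bookkeeping, which the paper treats more tersely.
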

\begin{proof}
Fix such a time $s$. Note that $j \notin I_s$. Thus $j$ appears as a vertex on the right side in the bipartite graph $H_s$, but does not appear on the left side.
Let $e$ be in active edge in $H_s$ such that the corresponding path $P_e$ contains $j$ as an internal vertex. Then $\bz_e^s$ gets counted in both $\gout_{s,j}$ and $\gin_{s,j}$. There cannot be such a path $P_e$ which starts with $j$, because then $j$ will need to be on the left side of the bipartite graph. There could be paths $P_e$ which end with $j$ -- these will correspond to active edges $e$ incident with $j$ in the graph $H_t$ (this happens only if $j$ itself is active). Let $\Gamma(j)$ denote the edges incident with $j$. We have shown that
\begin{align}
\label{eq:gin}
\gout_{s,j} - \gin_{s,j} = - \etat \cdot \sum_{e \in \Gamma(j) \cap A_s} \tz^s_e.
\end{align}
If $j$ is not active, the RHS is 0, and so is $\alpha_{j,s}$. So we are done. Therefore, assume that
$j$ is active. Now, $A(s)$ contains all the edges incident with $j$, and so, the RHS is same as
$-\etat \cdot \btR_j$. But then, Corollary~\ref{cor:sum} implies that $- \etat \cdot \btR_j = -w_j$. Since
$\alpha_{j,s} = w_j$, we are done again.
\end{proof}

Coming back to inequality~\eqref{eq:dualsuf}, we can assume that $t \geq t_j^\star$. To see this, suppose $t < t_j^\star$. Then by Claim~\ref{cl:tj} the LHS of this constraint is same as
$$\sum_{s \geq t^\star_j} \alpha_{j,s} + \sum_{s \geq t^\star_j} \Big(   \gout_{s,j} - \gin_{s,j} \Big). $$
Since $\beta_t \geq \beta_{t^\star_j}$ (the set of unfinished jobs can only diminish as time goes on),~\eqref{eq:dualsuf} for time $t$ follows from the corresponding statement for time $t^\star_j$. Therefore, we assume that $t \geq t^\star_j$. We can also assume that $t \leq C_j$, otherwise the LHS of this constraint is  0.

\begin{claim}
\label{cl:aftertj0}
Let $s \in [t_j^\star, C_j]$ be such that $j$ is inactive at time $s$.
Then $\alpha_{j,s} + \gout_{s,j} - \gin_{s,j} \leq \eta^s \cdot \bL^s_j.$
\end{claim}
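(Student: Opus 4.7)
The plan is to exploit the inactivity of $j$ at time $s$ to eliminate the $\alpha_{j,s}$ term, leaving a purely combinatorial inequality about the $\gamma$ variables. By the definition of $\alpha_{j,s}$, the inactivity of $j$ gives $\alpha_{j,s}=0$, so it suffices to prove $\gout_{s,j} - \gin_{s,j} \leq \eta^s \cdot \bL^s_j$.

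To establish this, I would expand $\gout_{s,j}$ and $\gin_{s,j}$ using their definitions as sums over active edges $e \in A_s$ of $\eta^s \bz^s_e$ weighted by the incidence of $j$ on the path $P_e$, and then regroup the sum by $e$ instead of by the Hasse edges $(j' \to j'')$ inside $P_e$. Since each $P_e$ is a simple path, the contribution of a fixed $e$ to $\gout_{s,j} - \gin_{s,j}$ splits into five cases according to where $j$ sits on $P_e$: (i) $j \notin P_e$ contributes $0$; (ii) $j$ is an internal vertex of $P_e$, so the unique outgoing and unique incoming edges of $j$ on $P_e$ cancel; (iii) $P_e$ is the trivial one-vertex path (i.e.\ $e=(j,j)$), which has no edges at $j$ and contributes $0$; (iv) $P_e$ starts at $j$ but does not end at $j$, contributing $+\eta^s \bz^s_e$; (v) $P_e$ ends at $j$ but does not start at $j$, contributing $-\eta^s \bz^s_e$. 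This is a direct analog of the cancellation argument already used in the proof of Claim~\ref{cl:tj}.

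Finally, since $s \geq t_j^\star$ we have $j \in I_s$, so the edges $e$ for which case (iv) applies are exactly the edges of the form $e=(j, j_r) \in A_s$ with $j_r \neq j$. Discarding the nonpositive case-(v) contributions, and then dropping the restriction from $A_s$ to $E_s$, yields
\[
\gout_{s,j} - \gin_{s,j} \;\leq\; \eta^s \cdot \sum_{e=(j,j_r) \in E_s} \bz^s_e \;=\; \eta^s \cdot \bL^s_j,
\]
which combined with $\alpha_{j,s}=0$ proves the claim.

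The only subtle step is the cancellation at internal vertices, which is really just the telescoping of a path sum across the Hasse edges of $P_e$. Once that is set up correctly, the rest is routine bookkeeping; I do not anticipate a serious obstacle. The main reason the argument goes through so cleanly here (and not, say, for active $j$) is precisely that inactivity kills the $+w_j$ contribution from $\alpha_{j,s}$, so we do not have to invoke Corollary~\ref{cor:sum} to convert $w_j$ into $\eta^s \btR_j$ as was needed in the proof of Claim~\ref{cl:tj}.
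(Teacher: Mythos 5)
Your proof is correct and follows essentially the same path-telescoping argument the paper uses. The only (harmless) difference is that the paper observes the case-(v) contributions vanish outright — since $j$ is inactive, every edge incident to $j$ on the right side of $H_s$ is not in $A_s$ — whereas you simply drop them as nonpositive; both give the same bound, and your version has the small virtue of making explicit that the $\gamma$-inequality $\gout_{s,j} - \gin_{s,j} \leq \eta^s \bL^s_j$ in fact holds regardless of whether $j$ is active, with inactivity used only to kill $\alpha_{j,s}$.
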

\begin{proof}
We know  that $\alpha_{j,s} = 0$.
As in the proof of Claim~\ref{cl:tj}, we only need to worry about those active  edges $e$ in $H_s$ for which $P_e$ either ends at $j$  or begins with $j$. Since any edge incident with $j$ as a vertex on the right side is inactive, we get (let $\Gamma(j)$ denote the edges incident with $j$, where we consider $j$ on the left side)
$$\alpha_{j,s} + \gout_{s,j} - \gin_{s,j} = \eta^s \cdot \sum_{e \in \Gamma(j) \cap A(s)} \bz^s_e \leq \eta^s \cdot \bL^s_j, $$
because $\eta^s \geq 0$ and $\bL^s_j = \sum_{e \in \Gamma(j)} \bz^s_e$.
\end{proof}

\begin{claim}
\label{cl:aftertj1}
Let $s \in [t_j^\star, C_j]$ be such that $j$ is active at time $s$.
Then $\alpha_{j,s} + \gout_{s,j} - \gin_{s,j} \leq \eta^s \cdot \bL^s_j. $
\end{claim}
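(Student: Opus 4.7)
The plan is to mimic the proof of the previous two claims (Claims~\ref{cl:tj} and~\ref{cl:aftertj0}), but now use activeness of $j$ to invoke Corollary~\ref{cor:sum}. Since $s \geq t_j^\star$, the job $j$ appears on \emph{both} sides of $H_s$: on the left as a vertex in $I_s$, and on the right as a vertex in $J_s$. I will therefore carefully tally the contributions of active edges incident to each copy of $j$ in $H_s$.

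First, I would rewrite $\gout_{s,j}-\gin_{s,j}$ by looking at each active edge $e\in A_s$ and its chosen path $P_e$. If $j$ is an internal vertex of $P_e$ the contribution cancels; if $P_e$ is the trivial (self-loop) path at $j$ there is no contribution; so only paths with $j$ as a strict endpoint matter. For edges $e=(j,j')\in E_s$ with $j'\neq j$ (where $j$ is the \emph{left} endpoint), $e$ contributes $+\eta^s\bz^s_e$ to $\gout_{s,j}$ whenever $e\in A_s$. For edges $e=(j',j)\in E_s$ with $j'\neq j$ (where $j$ is the \emph{right} endpoint), these are automatically in $A_s$ because $j$ is active at time $s$, and they contribute $+\eta^s\bz^s_e$ to $\gin_{s,j}$. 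The self-loop weight $\bz^s_{jj}$ appears in both $\bL^s_j$ and $\bR^s_j$ but, as noted, makes no net contribution, so it will cancel.

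Combining these, I would bound
\[
\gout_{s,j}-\gin_{s,j}\;\leq\;\eta^s\!\left[\sum_{e=(j,j'),\,j'\neq j}\!\!\bz^s_e\;-\sum_{e=(j',j),\,j'\neq j}\!\!\bz^s_e\right]
\;=\;\eta^s\bigl[(\bL^s_j-\bz^s_{jj})-(\bR^s_j-\bz^s_{jj})\bigr]\;=\;\eta^s(\bL^s_j-\bR^s_j),
\]
where on the out-side I drop the restriction to $A_s$ (this only increases the sum) and on the in-side I use that all such edges lie in $A_s$.

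Finally, since $j$ is active, $\alpha_{j,s}=w_j$, and Corollary~\ref{cor:sum} gives $w_j=\eta^s\cdot\bR^s_j$. Adding this to the displayed inequality, the $\bR^s_j$ terms telescope and yield $\alpha_{j,s}+\gout_{s,j}-\gin_{s,j}\leq \eta^s\cdot\bL^s_j$, as desired. The main obstacle is the bookkeeping: one must be careful that $j$ appearing on both sides of $H_s$ does not cause a spurious factor-of-two, which is exactly what the self-loop cancellation takes care of; and one must remember the asymmetry that in-edges to the right copy of $j$ are automatically active while out-edges from the left copy need the upper-bound step.
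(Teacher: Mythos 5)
Your proof is correct and takes essentially the same approach as the paper: tally contributions of active edges whose paths have $j$ as a strict endpoint, arrive at $\gout_{s,j} - \gin_{s,j} \leq \eta^s(\bL^s_j - \bR^s_j)$, and finish by Corollary~\ref{cor:sum}. The one cosmetic difference is in how the right-side contribution is handled: the paper leans on the fact that $j\in I_s$ is minimal, so its only right-side edge is the self-loop $(j,j)$ whose trivial path contributes nothing, whereas you observe more generally that \emph{every} edge $(j',j)$ is active (because $j$ is) so $\gin$ is counted in full, and let the self-loop terms cancel; both observations reduce to the same intermediate inequality.
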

\begin{proof}
The argument is very similar to the one in the previous claim. Since $j$ is active,  $\alpha_{j,s} = w_j$. As before we only need to worry about the active edges $e$ for which $P_e$ either ends or begins with $j$.
Any edge which is incident with $j$ on the right side (note that there will only one such edge -- one the one joining $j$ to its copy on the left side of $H_t$) is active.
The following inequality now follows as in the proof of Claim~\ref{cl:aftertj0}:
$$ \alpha_{j,s} + \gout_{s,j} - \gin_{s,j} \leq w_j + \eta^s \cdot \bL^s_j - \eta^s \cdot \bR^s_j. $$
The result now follows from Corollary~\ref{cor:sum}.
\end{proof}

We are now ready to show that~\eqref{eq:dualsuf} holds.  The above two claims show that the LHS of~\eqref{eq:dualsuf} is at most
$\sum_{s = t}^{C_j} \eta^s \cdot \bL^s_j . $ Note that for any such time $s$, the rate assigned to $j$ is $\bL^j_s$, and so, we perform $2 \cdot \bL^j_s$ amount of processing on $j$ during this time slot. It follows that
$ \sum_{s=t}^{C_j}  \bL^s_j \leq p_j/2$.
Now Claim~\ref{cl:upper} shows that $\eta^s \leq w(U_s)/m \leq
w(U_t)/m$, and so we get
$$\sum_{s=t}^{C_j} \eta^s \cdot \bL^s_j \leq \frac{p_j \cdot w(U_t)}{2m} = p_j \cdot \beta_t. $$
This shows that~\eqref{eq:dualsuf} is satisfied. We can now prove our algorithm is constant competitive.

\begin{theorem}
The algorithm $\B$ is 10-competitive.
\end{theorem}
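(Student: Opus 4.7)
The plan is to chain together the pieces developed in this section and then apply weak LP duality, with a final factor of 2 coming from the $\A \to \B$ slowdown.

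First, I would invoke what has already been established: Claims \ref{cl:aftertj0} and \ref{cl:aftertj1}, combined with the telescoping argument and the bound $\eta^s \leq w(U_s)/m \leq w(U_t)/m$ from Claim~\ref{cl:upper}, give \eqref{eq:dualsuf} for every $j$ and every $t \geq t_j^\star$; Claim~\ref{cl:tj} extends this to $t \in [r_j, t_j^\star)$. Hence $(\alpha,\beta,\gamma)$ as defined is a feasible solution to \eqref{eq:dp}. Since \eqref{eq:LP} is a valid relaxation of \OPT\ by Claim~\ref{cl:lprelx}, weak duality yields
\[
\ts \sum_j \alpha_j - m \sum_t \beta_t \;\leq\; \OPT.
\]

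Next, by Claim~\ref{cl:obj} the weighted completion time incurred by the speed-$2$ algorithm $\A$ is at most
\[
2\Big(\ts \sum_j \alpha_j - m\sum_t \beta_t\Big) + \ts \sum_j w_j(\chain_j + 2 r_j) \;\leq\; 2\,\OPT + \ts \sum_j w_j(\chain_j + 2r_j).
\]
For the residual term I will use two elementary lower bounds on \OPT: any feasible schedule must complete job $j$ after $r_j$, so $\sum_j w_j r_j \le \OPT$; and, as noted just before the dual LP in \S\ref{sec:lp}, $\sum_j w_j(r_j + \chain_j) \le 2\,\OPT$ because any schedule must process an entire chain ending at $j$ sequentially starting no earlier than the chain's first release date. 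Combining, $\sum_j w_j(\chain_j + 2r_j) \le 2\,\OPT + \OPT = 3\,\OPT$, so $\A$ pays at most $2\,\OPT + 2\cdot 3\,\OPT = 8\,\OPT$... wait, $2\OPT + 3\OPT = 5\,\OPT$, so $\A$'s cost is at most $5\,\OPT$.

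Finally, $\B$ simulates $\A$ at half the speed, so every completion time of $\B$ is exactly twice that of $\A$, and hence the weighted completion time of $\B$ is at most $2 \cdot 5\,\OPT = 10\,\OPT$, proving the theorem. The only step that requires any care is matching up the constants cleanly; the feasibility verification of the dual was the technical heart, and that work has already been done in Claims~\ref{cl:tj}, \ref{cl:aftertj0}, \ref{cl:aftertj1}, together with the telescoping use of Claim~\ref{cl:upper}. Nothing further beyond bookkeeping is needed.
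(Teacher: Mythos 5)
Your proof is correct and follows essentially the same route as the paper: establish dual feasibility from Claims~\ref{cl:tj}, \ref{cl:aftertj0}, \ref{cl:aftertj1} and~\ref{cl:upper}, invoke weak duality and Claim~\ref{cl:obj} to bound $\A$'s cost by $2\,\OPT + \sum_j w_j(\chain_j + 2r_j)$, use $\OPT \geq \sum_j w_j r_j$ and $\OPT \geq \sum_j w_j \chain_j$ to bound the residual by $3\,\OPT$, and double for the $\A\to\B$ slowdown. The only (cosmetic) deviation is that you package the two elementary lower bounds as $\sum_j w_j(r_j+\chain_j) \leq 2\,\OPT$ plus $\sum_j w_j r_j \leq \OPT$ rather than invoking them separately, and there is a self-corrected arithmetic slip near the end; the final conclusion $\B \leq 10\,\OPT$ is the same.
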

\begin{proof}
We first argue about $\A$. We have shown that the dual variables are feasible to~(\ref{eq:dp}), and so, Claim~\ref{cl:obj} shows that the total completion time of $\A$ is at most $2 \opt + \sum_j w_j (\chain_j  + 2r_j)$, where $\opt$ denotes the optimal off-line objective value. Clearly, $\opt \geq \sum_j w_j \cdot r_j$ and $\opt \geq \sum_j w_j \cdot \chain_j$. This implies that $\A$ is 5-competitive. While going from $\A$ to $\B$ the completion time of each job doubles.
\end{proof}



\newcommand{\Tnc}{T^{\text{nice}}}
\section{Minimizing Weighted Flow Time}
\label{sec:flow}

We now consider the setting of minimizing the total weighted flow time,
again in the non-clairvoyant setting. The setting is almost the same as
in the completion-time case: the major change is that all jobs which
depend on each other (i.e., belong to the same DAG in the ``collection
of DAGs view'' have the same release date). In \S\ref{sec:lower-bounds}
we show that if related jobs can be released over time then no
competitive online algorithms are possible.

As before, let $J_t$ denote the jobs which are \emph{waiting} at time
$t$, i.e., which have been released but not yet finished, and let $G_t$
be the union of all the DAGs induced by the jobs in $J_t$. Again, let
$I_t$ denote the \emph{minimal} set of jobs in $J_t$, i.e., which do not
have a predecessor in $G_t$ and hence can be scheduled.

\begin{theorem}
  \label{thm:flowtime}
  There exists an $O(1/\eps)$-approximation algorithm for
  non-clairvoyant DAG scheduling to minimize the weighted flow time on
  $m$ parallel machines, when there is a speedup of $2+\eps$.
\end{theorem}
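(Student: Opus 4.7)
The proof plan is to use exactly the same convex-programming-based algorithm from Section~\ref{sec:sched-algo}, now with machines running at speed $(2+\eps)$, and to analyze it via dual fitting against a natural LP relaxation for flow time. The LP is obtained from~(\ref{eq:LP}) by replacing the objective with $\sum_{j,t} w_j(t-r_j) x_{j,t}/p_j$ while keeping the constraints~(\ref{eq:1})--(\ref{eq:3}) unchanged; an argument analogous to Claim~\ref{cl:lprelx} shows it is a valid relaxation. The corresponding dual differs from~(\ref{eq:dp}) only in that the $w_j \cdot t$ on the LHS of~(\ref{eq:dual}) is replaced by $w_j \cdot (t-r_j)$.

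My plan for the dual variables is to mirror Section~\ref{sec:dual-completion}: set $\alpha_{j,t} := w_j$ whenever $t \in [r_j, C_j]$ and $j$ is active in $H_t$ (and zero otherwise), $\alpha_j := \sum_t \alpha_{j,t}$, $\beta_t := w(J_t)/((2+\eps)m)$, and $\gamma_{s,j'\to j} := \eta^s \sum_{e \in A_s : (j'\to j)\in P_e} \bz_e^s$ exactly as before. Since $\alpha_{j,s} \in [0, w_j]$ and vanishes for $s < r_j$, we get $\sum_{s<t}\alpha_{j,s} \leq w_j(t-r_j)$, which reduces dual feasibility to showing $\sum_{s \geq t} \alpha_{j,s} + \sum_{s \geq t}(\gout_{s,j} - \gin_{s,j}) \leq p_j \beta_t$. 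Claims~\ref{cl:tj},~\ref{cl:aftertj0}, and~\ref{cl:aftertj1}---all statements about the KKT structure of~(\ref{cp}) at a single time that do not reference the flow-time objective---carry over verbatim, further reducing the task to proving $\sum_{s=t}^{C_j}\eta^s\bL^s_j \leq p_j \beta_t$, using $\sum_s \bL^s_j = p_j/(2+\eps)$ from the $(2+\eps)$-speedup and Claim~\ref{cl:upper}.

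The main obstacle is the final inequality. In the completion-time proof this used the monotonicity of $w(U_s)$ over $s$, but here $w(J_s)$ is \emph{not} monotone: the no-surprises assumption still permits entire new DAGs to arrive during $[t, C_j]$, driving $w(J_s)$ upward. My plan is to split the sum $\sum_{s=t}^{C_j}\eta^s\bL^s_j$ into epochs delimited by DAG arrival times during $[t, C_j]$. Within each epoch $J_s$ only shrinks, so the completion-time argument applies on that epoch. I then charge the contribution of each newly arriving DAG $D$ to $\beta_{s}$ for $s \geq r_D$, absorbing the bookkeeping using the $(2+\eps)$-speedup. This amortized charging across epochs is the delicate step, and it is where I expect the extra $1/\eps$ factor (relative to the completion-time result) to enter, yielding the $O(1/\eps)$ bound claimed.

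Finally, I convert the dual bound into a flow-time bound. Writing $\sum_j w_j (C_j - r_j) = \sum_j \alpha_j + \sum_j w_j \cdot (\text{inactive time of }j)$ splits flow into an active and an inactive part. The active part equals $\sum_j \alpha_j$, which by weak duality and the definition of $\beta_t$ is at most $m\sum_t \beta_t + \OPT_{LP} = \mathrm{flow}/(2+\eps) + \OPT$. For the inactive part, my plan is a critical-path argument: whenever $j \in J_s$ is inactive at time $s$, every minimal ancestor of $j$ runs at rate exactly $1$, so the residual critical path to $j$ in $G_s$ shrinks at rate $(2+\eps)$; hence the total inactive time of $j$ is at most $\chain_j / (2+\eps)$. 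Combining with the standard lower bound $\OPT \geq \sum_j w_j \chain_j$ and rearranging yields the claimed competitive ratio.
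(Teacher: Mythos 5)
You correctly identify the crux: once the objective becomes flow time, $w(J_s)$ is no longer monotone in $s$, so the completion-time argument that bounds $\eta^s \leq w(J_t)/m$ for $s \geq t$ fails. But your proposed repair does not work, and the paper resolves the difficulty in a fundamentally different way.

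The gap is in the amortization step. Dual fitting requires every constraint of (DLP) to hold pointwise: for each fixed pair $(j,t)$, the quantity $\sum_{s\geq t}\alpha_{j,s} + \sum_{s\geq t}(\gout_{s,j}-\gin_{s,j})$ must be at most $\beta_t\, p_j$, with that single $\beta_t$. There is no mechanism in LP duality to ``charge the contribution of a newly arriving DAG $D$ to $\beta_s$ for $s\geq r_D$'' --- those $\beta_s$ appear in \emph{other} constraints (for other $(j',s)$ pairs), not in the one you must verify. If a heavy DAG $D$ with $w(D)\gg w(J_t)$ arrives at time $s'\in(t,C_j)$ while $j$ is still running, then $\eta^{s'}\approx w(J_{s'})/m$ includes $w(D)$, so the summand $\eta^{s'}\btL_{j}$ (at time $s'$) alone can exceed $\beta_t p_j$ by an arbitrary factor; no decomposition of the sum into epochs can repair this, because the budget $\beta_t p_j$ on the right-hand side never changes. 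The speedup $(2+\eps)$ cannot rescue you either: it scales the sum $\sum_s \btL_j^s \leq p_j/(2+\eps)$ but does nothing to cap the per-$s$ multiplier $\eta^s$.

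What the paper actually does is \emph{redefine the duals} rather than re-analyze the same ones. A total order $\prec$ is fixed on jobs that places all jobs of an earlier-released DAG before those of a later DAG, and $\gamma$ is redefined with a \emph{job-dependent} multiplier
$\eta^s_{j'} := \tfrac1m\sum_{j''\in J_s:\, j''\preceq j'} w_{j''}$
in place of $\eta^s$; the $\alpha_{j,t}$ are correspondingly rearranged (while preserving $\sum_{j\in J_t}\alpha_{j,t} = w(\Jact_t)$, so the objective argument still works), and --- unlike in the completion-time case --- the $\gamma$ now sum over \emph{all} edges, not just active ones. The punch line (Claim~\ref{cl:sf1}) is that by the no-surprises assumption, any job $j'$ reached from $j$ along a path in $G_s$ lives in $j$'s DAG, so every $j''\preceq j'$ was released at or before $r_j$; since $j''\in J_s$ means $j''$ is still waiting at $s\geq t$, it also lies in $J_t$. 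Hence $\eta^s_{j'}\leq w(J_t)/m = (1+\eps)\beta_t$ regardless of whatever heavy DAGs arrived in $(t,s]$. That is the monotonicity that survives, and it is obtained by choosing the duals carefully, not by amortization. Your proposal, reusing the completion-time $\alpha$, $\gamma$, and $\eta^s$, cannot reach this point. (Your final paragraph, splitting flow time into active and inactive parts and bounding the inactive part by $\chain_j/(2+\eps)$, is essentially Claim~\ref{cl:objf} of the paper and is fine.)
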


The rest of this section gives the proof of Theorem~\ref{thm:flowtime}.
The algorithm remains unchanged from \S\ref{sec:compleTime} (we do not need
the algorithm~$\B$ now): we write
the convex program~(\ref{cp}) as before, which assign rates $\bL^t_j$ to
each job $j \in I_t$. The analysis again proceeds by writing a linear
programming relaxation, and showing a feasible dual solution. The LP is
almost the same as~(\ref{eq:LP}), just the objective is now (with
changes in red):
\[ \sum_{j,t} w_j \cdot \frac{\redd{(t - r_j)} \cdot x_{j,t}}{p_j}. \] Hence,
the dual is also almost the same as~(\ref{eq:dp}): the new dual
constraint requires that for every job $j$ and time $t \geq r_j$:
\begin{align}
  \label{eq:dualf}
  \alpha_j + \sum_{s \geq t} \left( \gout_{s,j} - \gin_{s,j} \right) \leq \beta_t \cdot p_j + w_j\redd{(t-r_j)}.
\end{align}

\subsection{Defining the Dual Variables}

In order to set the dual variables, define a total order $\prec$ on the
jobs as follows: First arrange the DAGs in order of release dates,
breaking ties arbitrarily. Let this order be $D_1, D_2, \ldots,
D_\ell$. All jobs in $D_i$ appear before those in $D_{i+1}$ in the order
$\prec$. Now for each dag $D_i$, arrange its jobs in the order they
complete processing by our algorithm. Note that this order is consistent
with the partial order given by the DAG. This also ensures that at any
time $t$, the set of waiting jobs in any DAG $D_i$ form a suffix in this
total order (restricted to $D_i$).

For a time $t$ and $j \in J_t$, let $\ind{j \in \Jact_t}$ denote the
indicator variable which is 1 exactly if $j$ is active at time $t$. The dual
variables are defined as follows:
\begin{itemize}
\item For a job $j \in J_t$, we set
  $\alpha_j := \sum_{t = r_j}^{C_j} \alpha_{j,t}$, where the 
  quantity $\alpha_{j,t}$ as defined as:
  $$ \alpha_{j,t} := \frac{1}{m} \Big[ w_j \cdot \ind{j \in \Jact_t} 
  \cdot \Big( \sum_{j' \in J_t: j' \preceq j} \btR_{j'} \Big) 
  + \btR_j \cdot \Big( \sum_{j' \in \Jact_t: j' \prec j} w_{j'} \Big)
  \Big]. $$
  
\item The variable $\beta_t := \frac{w(J_t)}{(1+\eps)m}$. Recall that the
  machines are allowed $2(1+\eps)$-speedup.

\item The definition of the $\gamma$ variables changes as follows. Let
  $(j' \rightarrow j)$ be an edge in the DAG $G_t$. Earlier we had
  considered paths $P_e$ containing $(j' \rightarrow j)$ only for the
  active edges $e$. But now we include {\em all} edges. Moreover, we
  replace the multiplier $\etat$ by $\etat_j$, where
  $ \etat_j := \frac1m \cdot  \Big( \sum_{j' \in J_t: j' \preceq j} w_{j'} \Big). $
  In other words, we define
  $$\gamma_{t, j' \rightarrow j} := \etat_j \cdot \sum_{e: e \in H_t, (j' \rightarrow j) \in P_e} \btz_e.$$
\end{itemize}
In the following sections, we show that these dual settings are enough
to ``pay for'' the flow time of our solution (i.e., have large objective
function value), and also give a feasible lower bound (i.e., are
feasible for the dual linear program). 

\subsection{The Dual Objective Function}
\label{sec:dual-object}

We first show that $\sum_j \alpha_j - m \sum_t \beta_t$ is close to the
total weighted flow-time of the jobs. The quantity $\chain_j$ is defined
as before. Notice that $\chain_j$ is still a lower bound on the
flow-time of job $j$ in the optimal schedule because all jobs of a DAG
are simultaneously released.
The following claim, 
whose result is deferred to the appendix, shows that the dual objective value is 
close to the weighted flow time of the algorithm. 

\begin{claim}
\label{cl:objf}
The total weighted flow-time  is at most
$ \ts \frac{2}{\eps} \Big(\sum_{j} \alpha_j
- m \sum_t \beta_t + \sum_j w_j \cdot \chain_j \Big). $
\end{claim}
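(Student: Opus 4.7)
The plan is to (i)~show that $\sum_j \alpha_j = \sum_t w(\Jact_t)$ in closed form, (ii)~bound the flow-time contribution from the inactive jobs by $\sum_j w_j\chain_j$, and (iii)~combine these with the $-m\sum_t \beta_t$ term using the speedup slack.

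For~(i), I fix a time $t$ and swap the order of summation in $\sum_{j \in J_t} \alpha_{j,t}$. The two halves of $\alpha_{j,t}$ are set up so that their sums are symmetric: after swapping, the first halves contribute $\tfrac{1}{m}\sum_{j \in \Jact_t} w_j \sum_{j' \in J_t,\,j' \preceq j} \btR_{j'}$, and the second halves (after renaming the summation indices) contribute $\tfrac{1}{m}\sum_{j \in \Jact_t} w_j \sum_{j' \in J_t,\,j' \succ j} \btR_{j'}$. Adding collapses the inner sum to $\sum_{j' \in J_t} \btR_{j'}$, giving
\[ \ts \sum_{j \in J_t} \alpha_{j,t} \;=\; \frac{w(\Jact_t)}{m}\sum_{j' \in J_t} \btR_{j'}. \]
By~\eqref{eq:c1} and~\eqref{eq:c2}, $\sum_{j' \in J_t} \btR_{j'} = \sum_e \btz_e = \sum_{j \in I_t} \btL_j$. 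Whenever $\Jact_t \neq \emptyset$ some $\btL_{j'} < 1$, so the contrapositive of Claim~\ref{cl:full} forces $\sum_{j \in I_t} \btL_j = m$ and hence $\sum_{j \in J_t} \alpha_{j,t} = w(\Jact_t)$; the equation holds trivially when $\Jact_t=\emptyset$. Summing over $t$, $\sum_j \alpha_j = \sum_t w(\Jact_t)$.

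For~(ii), the total weighted flow time is $F := \sum_j w_j(C_j - r_j) = \sum_t w(J_t) = \sum_t w(\Jact_t) + \sum_t w(J_t \setminus \Jact_t)$. Letting $T_j^{\text{inact}}$ denote the set of times in $[r_j,C_j]$ at which $j$ is inactive, double-counting gives $\sum_t w(J_t \setminus \Jact_t) = \sum_j w_j |T_j^{\text{inact}}|$. Let $\chain_j(t)$ denote the remaining length of the longest chain ending at $j$ in $G_t$; by the no-surprises assumption $\chain_j(r_j) = \chain_j$, and $\chain_j(C_j)=0$. I claim that $\chain_j(t)$ decreases at rate at least $2+\eps$ whenever $j$ is inactive: the first vertex of any longest chain ending at $j$ in $G_t$ must be minimal in $G_t$ (otherwise the chain extends backwards), hence a neighbor of $j$ in $H_t$, and inactivity of $j$ means every such neighbor has $\btL = 1$ and so is processed at full speed $2+\eps$. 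Integrating gives $(2+\eps)|T_j^{\text{inact}}| \leq \chain_j$, so $\sum_t w(J_t \setminus \Jact_t) \leq \sum_j w_j \chain_j$. Combining with~(i) yields $F \leq \sum_j \alpha_j + \sum_j w_j \chain_j$.

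Finally, since $m\sum_t \beta_t = F/(1+\eps)$, this rearranges to $\sum_j \alpha_j - m\sum_t \beta_t + \sum_j w_j \chain_j \geq F - F/(1+\eps) = \eps F/(1+\eps)$, whence $F \leq \tfrac{1+\eps}{\eps}(\sum_j \alpha_j - m\sum_t \beta_t + \sum_j w_j \chain_j) \leq \tfrac{2}{\eps}(\cdots)$ for $\eps \leq 1$, as desired. The main obstacle is the chain-decrease argument in step~(ii): although the argmax chain can change as jobs complete, the key invariant is that at every instant every longest chain ending at $j$ has its first vertex minimal in $G_t$, and convex-program optimality (via Claim~\ref{cl:full}) forces every such vertex to run at full speed when $j$ is inactive. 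Step~(i) is a mechanical double-count exploiting the deliberate symmetry of the two halves of $\alpha_{j,t}$, and step~(iii) is routine arithmetic.
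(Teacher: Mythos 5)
Your proof is correct and takes essentially the same route as the paper: step~(i) reproduces the paper's double-counting identity $\sum_{j \in J_t} \alpha_{j,t} = w(\Jact_t)$, step~(ii) is the paper's inactive-chain bound, and step~(iii) is the same final rearrangement using $m\sum_t\beta_t = \frac{1}{1+\eps}\sum_t w(J_t)$. As a small remark, your step~(ii)---tracking a time-varying $\chain_j(t)$ and noting that any \emph{longest remaining} chain's first vertex must be minimal in $G_t$---is a touch more careful than the paper's one-line sketch (``we must be running a job in the chain which defines $\chain_j$''), which tacitly assumes the current head of the original defining chain is itself minimal in $G_t$.
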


\subsection{Checking Dual Feasibility}
\label{sec:dual-feas}

Now we need to check the feasibility of the dual
constraint~(\ref{eq:dualf}). In fact, we will show the following weaker
version of that constraint:
\begin{align}
\label{eq:dualf0}
\alpha_j + \redd{2} \sum_{s \geq t} \left( \gout_{s,j} - \gin_{s,j} \right) \leq \beta_t \cdot p_j + \redd{2}w_j(t-r_j).
\end{align}
This suffices to within another factor of $2$: indeed, scaling down the
$\alpha$ and $\beta$ variables by another factor of $2$ then gives dual
feasibility, and loses only another factor of $2$ in the objective function.
We begin by bounding $\alpha_{j,s}$ in two different ways.

\begin{lemma}
  \label{lem:af}
  For any time $s \geq r_j$, we have $\alpha_{j,s} \leq 2 w_j$. 
\end{lemma}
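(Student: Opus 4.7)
The plan is to split $\alpha_{j,s}$ into its two summands and show each is at most $w_j$. Write
\[
\alpha_{j,s} \;=\; \underbrace{\tfrac{1}{m}\,w_j \cdot \ind{j \in \Jact_s}\cdot \!\!\!\sum_{j' \in J_s: j' \preceq j}\!\! \btR_{j'}^{\,s}}_{=:\,T_1} \;+\; \underbrace{\tfrac{1}{m}\,\btR_j^{\,s} \cdot \!\!\!\sum_{j' \in \Jact_s: j' \prec j}\!\! w_{j'}}_{=:\,T_2}.
\]

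For $T_1$, I would use flow conservation in the bipartite graph $H_s$: summing the edge weights $\btz^s_e$ from either side gives
\[
\sum_{j' \in J_s} \btR_{j'}^{\,s} \;=\; \sum_{e \in E_s} \btz^s_e \;=\; \sum_{j' \in I_s} \btL_{j'}^{\,s} \;\leq\; m,
\]
where the last inequality is exactly constraint~\eqref{eq:c4} of the convex program. Since the $T_1$ sum is only over a subset of $J_s$, it is also at most $m$, so $T_1 \leq w_j$.

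For $T_2$, I would invoke the upper half of Claim~\ref{cl:upper}, which gives $\sum_{j' \in \Jact_s} w_{j'} = w(\Jact_s) \leq m\,\eta^s$. Therefore
\[
T_2 \;\leq\; \tfrac{1}{m}\,\btR_j^{\,s} \cdot m\,\eta^s \;=\; \btR_j^{\,s}\cdot \eta^s.
\]
Finally, Claim~\ref{cl:sum0} asserts that every $j \in J_s$ satisfies $w_j \geq \btR_j^{\,s}\cdot \eta^s$, so $T_2 \leq w_j$.

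Adding the two bounds gives $\alpha_{j,s} \leq 2 w_j$, as claimed. There is no real obstacle here: the bound is essentially a bookkeeping exercise that directly uses the normalization \eqref{eq:c4} of the convex program together with the two KKT-derived facts already established in \S\ref{sec:prop-CP}. The only mild subtlety is that we do not need to assume $j$ itself is active for $T_2$, because Claim~\ref{cl:sum0} gives the weak inequality $\btR_j^{\,s}\eta^s \leq w_j$ for every waiting job, not just active ones.
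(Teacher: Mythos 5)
Your proof is correct and follows essentially the same route as the paper's: both bound the two summands of $\alpha_{j,s}$ using the normalization $\sum_{j'\in J_s}\btR_{j'}^s = \sum_{j'\in I_s}\btL_{j'}^s \le m$ together with Claim~\ref{cl:sum0} and the $\Jact_s$-weight bound (which you invoke via Claim~\ref{cl:upper}, while the paper inlines the underlying Corollary~\ref{cor:sum}). The only cosmetic point is that $w(\Jact_s)\le m\eta^s$ is the \emph{lower} bound on $\eta^s$ in Claim~\ref{cl:upper}, not the ``upper half,'' but the inequality you use is the right one.
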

\begin{proof}
  Consider the second term in the definition of $\alpha_{j,s}$. This
  term contains $\sum_{j' \in \Jact_s: j' \prec j} w_{j'}$. By
  Corollary~\ref{cor:sum},  for any $j' \in \Jact_s$ we have 
  $w_{j'} = \bR^s_{j'} \cdot \eta^s$. Therefore,
  $$ \sum_{j' \in \Jact_s: j' \prec j} w_{j'} \quad \leq \quad \eta^s \cdot \sum_{j'
    \in \Jact_s: j' \prec j} \bR^s_{j'}\quad \leq \quad \eta^s \cdot \sum_{j' \in
    J_s} \bR^s_{j'}. $$ 
  Now we can bound  $\alpha_{j,s}$ by dropping the indicator on the
  first term to get
  \begin{align*}
    \frac{1}{m} \cdot \Big[ \Big(w_j  \cdot \sum_{j' \in J_s: j' \preceq j} \bR^s_{j'} \Big) 
    + \bR^s_j \cdot \Big( \eta^s \cdot \sum_{j' \in \Jact_s: j' \prec j} \bR^s_{j'}  \Big) \Big] 
    &~ \leq~  \frac{1}{m}  w_j \Big[ \sum_{j' \in J_s} \bR^s_{j'} + \sum_{j' \in J_s} \bR^s_{j'} \Big] ,
  \end{align*}
  the last inequality using Claim~\ref{cl:sum0}. Simplifying,
  $\alpha_{j,s} \leq  \frac{2}{m} \cdot w_j \cdot \sum_{j'' \in I_s} \bL^s_{j''} =2 w_j$.
\end{proof}

Here is a slightly different upper bound on $\alpha_{j,s}$.
\begin{lemma}
  \label{lem:afnew}
  For any time $s \geq r_j$, we have
  $\alpha_{j,s} \leq 2 \eta^s_j \cdot \bR^s_j $.
\end{lemma}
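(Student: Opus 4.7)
The plan is to bound the two summands in the definition of $\alpha_{j,s}$ separately, and show that each is at most $\eta^s_j \cdot \bR^s_j$.

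First, I would handle the second summand, namely $\frac{1}{m} \bR^s_j \cdot \sum_{j' \in \Jact_s : j' \prec j} w_{j'}$. Since $\Jact_s \se J_s$ and the condition $j' \prec j$ is strictly stronger than $j' \preceq j$, we can relax the sum:
\[
\ts \frac{1}{m} \bR^s_j \cdot \sum_{j' \in \Jact_s : j' \prec j} w_{j'} \;\leq\; \frac{1}{m} \bR^s_j \cdot \sum_{j' \in J_s : j' \preceq j} w_{j'} \;=\; \eta^s_j \cdot \bR^s_j,
\]
using the definition of $\eta^s_j$.

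For the first summand, $\frac{1}{m} w_j \cdot \ind{j \in \Jact_s} \cdot \sum_{j' \in J_s : j' \preceq j} \btR_{j'}$, there are two cases. If $j \notin \Jact_s$ the indicator vanishes and this term is $0$, which is trivially bounded by $\eta^s_j \bR^s_j$. Otherwise $j$ is active at time $s$, so by Corollary~\ref{cor:sum} we have $w_j = \bR^s_j \cdot \eta^s$, and the summand becomes
\[
\ts \frac{1}{m} \bR^s_j \cdot \eta^s \cdot \sum_{j' \in J_s : j' \preceq j} \bR^s_{j'}.
\]
The main step is to move the factor of $\eta^s$ inside the sum using Claim~\ref{cl:sum0}, which gives $\eta^s \cdot \bR^s_{j'} \leq w_{j'}$ for every $j' \in J_s$. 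Summing this inequality over all $j' \in J_s$ with $j' \preceq j$ yields
\[
\ts \eta^s \cdot \sum_{j' \in J_s : j' \preceq j} \bR^s_{j'} \;\leq\; \sum_{j' \in J_s : j' \preceq j} w_{j'},
\]
so the first summand is also at most $\frac{1}{m} \bR^s_j \sum_{j' \in J_s : j' \preceq j} w_{j'} = \eta^s_j \cdot \bR^s_j$. Adding the two bounds gives $\alpha_{j,s} \leq 2\eta^s_j \cdot \bR^s_j$.

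I do not anticipate a major obstacle here; the lemma is essentially a rearrangement that matches the definition of $\eta^s_j$ against Claim~\ref{cl:sum0} and Corollary~\ref{cor:sum}. The only subtle point is keeping track of where $j' \prec j$ can be relaxed to $j' \preceq j$ without changing the estimate, and using the right KKT-derived inequality ($w_{j'} \geq \bR^s_{j'} \eta^s$ on all of $J_s$, not just $\Jact_s$) for the first summand.
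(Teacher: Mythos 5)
Your proof is correct and is essentially identical to the paper's: both bound the second summand directly via the definition of $\eta^s_j$, and both handle the first summand by invoking Corollary~\ref{cor:sum} (when $j$ is active) to write $w_j = \eta^s \bR^s_j$ and then apply Claim~\ref{cl:sum0} pointwise to the terms in the inner sum. The only difference is that you spell out the relaxation from $\Jact_s, \prec$ to $J_s, \preceq$ a bit more explicitly, which the paper leaves implicit.
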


\begin{proof}
  The second term in the definition of $\alpha_{j,s}$ is at most
  $\eta^s_j \cdot \bR^s_j$, directly using the definition of
  $\eta^s_j$. For the first term, assume $j$ is active at time $s$,
  otherwise this term is 0. Now Corollary~\ref{cor:sum} shows that
  $w_j = \eta^s \cdot \bR^s_j$, so the first term can be bounded as follows:
  \[
    \frac{w_j}{m} \cdot \sum_{j' \in J_s: j' \preceq j} \bR^s_{j'} \quad = \quad
    \frac{\bR^s_j \cdot \eta^s}{m} \cdot \sum_{j' \in J_s: j' \preceq j}
    \bR^s_{j'} \quad \stackrel{\small \text{(Claim~\ref{cl:sum0})}}{\leq} \quad \frac{\bR^s_j}{m} \cdot \sum_{j' \in J_s: j' \preceq
      j} w_{j'} \quad = \quad \bR^s_j \cdot \eta^s_j, \] which completes the proof.
\end{proof}

To prove~\eqref{eq:dualf0}, we write
$\alpha_j = \sum_{s = r_j}^{t-1} \alpha_{j,s} + \sum_{s \geq t}
\alpha_{j,s}$, and use Lemma~\ref{lem:af} to cancel the first summation
with the term $2w_j(t-r_j)$. Hence, it remains to prove
\begin{align}
\label{eq:dualf1}
\sum_{s \geq t} \alpha_{j,s} + 2 \sum_{s \geq t} \left( \gout_{s,j} - \gin_{s,j} \right) \leq \beta_t \cdot p_j. 
\end{align}

Let $t_j^\star$ be the time at which the algorithm starts processing $j$. We first argue why we can ignore times $s< t_j^\star$ on the LHS of \eqref{eq:dualf1}.
\begin{claim}
  \label{cl:sf}
  Let $s$ be a time satisfying $r_j \leq s < t_j^\star$. Then $\alpha_{j,s} + 2(\gout_{s,j} - \gin_{s,j}) \leq 0. $
\end{claim}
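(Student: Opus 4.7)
The plan is to follow the strategy of Claim~\ref{cl:tj} from the completion-time analysis, adapted for the new flow-time dual variables (where the multiplier in $\gamma$ is the destination-indexed $\eta^s_j$ rather than a uniform $\eta^s$, and all bipartite edges are summed over, not only the active ones). Since $s < t_j^\star$, the job $j$ lies in $J_s \setminus I_s$, so $j$ appears only on the right side of the bipartite graph $H_s$ and no bipartite path $P_e$ originates at $j$.

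The first step is to compute $\gout_{s,j} - \gin_{s,j}$ by classifying each bipartite edge $e \in H_s$ according to how its chosen path $P_e$ interacts with $j$: either $P_e$ ends at $j$ (and so contributes to $\gin$ only, with multiplier $\eta^s_j$), or $P_e$ passes through $j$ as an internal vertex via some DAG edges $(u \to j)$ and $(j \to v)$ (contributing $(\eta^s_v - \eta^s_j)\,\bz^s_e$ to $\gout - \gin$ because of the destination-indexed multipliers). Summing the paths ending at $j$ gives a net contribution $-\eta^s_j \bR^s_j$ to $\gout_{s,j} - \gin_{s,j}$, since such paths collectively carry flow equal to $\bR^s_j$ by the definition of $\bR^s_j$.

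Next I will invoke Lemma~\ref{lem:afnew} to bound $\alpha_{j,s} \leq 2\,\eta^s_j\,\bR^s_j$. Doubled, the end-contribution from the previous step yields $-2\eta^s_j\bR^s_j$ in $2(\gout_{s,j}-\gin_{s,j})$, which exactly cancels Lemma~\ref{lem:afnew}'s bound on $\alpha_{j,s}$. What remains to control is the internal-path contribution $2\sum_{v:\,j\to v}(\eta^s_v-\eta^s_j)\,f_{j\to v}$. The main obstacle I anticipate is precisely this term: because the total order $\prec$ extends the DAG precedence, we have $\eta^s_v \geq \eta^s_j$ for every successor $v$ of $j$, so the individual summands are a priori non-negative. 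I plan to overcome this by a case split on whether $j \in \Jact_s$. When $j$ is inactive, the first (active-indicator) term of $\alpha_{j,s}$ vanishes, so one gets a strictly tighter bound $\alpha_{j,s} \leq \eta^s_j \bR^s_j$ that leaves room to absorb the internal contribution. When $j$ is active, the identity $w_j = \eta^s \bR^s_j$ from Corollary~\ref{cor:sum}, combined with the KKT conditions on the $\bz^s_e$ variables and the fact that $j$ has at least one unfinished DAG-predecessor in $J_s$ (since $s < t_j^\star$), should allow a telescoping argument over the active jobs preceding $j$ in the total order that matches the $(\eta^s_v-\eta^s_j)$ increments to slack in the bound on $\alpha_{j,s}$, yielding the required non-positive sum.
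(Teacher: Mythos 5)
Your computation of $\gout_{s,j}-\gin_{s,j}$ is internally consistent with the \emph{literal} reading of the displayed formula for $\gamma$, under which the multiplier for the DAG edge $(j'\to j)$ is $\eta^s_j$, indexed by the head of that DAG edge. Under that reading you are correct that a bipartite edge $e$ whose chosen path $P_e$ passes through $j$ internally via $(u\to j)$ and $(j\to v)$ leaves a residual $(\eta^s_v-\eta^s_j)\,\bz^s_e\ge 0$, and your flow-conservation bookkeeping ($\sum_u f_{u\to j}=\sum_v f_{j\to v}+\bR^s_j$) is right. But the paper's proof does \emph{not} take this route: it simply asserts that paths through $j$ as an internal vertex contribute nothing, so that $\gout_{s,j}-\gin_{s,j}=-\eta^s_j\bR^s_j$, and then invokes Lemma~\ref{lem:afnew}. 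That assertion only holds if every DAG edge of a path $P_e$ carries the \emph{same} multiplier $\eta^s_{j_r(e)}$ (the right endpoint of the bipartite edge $e$), i.e., the intended definition is $\gamma_{s,j'\to j}=\sum_{e:\,(j'\to j)\in P_e}\eta^s_{j_r(e)}\bz^s_e$, not $\eta^s_j\cdot\sum_e\bz^s_e$. You can verify this from the paper's own proof of Claim~\ref{cl:sf1}, where the out-contribution from a path on edge $e=(j,j')$ is written as $\eta^s_{j'}\bz^s_e$ with $j'$ the bipartite destination of $e$, not the head of the first DAG edge leaving $j$ — which is the destination-of-$e$ convention, not the head-of-DAG-edge convention.

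The concrete problem with your plan is that, once you adopt the head-of-DAG-edge reading, there is no slack left to absorb the nonnegative residual. Lemma~\ref{lem:afnew} gives $\alpha_{j,s}\le 2\eta^s_j\bR^s_j$, which exactly cancels $-2\eta^s_j\bR^s_j$, and you would additionally need $2\sum_v(\eta^s_v-\eta^s_j)f_{j\to v}\le 0$. Each summand is strictly positive whenever $f_{j\to v}>0$ (because $j\prec v$ forces $\eta^s_v>\eta^s_j$), so the sum cannot be nonpositive unless no flow passes through $j$ at all — which is false in general for $s<t_j^\star$. Your proposed case split does not help: in the inactive case the improved bound $\alpha_{j,s}\le\eta^s_j\bR^s_j$ leaves slack $\eta^s_j\bR^s_j$, but $\sum_v(\eta^s_v-\eta^s_j)f_{j\to v}$ is \emph{not} bounded by $\tfrac12\eta^s_j\bR^s_j$; indeed $\sum_v f_{j\to v}$ (the flow passing through $j$) is unrelated to $\bR^s_j$ (the flow terminating at $j$) and can be much larger. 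In the active case Lemma~\ref{lem:afnew} is tight, so there is no slack at all. The fix is not a different argument; it is to use the multiplier $\eta^s_{j_r(e)}$ uniformly along $P_e$, after which internal vertices contribute exactly zero, only the terminating paths survive, and the claim follows in one line from Lemma~\ref{lem:afnew}.
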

\begin{proof}
  While computing $\gout_{s,j} - \gin_{s,j}$, we only need to consider
  paths $P_e$ for edges $e$ in $H_s$ which have $j$ as end-point. Since
  $j$ does not appear on the left side of $H_s$, this quantity is equal
  to $-\eta^s_j \cdot \bR^s_j$. The result now follows from
  Lemma~\ref{lem:afnew}.
\end{proof}

So using Claim~\ref{cl:sf} in~(\ref{eq:dualf1}), it suffices to show
\begin{align}
\label{eq:dualf2}
  \sum_{s \geq \max\{t,t_j^\star\}} \alpha_{j,s} + 2 \sum_{s \geq
  \max\{t,t_j^\star\} } \left( \gout_{s,j} - \gin_{s,j} \right) \leq
  \beta_t \cdot p_j.  
\end{align}
Note that we still have $\beta_t$ on the right hand side, even though
the summation on the left is over times $s \geq \max\{t,t_j^\star\}$.
The proof of the following claim is deferred to appendix.

\begin{claim}
  \label{cl:sf1}
  Let $s$ be a time satisfying $s \geq \max\{t,t_j^\star\}$. Then
  $\alpha_{j,s} + 2(\gout_{s,j} - \gin_{s,j}) \leq 2(1+\eps) \beta_t
  \cdot \bL^s_j. $
\end{claim}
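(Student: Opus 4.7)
The plan is to exploit a structural consequence of $j \in I_s$, which follows from the assumption $s \geq t_j^\star$: since $j$ is minimal in $J_s$, it is a source in $G_s$, and this severely restricts how $j$ can appear in the edges and chosen paths of $H_s$. (For $s \geq C_j$ the inequality is trivial since both sides are $0$, so assume $s < C_j$, and thus $j \in I_s$.)

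First, I will argue that $\gin_{s,j} = 0$ and that the only edge of $H_s$ incident to $j$ on the right side is the self-loop $(j,j)$. Any edge $(j_l, j_r) \in H_s$ with $j_r = j$ requires a directed path in $G_s$ from $j_l \in I_s$ to $j$, so $j_l \prec j$; minimality of $j$ then forces $j_l = j$. The same reasoning shows that on any chosen path $P_e = (v_0, \ldots, v_k)$ the vertex $j$ can appear only at the initial position $v_0$, since otherwise some $v_0 \in I_s$ with $v_0 \neq j$ would be a predecessor of $j$ in $J_s$. In particular, $G_s$ has no incoming edge at $j$, giving $\gin_{s,j} = 0$ and $\bR^s_j = \bz^s_{(j,j)}$. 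Consequently, $\gout_{s,j}$ is supported only on edges $e = (j, j_r) \in H_s$ with $j_r \neq j$, each contributing $\eta^s_{v_1(e)}\, \bz^s_e$, where $v_1(e)$ is the successor of $j$ on $P_e$.

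Second, I will establish the uniform upper bound $\eta^s_v \leq (1+\eps)\beta_t$ for every $v$ in $j$'s DAG. The set $\{j'' \in J_s : j'' \preceq v\}$ consists of jobs either in $j$'s DAG or in a DAG released strictly earlier (by the release-date-then-completion construction of $\prec$); both classes were released by time $r_j \leq t$, so any such $j''$ still waiting at time $s \geq t$ also lies in $J_t$. This gives $m\,\eta^s_v \leq w(J_t) = (1+\eps)\, m\, \beta_t$. Since each $v_1(e)$ lies in $j$'s DAG (paths in $G_s$ stay within a connected component),
\[
  \gout_{s,j} \;\leq\; (1+\eps)\beta_t \sum_{\substack{(j,j_r) \in H_s \\ j_r \neq j}} \bz^s_e \;=\; (1+\eps)\beta_t\,(\bL^s_j - \bR^s_j).
\]

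Finally, combining $\gin_{s,j} = 0$, the above bound on $\gout_{s,j}$, and Lemma~\ref{lem:afnew}'s bound $\alpha_{j,s} \leq 2\eta^s_j\, \bR^s_j \leq 2(1+\eps)\beta_t\, \bR^s_j$, I conclude
\[
  \alpha_{j,s} + 2(\gout_{s,j} - \gin_{s,j}) \;\leq\; 2(1+\eps)\beta_t\, \bR^s_j + 2(1+\eps)\beta_t\,(\bL^s_j - \bR^s_j) \;=\; 2(1+\eps)\beta_t\, \bL^s_j,
\]
which is the desired inequality. The main obstacle is the first structural step: once minimality of $j$ precludes $j$ from being an interior vertex on any $P_e$, the otherwise troublesome contributions involving unequal multipliers $\eta^s_{v_{i+1}}$ and $\eta^s_j$ from internal cancellations never appear, and the clean telescoping $\bR^s_j + (\bL^s_j - \bR^s_j) = \bL^s_j$ yields the bound.
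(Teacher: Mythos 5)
Your proof is correct, and it follows essentially the same route as the paper's: observe that once $j \in I_s$ (which holds because $s \geq t_j^\star$ and the interesting range is $s < C_j$), minimality of $j$ in $G_s$ forces $j$ to appear on any path $P_e$ only as the initial vertex, so the messy interior terms never arise; then bound every $\eta^s_v$ along the relevant paths by $w(J_t)/m = (1+\eps)\beta_t$ using the no-surprises assumption; finally trade off $\alpha_{j,s}$ against a $\bR^s_j$-proportional term via Lemma~\ref{lem:afnew} and sum up to $\bL^s_j$.

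Two small remarks, both favorable to your writeup rather than gaps. First, the paper phrases the $\bR^s_j$-proportional piece as a ``contribution from paths $P_e$ for which $j$ lies on the right side'' equal to $-2\eta^s_j\bR^s_j$; when $j \in I_s$ that set of paths is degenerate (only the self-loop $(j,j)$), so the paper's displayed equality is really an inequality obtained by adding and subtracting the self-loop term $2\eta^s_j\bz^s_{(j,j)} = 2\eta^s_j\bR^s_j$. You make this explicit by instead proving $\gin_{s,j}=0$ and $\bR^s_j = \bz^s_{(j,j)}$ directly and then bounding $\alpha_{j,s}$ and $\gout_{s,j}$ separately; the telescoping $\bR^s_j + (\bL^s_j - \bR^s_j)$ is the same bookkeeping. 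Second, you keep the multiplier as $\eta^s_{v_1(e)}$ (the literal reading of the $\gamma$ definition), where the paper writes $\eta^s_{j'}$ with $j'$ the right endpoint of $e$; since both vertices lie in $j$'s component and both satisfy $\eta^s_\cdot \leq (1+\eps)\beta_t$, the distinction is immaterial for the bound, but your version is the more careful one.
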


Hence, the left-hand side of~\eqref{eq:dualf2} is at most
$2(1+\eps) \beta_t \cdot \sum_{s \geq  \max\{t,t_j^\star\} } \bL^s_j$. However,
since job $j$ is assigned a rate of $\bL^s_j$ and the machines run at
speed $2(1+\eps)$, we get that this expression is at most
$p_j \cdot \beta_t$, which is the right-hand side
of~(\ref{eq:dualf2}). This proves the feasibility of the dual
constraint~(\ref{eq:dualf0}).


\begin{proof}[Proof of Theorem~\ref{thm:flowtime}]
  In the preceding \S\ref{sec:dual-feas} we proved that the
  variables $\alpha_j/2$, $\beta_t/2$ and $\gamma_{t,j'\to j}$ satisfy
  the dual constraint for the flow-time relaxation.
  Since $\sum_j (\alpha_j/2) - m \sum_t (\beta_t/2)$ is a feasible dual,
  it gives a lower bound on the cost of the optimal solution. Moreover,
  $\sum_j w_j \cdot \chain_j$ is another lower bound on the cost of the
  optimal schedule. Now using the bound on the weighted flow-time of our
  schedule given by Claim~\ref{cl:objf}, this shows that we have an
  $O(1/\eps)$-approximation with $2(1+\eps)$-speedup.
\end{proof}

In \S\ref{sec:improve-flowtime} we show how to use a slightly different scheduling
policy that prioritizes the last arriving jobs to reduce the speedup to
$(1+\eps)$. 




\section{An $O(1/\eps^2)$-competitive Algorithm with $(1+\eps)$-speed}
\label{sec:improve-flowtime}

Theorem~\ref{thm:flowtime} requires $(2+\eps)$ speedup. In this section, we improve the speed scaling requirement to $(1+\eps)$. We prove the following:

\begin{theorem} \label{thm:flowtimenew}
There exists an $O(1/\eps^2)$-approximation algorithm for non-clairvoyant DAG scheduling to minimize weighted flow time on parallel machines when there is a speedup of $1+\eps$.
\end{theorem}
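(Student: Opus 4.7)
The plan is to mimic the standard LAPS reduction from $(2+\eps)$- to $(1+\eps)$-speedup, but applied to the \emph{virtual} rates coming from the convex program rather than to the minimal jobs directly, and then re-derive the dual-fitting analysis of~\S\ref{sec:dual-object}--\S\ref{sec:dual-feas} with this restricted fairness objective. Concretely, I would define a recency-based total order $\preceq$ on waiting jobs (DAGs in decreasing order of release date, jobs inside a DAG ordered by the algorithm's completion times so that waiting jobs in each DAG still form a suffix), and for each time $t$ let $\widehat{J}_t \sse J_t$ be the smallest suffix of $\preceq$ with total weight at least $\eps \cdot w(J_t)$ --- the LAPS window. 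I would then modify the convex program~(\ref{cp}): keep the edge set $E_t$ but restrict the right-hand side of $H_t$ to $\widehat{J}_t$, replace the objective by $\sum_{j \in \widehat{J}_t} w_j \ln R_j^t$, and keep constraints~(\ref{eq:c3}) and~(\ref{eq:c4}) unchanged. Solving this gives the actual rates $\bL^t_j$ for $j \in I_t$; minimal jobs lacking descendants in $\widehat{J}_t$ simply receive rate $0$ at time $t$ (and will be served later, once the jobs ahead of them in $\preceq$ have finished). The combinatorial ``water-filling'' solver of~\S\ref{sec:convsolv} adapts directly.

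Next I would redefine the dual variables analogously to \S\ref{sec:flow}, but zeroed out off the window: $\alpha_{j,t}$ has the same two-term form as before with all summations restricted to $j' \in \widehat{J}_t$, scaled up by $1/\eps$ to match the new smaller denominator; similarly $\beta_t := w(\widehat{J}_t)/((1+\eps)m)$ (still $\Theta(w(J_t)/m)$ up to the $\eps$-scaling), and the $\gamma$-variables use a modified multiplier $\eta^t_j$ that only sums weights within $\widehat{J}_t$. The two main verifications are: (i) the analogue of Claim~\ref{cl:objf}, showing the dual objective plus $\sum_j w_j\chain_j$ dominates an $\Omega(\eps^2)$-fraction of the algorithm's weighted flow-time; and (ii) feasibility of~\eqref{eq:dualf0}. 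For (i), a standard LAPS-style charging argument shows that at any time $t$ the active contribution inside $\widehat{J}_t$ is at least $\eps \cdot w(J_t)$, so the two extra factors of $\eps^{-1}$ (one from the LAPS restriction, one from the rescaled $\alpha$'s) combine to the $O(1/\eps^2)$ competitive ratio.

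For (ii) I would re-run the proofs of Lemmas~\ref{lem:af},~\ref{lem:afnew} and Claims~\ref{cl:sf},~\ref{cl:sf1}, this time invoking the KKT conditions for the \emph{restricted} convex program, which continue to give $w_j = \bR^s_j \cdot \eta^s$ for $j \in \Jact_s \cap \widehat{J}_s$ but say nothing outside the window. The main obstacle will be the analogue of Claim~\ref{cl:sf1}: as $s$ varies, a fixed job $j$ may drift in and out of $\widehat{J}_s$, and when $j \notin \widehat{J}_s$ we have $\alpha_{j,s}=0$ yet $j$ can still contribute through $\gout_{s,j}-\gin_{s,j}$ via paths $P_e$ that terminate at $j$. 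The key lemma to prove is that, when summed over $s \ge \max\{t, t_j^\star\}$, the window-boundary terms telescope to at most $2(1+\eps)\beta_t \cdot \bL^s_j$ per time slot, after which the $(1+\eps)$-speedup bound $\sum_s \bL^s_j \le p_j/(1+\eps)$ closes out the right-hand side $\beta_t \cdot p_j$. Handling the ``crossover'' moment at which $j$ first enters $\widehat{J}_s$ cleanly --- where $\eta^s_j$ jumps and the KKT identity becomes available --- is where the second factor of $1/\eps$ enters, paralleling but strictly generalizing the black-box LAPS reduction of~\cite{ImKM18}, which as noted in the introduction does not directly apply in the presence of precedences.
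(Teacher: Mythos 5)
Your plan diverges from the paper's proof in a way that introduces a genuine gap, and it is worth being precise about where.

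The paper does \emph{not} use a hard LAPS window. It keeps the bipartite graph $H_t$ and the constraint set of~\eqref{cp} exactly as before, over all of $J_t$, and only replaces the Nash-welfare weights in the objective by the \emph{soft} LAPS-like weights
$\wh_{j,t} = \bigl(w(J_{\leq j,t})^k - w(J_{<j,t})^k\bigr)/w(J_t)^k$ with $k=1/\eps$. These sum to $1$ over $J_t$, are strictly positive for every waiting job, and are heavily biased toward jobs late in the order $\prec$, so every minimal job still gets a positive rate. Crucially, the paper then leaves $\beta_t = w(J_t)/((1+\eps)m)$, $\eta^s_j = \frac1m\sum_{j'\preceq j, j'\in J_s} w_{j'}$, and the $\gamma$-variables \emph{completely unchanged} from \S\ref{sec:flow}; only the $\alpha$'s change (via the new rates $\bR^s_j$ and the ``nice time'' indicator). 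Because of this, the proof of the analogue of Claim~\ref{cl:sf1} is literally unchanged: the key inequality $\eta^s_{j'} \leq w(J_t)/m = (1+\eps)\beta_t$, which uses only the no-surprises assumption on release dates, continues to hold verbatim.

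Your proposal replaces both $\beta_t$ and the $\gamma$-multipliers by their window-restricted versions, and this is exactly where the argument breaks. In the analogue of Claim~\ref{cl:sf1} you would need, for every $s \geq t$ and every $j'$ reachable from $j$, a bound of the form $\frac1m\sum_{j''\in \widehat{J}_s : j''\preceq j'} w_{j''} \leq (1+\eps)\beta_t = w(\widehat{J}_t)/m$. The set $\{j''\in \widehat{J}_s : j''\preceq j'\}$ is indeed a subset of $J_t$ by the release-date argument, but it is \emph{not} a subset of $\widehat{J}_t$: a job that is crowded out of the window at time $t$ can re-enter the window at a later time $s>t$ once the newer jobs ahead of it finish, and the quantities $w(\widehat{J}_s)$ for $s\geq t$ are not monotone in $s$ nor otherwise controlled by $w(\widehat{J}_t)$. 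So the per-slot bound $\alpha_{j,s} + 2(\gout_{s,j}-\gin_{s,j}) \leq 2(1+\eps)\beta_t \bL^s_j$ fails, and the ``telescoping'' you invoke to rescue the sum over $s$ is stated without a mechanism — I do not see one, and the non-monotonicity of the window weights is precisely the obstruction. This discontinuity is what the paper's smooth power-weighting $\wh_{j,t}$ is designed to avoid: the KKT identity $\wh_{j,s} = \bR^s_j\eta^s$ for active jobs, together with the convexity bound in Fact~\ref{fact:convexity}, delivers Lemmas~\ref{lem:aflaps} and~\ref{lem:afnewlaps} without ever needing to compare windows across time. (A secondary issue with your plan is that minimal jobs with no descendants in $\widehat{J}_t$ receive rate $0$, which changes the algorithm's schedule and opens a starvation concern that the paper's strictly-positive $\wh$ weights rule out by construction.)

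So the high-level idea — ``do LAPS on the virtual rates'' — matches the paper, but the hard-window implementation you propose does not yield a valid dual-feasibility argument, and the step you flag as ``the key lemma to prove'' is not a detail but the crux; the paper's solution is to sidestep it entirely via the smooth $\wh$ reweighting while leaving $\beta$ and $\gamma$ untouched.
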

For ease of exposition, we assume a $(1+3\eps)$-speedup in the proof of Theorem~\ref{thm:flowtimenew}.
\subsection{The Algorithm}
The algorithm remains unchanged -- we shall assign rates $\btL_j$ to each job $j \in I_t$. These rates are derived by a suitable convex program. This convex program is again same as~\eqref{cp}, except that the objective function now changes to 
$$\sum_{j \in J_t} \redd{\wh_{j,t}} \ln \btR_j, $$
where we replace the weight $w_j$ of job $j$ by a new time dependent quantity $\wh_{j,t}$  defined as follows.  

\begin{definition}[Weight $\wh_{j,t} $]
Consider a time $t$, and let $J_{<j,t}$ denote the set of jobs in $J_t$ which appear before $j$ in the ordering $\prec$. Define $J_{\leq j, t}$ similarly (it includes $j$ as well). Let $k$ denote $1/\eps$. We define $$ \wh_{j,t} := \frac{w(J_{\leq j, t})^k - w(J_{< j,t})^k}{w(J_t)^k}. $$
\end{definition}

It is easy to check that $\sum_{j \in J_t} \wh_{j,t} = 1$. Moreover, since $f(x) = x^k$ is a convex function, we have the following easy fact.
\begin{fact} \label{fact:convexity}
We have 
\[ k w_j \cdot \frac{ w(J_{< j,t})^{k-1}}{w(J_t)^k} \quad \leq \quad 
\wh_{j,t}  \quad \leq \quad k w_j \cdot \frac{ w(J_{\leq j,t})^{k-1}}{w(J_t)^k} .
\]
\end{fact}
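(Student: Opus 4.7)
The plan is to view the fact as a statement purely about the convex function $f(x) = x^k$ with $k = 1/\eps \geq 1$ (we may certainly assume $\eps \leq 1$), applied to two specific values determined by the definition of $\wh_{j,t}$. Set $a := w(J_{<j,t})$ and $b := w(J_{\leq j,t})$. The only structural observation I need is that $J_{\leq j,t} = J_{<j,t} \cup \{j\}$ and these sets are disjoint, so $b - a = w_j$. With this notation, the definition of $\wh_{j,t}$ reads
\[
  \wh_{j,t} = \frac{f(b) - f(a)}{w(J_t)^k},
\]
and after clearing the common denominator $w(J_t)^k$, the claim becomes
\[
  k a^{k-1} (b - a) \;\leq\; f(b) - f(a) \;\leq\; k b^{k-1} (b - a).
\]

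First I would invoke the mean value theorem on the continuously differentiable function $f(x) = x^k$ on the interval $[a,b]$: there exists $c \in [a,b]$ such that $f(b) - f(a) = f'(c)(b - a) = k c^{k-1}(b - a)$. Since $k \geq 1$ the function $x \mapsto x^{k-1}$ is nondecreasing on $\mathbb{R}_{\geq 0}$, so $a^{k-1} \leq c^{k-1} \leq b^{k-1}$, and multiplying by the nonnegative quantity $k(b-a) = k w_j$ yields both halves of the sandwich simultaneously. Dividing through by $w(J_t)^k$ recovers the fact as stated. (Equivalently, one could argue by convexity directly: the tangent inequalities $f(b) \geq f(a) + f'(a)(b-a)$ and $f(a) \geq f(b) + f'(b)(a-b)$ for convex $f$ immediately give the lower and upper bounds respectively.)

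There is essentially no obstacle here; the statement is a one-line consequence of the mean value theorem once one unpacks the telescoping structure of $\wh_{j,t}$. The only tiny thing to be careful about is to confirm $k \geq 1$ so that $x^{k-1}$ is monotone (otherwise the direction of the inequalities would reverse), which is harmless given the role of $k = 1/\eps$ in the algorithm. This fact will later be used to relate the new time-dependent weights $\wh_{j,t}$ used in the modified convex program to the original weights $w_j$, so the two-sided estimate in terms of $w(J_{<j,t})^{k-1}$ and $w(J_{\leq j,t})^{k-1}$ is exactly the form needed for the forthcoming dual-fitting analysis.
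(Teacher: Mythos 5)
Your proof is correct and matches the paper's intent: the paper simply states that the fact follows from the convexity of $f(x) = x^k$ without further elaboration, and your parenthetical remark about tangent inequalities for convex functions is precisely the intended one-line argument. The mean value theorem version you lead with is an equivalent, equally clean route to the same sandwich.
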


This completes the description of the algorithm. 

\subsection{The Convex Program and Nice Times}

We now briefly indicate how the analysis of the algorithm  gets adapted to this algorithm. 
The KKT condition~\eqref{k1} now changes to 
\begin{align}
\label{k1new}
\frac{\wh_{j,t}}{\btR_j} & = \theta^t_{j'} + \eta^t - \nu_e^t \qquad  \forall e=(j',j), j' \in I_t, j \in J_t .
\end{align}

The KKT conditions~\eqref{k4}--\eqref{k6} remain unchanged. Hence,
Claim~\ref{cl:sum0} and Corollary~\ref{cor:sum} get restated thus:
\begin{claim}
\label{cl:sumnew}
Consider a job $j \in J_t$. Then $\wh_{j,t} \geq \btR_j \cdot \etat. $ Further, if $j$ is active at time $t$, then $\wh_{j,t} = \btR_j \cdot \etat$. 
\end{claim}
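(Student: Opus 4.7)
The plan is to mirror the proofs of Claim~\ref{cl:sum0} and Corollary~\ref{cor:sum} almost verbatim, with the new weight $\wh_{j,t}$ playing the role of $w_j$ throughout. The key observation is that among the KKT conditions for the modified convex program, only~\eqref{k1} has changed (into~\eqref{k1new}); the complementary-slackness conditions~\eqref{k4}--\eqref{k6} are unaffected since the constraints and their multipliers do not involve the objective. Consequently, the structural facts used in the earlier arguments---positivity of $\btR_j$, vanishing of $\theta^t_{j'}$ when $\btL_{j'}<1$, vanishing of $\nu^t_e$ when $\btz_e>0$---remain available.

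For the inequality $\wh_{j,t} \geq \btR_j \cdot \etat$, I would first argue that $\btR_j > 0$ whenever $\wh_{j,t} > 0$, since otherwise the objective $\wh_{j,t}\ln \btR_j$ would be $-\infty$; hence there exists an edge $e=(j',j) \in E_t$ with $\btz_e > 0$. On this edge~\eqref{k6} gives $\nu_e^t = 0$, and plugging into~\eqref{k1new} together with $\theta^t_{j'} \geq 0$ yields $\wh_{j,t}/\btR_j \geq \etat$. Multiplying by $\btR_j > 0$ finishes this direction.

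For the equality when $j$ is active, I would use the same trick as in Corollary~\ref{cor:sum}: pick a neighbor $j^\star \in I_t$ of $j$ in $H_t$ with $\btL_{j^\star} < 1$, which exists by the definition of active. By~\eqref{k4} we have $\theta^t_{j^\star} = 0$, and applying~\eqref{k1new} to the edge $e^\star = (j^\star,j)$ together with $\nu^t_{e^\star} \geq 0$ gives $\wh_{j,t}/\btR_j \leq \etat$. Combining with the lower bound from the first part forces equality.

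I do not expect a genuine obstacle here---the whole proof is a line-by-line restatement of the arguments for Claim~\ref{cl:sum0} and Corollary~\ref{cor:sum} with $\wh_{j,t}$ substituted for $w_j$, and none of those steps used any property of $w_j$ beyond non-negativity and its appearance as the numerator in the derivative of the Nash-welfare term. The only mild subtlety to flag is the degenerate case $\wh_{j,t} = 0$ (equivalently $w_j = 0$), where one must interpret $\btR_j$ appropriately; both inequalities still go through since the right-hand side vanishes and the active-case argument continues to yield $\nu^t_{e^\star} = 0$.
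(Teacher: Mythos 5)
Your proposal is correct and follows the same route as the paper, which simply states that Claim~\ref{cl:sum0} and Corollary~\ref{cor:sum} "get restated" under the new KKT condition~\eqref{k1new} with $\wh_{j,t}$ playing the role of $w_j$. One small remark worth making: your argument for the equality case is actually cleaner than the paper's own proof of Corollary~\ref{cor:sum}. The paper there works edge-by-edge, first trying an edge $(j',j)$ with $\btL_{j'}<1$ and $\btz_{j'j}>0$, then falling back to a perturbation argument when $\btz_{j'j}=0$, so that Claim~\ref{cl:sum} can be invoked. You instead observe that the stationarity condition~\eqref{k1new} holds for \emph{every} edge $(j',j)$ (not only those with positive $\btz$), so you can prove the two inequalities on different edges: the lower bound $\wh_{j,t}/\btR_j \geq \etat$ on any edge carrying positive flow (where $\nu_e=0$), and the upper bound $\wh_{j,t}/\btR_j \leq \etat$ on any edge to a neighbor with $\btL_{j^\star}<1$ (where $\theta^t_{j^\star}=0$). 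This avoids the perturbation entirely and would also simplify the proof of Corollary~\ref{cor:sum} itself. Your handling of the degenerate $\wh_{j,t}=0$ case is also appropriate to flag, since then $\frac{\wh_{j,t}}{\btR_j}$ in~\eqref{k1new} needs care; the cleanest resolution is to note that $\wh_{j,t}=0$ forces either $\etat=0$ or $\btR_j=0$ at optimality, so both sides of the claimed inequality/equality vanish.
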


We now introduce a useful definition. 
\begin{definition} [Nice time]
We say that a time $t$ is {\em nice} if $w(\Jact_t) \geq (1-\eps) \cdot w(J_t)$.  
\end{definition}
Let $\Tnc$ denote the set of nice time slots. Claim~\ref{cl:upper} can
now be restated as:
\begin{claim} 
  \label{cl:upperLAPS} For any time $t$, we have
  $\wh(\Jact_t)/m \leq \etat \leq \wh(J_t)/m$. Further, if $t \in \Tnc$, then  $1/e \leq \etat \cdot m \leq 1. $
\end{claim}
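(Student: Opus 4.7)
The proof follows the same summation strategy underlying Claim~\ref{cl:upper}, but with the convex program's new weights $\wh_{j,t}$ in place of the original $w_j$; the nice-time refinement additionally uses convexity of $x \mapsto x^k$.

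For the two-sided bound $\wh(\Jact_t)/m \leq \etat \leq \wh(J_t)/m$, I would reason as follows. The upper bound comes from summing $\wh_{j,t} \geq \btR_j \cdot \etat$ (Claim~\ref{cl:sumnew}) over $j \in J_t$ to get $\wh(J_t) \geq \etat \cdot \sum_{j \in J_t} \btR_j$. When $\etat > 0$, the KKT complementary-slackness condition~\eqref{k5} forces $\sum_{j' \in I_t} \btL_{j'} = m$; swapping summation order via~\eqref{eq:c1} and~\eqref{eq:c2} gives $\sum_{j \in J_t} \btR_j = \sum_{j' \in I_t} \btL_{j'} = m$, so $\etat \leq \wh(J_t)/m$ (the $\etat = 0$ case is immediate). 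For the lower bound, the equality in Claim~\ref{cl:sumnew} for active jobs yields $\wh(\Jact_t) = \etat \cdot \sum_{j \in \Jact_t} \btR_j \leq \etat \cdot m$, giving $\etat \geq \wh(\Jact_t)/m$.

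For the nice-time refinement, the upper bound $\etat \cdot m \leq 1$ is immediate from the first part together with the telescoping identity $\wh(J_t) = \sum_{j \in J_t} \wh_{j,t} = w(J_t)^k / w(J_t)^k = 1$. The substantive content is the lower bound $\etat \cdot m \geq 1/e$, which by the first part reduces to showing $\wh(J_t \setminus \Jact_t) \leq 1 - 1/e$. The key ingredient is a suffix-maximization lemma: among subsets $S \subseteq J_t$ with a given $w$-weight, $\wh(S)$ is maximized by the $\prec$-suffix of that weight, because the per-job ``densities'' $\wh_{j,t}/w_j$ are (by the mean value theorem applied to $x \mapsto x^k$) proportional to $(w(J_{\le j,t})/w(J_t))^{k-1}$, which is non-decreasing in the $\prec$-order; a simple exchange argument (swap an earlier element of $S$ for a later non-element) then delivers the suffix maximization. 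Applying this with $w(J_t \setminus \Jact_t) \leq \eps \cdot w(J_t)$ and computing the suffix explicitly via telescoping gives $\wh(J_t \setminus \Jact_t) \leq 1 - (1-\eps)^k = 1 - (1-\eps)^{1/\eps}$, so $\wh(\Jact_t) \geq (1-\eps)^{1/\eps}$.

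The main obstacle is the suffix-maximization lemma itself, which is the only step requiring genuine combinatorial/convexity work; the rest is bookkeeping with the KKT conditions. A minor technical subtlety is the asymptotic gap between $(1-\eps)^{1/\eps}$ and $1/e$---one either absorbs the resulting $(1 - O(\eps))$ factor into the downstream $O(1/\eps^2)$-competitive analysis or restricts to $\eps$ small enough that this slack is harmless.
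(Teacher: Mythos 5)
Your overall route matches the paper's. The two-sided bound $\wh(\Jact_t)/m \leq \etat \leq \wh(J_t)/m$ is derived exactly as in Claim~\ref{cl:upper} with $w_j$ replaced by $\wh_{j,t}$, and your handling of both directions (summing Claim~\ref{cl:sumnew}, invoking~\eqref{k5} when $\etat>0$, and using $\sum_j \btR_j\leq m$) is correct; the nice-time upper bound $\etat\cdot m\leq 1$ from $\wh(J_t)=1$ is also the paper's argument. For the nice-time lower bound, both you and the paper exploit convexity of $x\mapsto x^k$ through the telescoping structure of $\wh$, but the paper's formulation is cleaner and sidesteps a small gap in yours. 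Your suffix-maximization lemma is stated for subsets of a \emph{given} $w$-weight, and your proof is an exchange argument that swaps an earlier element of $S$ for a later non-element; with unequal job weights that swap changes $w(S)$, so it does not preserve the constraint, and a $\prec$-suffix whose $w$-weight is exactly $\eps\, w(J_t)$ need not exist. The paper instead establishes the unconditional inequality $\wh(S)\geq w(S)^k/w(J_t)^k$ for every $S\subseteq J_t$: writing $S=\{s_1\prec\cdots\prec s_l\}$ with partial sums $V_i=\sum_{a\leq i}w_{s_a}$, one has $w(J_{<s_i,t})\geq V_{i-1}$ and equal widths $w(J_{\leq s_i,t})-w(J_{<s_i,t})=V_i-V_{i-1}=w_{s_i}$, so convexity of $x^k$ gives $w(J_{\leq s_i,t})^k-w(J_{<s_i,t})^k\geq V_i^k-V_{i-1}^k$, and summing telescopes to $w(S)^k$. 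Applying this with $S=\Jact_t$ and $w(\Jact_t)\geq(1-\eps)\,w(J_t)$ yields $\wh(\Jact_t)\geq(1-\eps)^k$, the same endpoint you reach, but with no discrete-suffix subtlety.

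You are also right to flag the final numerical step: $(1-\eps)^{1/\eps}$ approaches $1/e$ from \emph{below} as $\eps\to 0$, so for every $\eps>0$ one has $(1-\eps)^k<1/e$, and the paper's chain ``$\geq(1-\eps)^k\geq 1/e$'' is, strictly speaking, slightly off. As you observe, this slack is a $1-O(\eps)$ factor and is harmless downstream (it can be absorbed into the $O(1/\eps^2)$ competitive ratio or fixed by replacing $1/e$ with, say, $1/4$ for $\eps\leq 1/2$), but it is a genuine imprecision in the paper that you correctly caught.
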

\begin{proof}
The first statement follows as in Claim~\ref{cl:upper}. So, it remains to prove the second claim. Again, $m \cdot \etat \leq 1$ follows
from the fact that $\wh(J_t)=1$ (by definition). Now, let us estimate $\wh(\Jact_t)$. Again by definition of $\wh$, it is easy to see that 
\[ \wh(\Jact_t) \quad \geq  \quad \frac{w(\Jact_t)^k}{w(J_t)^k} \quad \geq \quad (1-\eps)^k \quad \geq \quad 1/e.   \qedhere \]
\end{proof}

The definitions of the dual variables $\alpha, \beta, \gamma$ get slightly modified. 
The quantity  $\alpha_{j,s}$ is non-zero only when $s$ is nice. In other words, 
\begin{align} \label{def:alphaScalable}
\alpha_{j,s} := \frac{\ind{s \in \Tnc}}{m} \Big[ w_j \cdot \ind{j \in \Jact_s} 
  \cdot \Big( \sum_{j' \in J_s: j' \preceq j} \btR_{j'} \Big) 
  + \btR_j \cdot \Big( \sum_{j' \in \Jact_s: j' \prec j} w_{j'} \Big)
  \Big].
  \end{align}
The dual variables $\beta_t$ and $\gamma_e$ are defined as before. 
We first show the analogue of Claim~\ref{cl:objf}. 
\begin{claim}
\label{cl:objflaps}
The total weighted flow-time of the jobs is at most
\[ \frac{2}{\eps} \Big(\sum_{j} \alpha_j
- m \cdot \sum_t \beta_t \Big) + \frac{2}{\eps^2} \cdot \sum_j w_j \cdot \chain_j . \]
\end{claim}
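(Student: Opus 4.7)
My plan is to mirror the proof of Claim~\ref{cl:objf} from~\S\ref{sec:dual-object}, with two adjustments for the LAPS setting: tracking the new nice-time indicator $\ind{s\in\Tnc}$ that now appears in $\alpha_{j,s}$, and paying for the slack incurred at non-nice times out of the $\chain_j$ budget (which accounts for the extra $1/\eps$ factor in the chain term). The first step is to compute $\sum_{j\in J_s}\alpha_{j,s}$ at a nice time $s$. The two symmetric terms in~\eqref{def:alphaScalable} can be re-indexed so as to combine into $\frac{w(\Jact_s)}{m}\sum_{j'\in J_s}\btR_{j'}$; by constraints~\eqref{eq:c1}--\eqref{eq:c2} this equals $\frac{w(\Jact_s)}{m}\sum_{j'\in I_s}\btL_{j'}$, and whenever the active set is non-empty Claim~\ref{cl:full} forces $\sum_{j'\in I_s}\btL_{j'}=m$ (else all $\btL_{j'}=1$ and no job could be active). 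Hence for nice $s$ one obtains $\sum_j\alpha_{j,s}=w(\Jact_s)\ge(1-\eps)\,w(J_s)$, while the indicator kills $\alpha_{j,s}$ at non-nice times.

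Next I would set up the per-time inequality $\tfrac{\eps}{2}\,w(J_s)+m\beta_s \le \sum_{j}\alpha_{j,s}+Y_s$, where the non-negative ``chain charges'' $Y_s$ will satisfy $\sum_s Y_s \le \tfrac{1}{\eps}\sum_j w_j\chain_j$. For nice $s$ one can take $Y_s=0$ provided $\beta_s$ is small enough that $m\beta_s\le(1-\tfrac{3\eps}{2})\,w(J_s)$, which holds for our choice and small $\eps$. For non-nice $s$ the definition of ``nice'' rearranges to $w(J_s)\le\tfrac{1}{\eps}\,w(J_s\setminus \Jact_s)$, so setting $Y_s:=\tfrac{1}{\eps}\,w(J_s\setminus \Jact_s)$ dominates $\tfrac{\eps}{2}\,w(J_s)+m\beta_s\le w(J_s)$ for small $\eps$. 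The crux, and the main obstacle, is then to prove $\sum_s Y_s \le \tfrac{1}{\eps}\sum_j w_j\chain_j$, which reduces to the chain bound $|T_j|\le \chain_j$, where $T_j:=\{s\in[r_j,C_j]: j\text{ is inactive at }s\}$. My plan is to fix a chain $C^{**}=(j_1\prec\cdots\prec j_l=j)$ realising $\chain_j$ at time $r_j$, and to argue that $\mathrm{rem}(C^{**})$ drops at rate at least $(1+3\eps)$ per unit of $T_j$. The essential observation is that the first unfinished job $j_k$ in $C^{**}$ must be a minimal ancestor of $j$ in $G_s$ throughout $T_j$: an unfinished predecessor $j'\notin C^{**}$ would give an alternative chain $\{j'\}\cup\{j_k,\ldots,j_l\}$ whose initial length exceeds $\chain_j$ unless $\chain_{j'}\le p_{j_1}+\cdots+p_{j_{k-1}}$, and a recursive application of the same chain bound to $j'$ (using that every minimal ancestor of $j$ is also a minimal ancestor of $j'$, so $j'$ is inactive whenever $j$ is) shows that $j'$ together with all of its ancestors completes within the same window of $T_j$ that it takes $j_1,\ldots,j_{k-1}$ to finish. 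Since $j$ is inactive, this minimal ancestor $j_k$ satisfies $\btL_{j_k}^s=1$ and is therefore processed at the full speedup rate $1+3\eps$.

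Finally, I would sum the per-time inequality over all $s$ and substitute $\sum_s Y_s \le \tfrac{1}{\eps}\sum_j w_j\chain_j$, yielding
\[
\tfrac{\eps}{2}\sum_s w(J_s) + m\sum_s\beta_s \;\le\; \sum_j\alpha_j + \tfrac{1}{\eps}\sum_j w_j\chain_j,
\]
and rearranging gives the claimed bound $\tfrac{2}{\eps}\bigl(\sum_j\alpha_j - m\sum_s\beta_s\bigr)+\tfrac{2}{\eps^2}\sum_j w_j\chain_j$ on the weighted flow-time $\sum_s w(J_s)=\sum_j w_j(C_j-r_j)$. The delicate part is the recursive chain-completion argument inside $|T_j|\le\chain_j$: a priori a predecessor of $j_k$ outside $C^{**}$ could stall the progress of $C^{**}$, and it is the interplay between the maximality of $C^{**}$ and the uniform $(1+3\eps)$ processing rate of every minimal ancestor of $j$ during $T_j$ that closes the argument.
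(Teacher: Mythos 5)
Your proposal correctly identifies the key identity $\sum_{j\in J_s}\alpha_{j,s}=w(\Jact_s)\ge(1-\eps)w(J_s)$ at nice times, and your recursive chain argument is in the same spirit as (but more elaborated than) the paper's one-line appeal to "we must be running a job in the chain which defines $\chain_j$." (Minor directional slip there: since $j'\prec j$, the correct containment is that every minimal ancestor of $j'$ is also a minimal ancestor of $j$, not the other way around; the conclusion "$j'$ is inactive whenever $j$ is" is still right.)

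However, there is a genuine gap in your per-time inequality $\tfrac{\eps}{2}w(J_s)+m\beta_s\le\sum_j\alpha_{j,s}+Y_s$ with $Y_s=0$ at nice times. You assert $m\beta_s\le(1-\tfrac{3\eps}{2})w(J_s)$, but in fact $m\beta_s=\tfrac{w(J_s)}{1+\eps}\ge(1-\eps)\,w(J_s)>(1-\tfrac{3\eps}{2})w(J_s)$ for every $\eps>0$; the inequality you need is in the wrong direction. Concretely, at a nice time where $w(\Jact_s)=(1-\eps)w(J_s)$ one has $\sum_j\alpha_{j,s}-m\beta_s=w(J_s)\bigl[(1-\eps)-\tfrac{1}{1+\eps}\bigr]=-\tfrac{\eps^2}{1+\eps}w(J_s)<0$, so the difference is already negative before the $\tfrac{\eps}{2}w(J_s)$ term is added. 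The paper circumvents this by not attempting a per-time inequality: it bounds $\sum_{t\in\Tnc}w(J_t)\le\tfrac{1}{1-\eps}\sum_j\alpha_j$ and $\sum_{t\notin\Tnc}w(J_t)\le\tfrac{1}{\eps}\sum_j w_j\chain_j$ separately, adds them, and only then subtracts $m\sum_t\beta_t=\tfrac{1}{1+\eps}\sum_t w(J_t)$ from both sides of the global sum, so the $\tfrac{\eps}{1+\eps}$ fraction survives on the left. If you insist on a per-time decomposition, you must take $Y_s=w(J_s\setminus\Jact_s)$ at nice times too (not just at non-nice ones) and use the coefficient $\tfrac{\eps}{1+\eps}$ rather than $\tfrac{\eps}{2}$; the chain term $\sum_s Y_s\le\tfrac{1}{\eps}\sum_j w_j\chain_j$ still goes through, because $Y_s\le\tfrac{1}{\eps}\,w(J_s\setminus\Jact_s)$ in both cases.
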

\begin{proof}
Consider a nice time  $t \in \Tnc$. As in the proof of Claim~\ref{cl:objf}, we get 
\[ \sum_{j \in J_t} \alpha_{j,t} \quad = \quad w(\Jact_t) \quad \geq \quad (1-\eps) \cdot w(J_t),\]
where the last inequality follows from the fact that $t$ is nice. The following inequality follows as in the 
proof of Claim~\ref{cl:objf} (note that the machines run at speed $(1+3\eps)$ now). 
\[ \sum_t w(J_t \setminus \Jact_t) \quad \leq \quad \sum_j w_j \cdot \frac{\chain_j}{1+3 \eps}  \quad \leq \quad \sum_j w_j \cdot {\chain_j}. \]
Now consider a $t \notin \Tnc$. This means $w(\Jact_t) \leq (1-\eps) \cdot w(J_t)$, or $w(J_t) \leq \frac{1}{\eps} w(J_t \setminus \Jact_t) $. Thus,
$$ \sum_{t \notin \Tnc} w(J_t) \quad \leq \quad \frac{1}{\eps} \cdot \sum_{t \notin \Tnc} w(J_t \setminus \Jact_t) 
\quad \leq \quad \frac{1}{\eps} \cdot \sum_j w_j \cdot \chain_j. $$
This means
$$ \sum_{t} w(J_t) \quad = \quad \sum_{t \in \Tnc} w(J_t)   + \sum_{t \notin \Tnc} w(J_t) \quad \leq \quad \frac{1}{1-\eps} \cdot \sum_{j} \alpha_j + \frac{1}{\eps} \cdot \sum_j w_j \cdot \chain_j. $$
Since $m \cdot \sum_t \beta_t = \sum_t w(J_t)/(1+\eps)$, taking difference we get
\[   \sum_t w(J_t) \cdot \Big(1- \frac{1}{1+\eps} \Big) \quad \leq \quad  \frac{1}{1-\eps} \cdot \sum_{j} \alpha_j + \frac{1}{\eps} \cdot \sum_j w_j \cdot \chain_j - m \cdot \sum_t \beta_t, 
\]
which implies the claim because the total weighted flow-time equals $\sum_t w(J_t)$.
\end{proof}

\subsection{Checking Dual Feasibility} \label{sec:dualFeasibScalable}

We now want to check the dual constraint~(\ref{eq:dualf}), so fix
a job $j$.
Lemmas~\ref{lem:af}
and~\ref{lem:afnew} get modified as follows. 

\begin{lemma}
\label{lem:aflaps}
For any time $s \geq r_j$, we have $\alpha_{j,s} \leq ke \cdot w_j$. 
\end{lemma}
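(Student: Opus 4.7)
\textbf{Proof plan for Lemma~\ref{lem:aflaps}.}

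First, observe that if $s \notin \Tnc$, then by definition~\eqref{def:alphaScalable} the indicator $\ind{s \in \Tnc}$ vanishes and $\alpha_{j,s}=0$, so the bound holds trivially. Hence the plan is to fix a nice time $s \in \Tnc$ and bound the two summands inside the bracketed expression separately, mimicking the structure of Lemma~\ref{lem:af} but replacing the two uses of Claim~\ref{cl:sum0}/Corollary~\ref{cor:sum} by their hatted analogues in Claim~\ref{cl:sumnew}, and then converting from $\wh_{\cdot,s}$ back to actual weights via Fact~\ref{fact:convexity}. The key quantitative input from niceness is Claim~\ref{cl:upperLAPS}, which gives $m\eta^s \geq 1/e$, i.e., $\tfrac{1}{m\eta^s} \leq e$; this is where the factor $e$ in the final bound comes from.

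For the first summand $\tfrac{1}{m}\, w_j\, \ind{j\in\Jact_s}\, \sum_{j'\preceq j} \btR^s_{j'}$, I will use Claim~\ref{cl:sumnew} to deduce $\btR^s_{j'} \leq \wh_{j',s}/\eta^s$ for every $j' \in J_s$, and then telescope: $\sum_{j' \in J_s,\, j'\preceq j} \wh_{j',s} = w(J_{\leq j,s})^k / w(J_s)^k \leq 1$. Combined with $\tfrac{1}{m\eta^s}\le e$, this bounds the first summand by $e\,w_j$. For the second summand $\tfrac{1}{m}\, \btR^s_j\, \sum_{j'\in \Jact_s,\, j'\prec j} w_{j'}$, I first crudely bound the inner sum by $w(J_{<j,s})$, then use $\btR^s_j \leq \wh_{j,s}/\eta^s$ from Claim~\ref{cl:sumnew}, and finally apply the upper bound in Fact~\ref{fact:convexity}, namely $\wh_{j,s} \leq k w_j\, w(J_{\leq j,s})^{k-1}/w(J_s)^k$. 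Multiplying these out and again using $\tfrac{1}{m\eta^s}\le e$ together with $w(J_{<j,s})\cdot w(J_{\leq j,s})^{k-1} \leq w(J_{\leq j,s})^k \leq w(J_s)^k$, the second summand is at most $ke\,w_j$.

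Adding the two contributions gives $\alpha_{j,s} \leq (k+1)e\, w_j$, which (since $k = 1/\eps \geq 1$ throughout the regime of interest) is absorbed into the claimed $ke\, w_j$ bound up to an adjustment of the constants hidden in the ultimate $O(1/\eps^2)$ competitive ratio. The main subtlety—and the place I expect the most bookkeeping—is the second summand: one has to be careful to apply the \emph{upper} direction of Fact~\ref{fact:convexity} to $\wh_{j,s}$ (to upper-bound $\btR^s_j$ in terms of $w_j$) while simultaneously controlling the sum $\sum_{j'\prec j} w_{j'}$ without an analogous convexity tool, which forces the slightly wasteful step $w_{j'} \leq w(J_{<j,s})$ and then the clean telescoping $w(J_{<j,s}) \cdot w(J_{\leq j,s})^{k-1} \leq w(J_s)^k$. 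The rest of the calculation is routine.
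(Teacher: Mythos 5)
Your proof is correct and follows essentially the same approach as the paper: split $\alpha_{j,s}$ into its two summands, bound $\btR^s_j$ via Claim~\ref{cl:sumnew}, convert $\wh_{j,s}$ to $w_j$ via Fact~\ref{fact:convexity}, and extract the factor $e$ from $1/(m\eta^s)\le e$ at nice times (Claim~\ref{cl:upperLAPS}). The only deviation is in the first summand, where the paper bounds $\sum_{j'\preceq j}\btR^s_{j'}\le\sum_{j'\in J_s}\btL^s_{j'}\le m$ directly (yielding $w_j$ rather than your $e\,w_j$); both give $\alpha_{j,s}\le O(k)\cdot w_j$, which is all that is used downstream.
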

\begin{proof} 
We can assume that $s$ is nice, otherwise $\alpha_{j,s}$ is 0. 
  Consider the first term in the definition of definition of $\alpha_{j,s}$. 
  Since 
$  \sum_{j' \in J_s: j' \preceq j} \bR^s_{j'}  \leq  \sum_{j' \in J_s} \bR^s_{j'}  =   \sum_{j' \in J_s} \bL^s_{j'}  \leq m$,
  this term  
  \[ \frac{1}{m} \Big[ w_j \cdot \ind{j \in \Jact_s} 
  \cdot \Big( \sum_{j' \in J_s: j' \preceq j} \bR^s_{j'} \Big) \Big] \leq w_j.
  \]

  Now consider the second term  of $\alpha_{j,s}$.
By Claim~\ref{cl:sumnew}
  we have $\bR^s_j  \leq \frac{\wh_{j,s}}{\etat}$, which implies
  \begin{align*}
  \frac1m \bR^s_j \cdot \Big( \sum_{j' \in \Jact_s: j' \prec j} w_{j'} \Big)  &\leq  \frac{\wh_{j,s}}{m\eta^s} \cdot \Big( \sum_{j' \in \Jact_s: j' \prec j} w_{j'} \Big) .
  \end{align*}
  Now using Fact~\ref{fact:convexity},
  \begin{align*} 
    \frac1m \bR^s_j \cdot \Big( \sum_{j' \in \Jact_s: j' \prec j} w_{j'} \Big) 
  \quad  \leq \quad \frac{1}{m\eta^s} k w_j \cdot \frac{ w(J_{\leq j,s})^{k-1}}{w(J_s)^k} \cdot w(\Jact_s) \quad \leq \quad \frac{k \cdot w_j}{m\eta^s} \quad \leq \quad k e \cdot w_j, 
  \end{align*}
where the last inequality follows from Claim~\ref{cl:upperLAPS}. 
\end{proof}

\begin{lemma}
\label{lem:afnewlaps}
For any time $s \geq r_j$, we have $\alpha_{j,s} \leq (1+\eps) \cdot  \eta^s_j \cdot \bR^s_j. $
\end{lemma}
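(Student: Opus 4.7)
My plan is to handle this via a case split on whether $j$ is active at time $s$, reducing the active case to a single algebraic inequality in two parameters. If $s \notin \Tnc$ or $j \notin J_s$ then $\alpha_{j,s} = 0$ and the bound is trivial. If $j \in J_s \setminus \Jact_s$, the first (indicator-gated) term of~\eqref{def:alphaScalable} vanishes, and the second term admits the direct upper bound $\frac{1}{m}\bR^s_j \sum_{j' \in \Jact_s,\, j' \prec j} w_{j'} \leq \frac{1}{m}\bR^s_j \cdot w(J_{<j,s}) \leq \bR^s_j \cdot \eta^s_j$, which is even stronger than $(1+\eps)\eta^s_j \bR^s_j$.

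The interesting case is $j \in \Jact_s$ (with $s \in \Tnc$). Here I would invoke Claim~\ref{cl:sumnew}: activeness gives $\bR^s_j = \wh_{j,s}/\eta^s$, and for every $j' \in J_s$ we have $\bR^s_{j'} \leq \wh_{j',s}/\eta^s$. Summing the latter and using the telescoping identity $\sum_{j' \preceq j} \wh_{j',s} = w(J_{\leq j,s})^k/w(J_s)^k$ lets me bound the first term of $\alpha_{j,s}$ by $\frac{w_j}{m\eta^s}\cdot\frac{w(J_{\leq j,s})^k}{w(J_s)^k}$. For the second term I use the crude bound $\sum_{j' \in \Jact_s,\, j' \prec j} w_{j'} \leq w(J_{<j,s})$; after substituting $\bR^s_j = \wh_{j,s}/\eta^s$ this contributes $\frac{w(J_{<j,s})\bigl(w(J_{\leq j,s})^k - w(J_{<j,s})^k\bigr)}{m\eta^s\, w(J_s)^k}$.

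Now set $W := w(J_s)$, $\rho := w(J_{\leq j,s})/W$, $\sigma := w(J_{<j,s})/W$, so that $w_j = (\rho - \sigma)W$. A direct expansion yields the cancellation
\begin{equation*}
(\rho-\sigma)\rho^k + \sigma(\rho^k - \sigma^k) \;=\; \rho^{k+1} - \sigma^{k+1},
\end{equation*}
so the two contributions sum to $\frac{W}{m\eta^s}(\rho^{k+1} - \sigma^{k+1})$. Since $(1+\eps)\eta^s_j\bR^s_j = (1+1/k)\cdot\frac{\rho W(\rho^k - \sigma^k)}{m\eta^s}$, the lemma reduces to proving $\rho^{k+1} - \sigma^{k+1} \leq (1 + 1/k)\,\rho\,(\rho^k - \sigma^k)$. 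Applying $a^n - b^n = (a-b)\sum_{i=0}^{n-1}a^{n-1-i}b^i$ on both sides and cancelling the common factor $(\rho - \sigma)$, this reduces further to $\sigma^k \leq \frac{1}{k}\sum_{i=0}^{k-1}\rho^{k-i}\sigma^i$, which holds because $\rho \geq \sigma$ forces each of the $k$ summands to be at least $\sigma^k$. The main obstacle is spotting the clean cancellation to $\rho^{k+1} - \sigma^{k+1}$; once that collapse is visible the rest is a standard symmetric-function estimate, and the bound it produces is exactly $(1 + 1/k) = (1 + \eps)$.
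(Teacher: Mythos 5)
Your proof is correct, and it takes a genuinely different algebraic route from the paper's. Both start the same way: dispose of the inactive and non-nice cases, and for an active $j$ at a nice time $s$ use Claim~\ref{cl:sumnew} to convert $\bR^s_{j'}$ into $\wh_{j',s}/\eta^s$ and telescope. They diverge in how the resulting power sums are estimated. The paper bounds the two terms of $\alpha_{j,s}$ separately: the second crudely by $\bR^s_j\eta^s_j$, the first via the left inequality of Fact~\ref{fact:convexity}, picking up an extra $\frac{1}{k}\bR^s_j\eta^s_j$. You instead carry both terms into the single expression $\frac{W}{m\eta^s}\bigl[(\rho-\sigma)\rho^k+\sigma(\rho^k-\sigma^k)\bigr]$, observe the collapse to $\rho^{k+1}-\sigma^{k+1}$, and close with a symmetric-function estimate; the constant $1+1/k$ is tight either way.

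There is one respect in which your version is in fact \emph{more} careful than the paper's. Definition~\eqref{def:alphaScalable} uses $\preceq$ in the first sum (which is needed for the identity $\sum_{j\in J_t}\alpha_{j,t}=w(\Jact_t)$ in Claim~\ref{cl:objflaps}), so for active $j$ the first term includes the diagonal piece $\frac{w_j\bR^s_j}{m}$. The paper's displayed chain $\alpha_{j,s}\leq\frac{\btR_j}{m}\sum_{j'\in\Jact_s:\,j'\prec j}w_{j'}+\frac{w_j}{m}\sum_{j'\in J_s:\,j'\prec j}\bR^s_{j'}$ silently drops this positive term, so that ``$\leq$'' as written actually points the wrong way. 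Your argument keeps the full $\preceq$ sum, and the $j'=j$ piece is exactly what makes $(\rho-\sigma)\rho^k+\sigma(\rho^k-\sigma^k)$ collapse cleanly to $\rho^{k+1}-\sigma^{k+1}$; your proof is therefore the one that matches the stated definition. Two small remarks: (i) the paper's intermediate ``$=$'' converting $\bR^s_{j'}$ to $\wh^s_{j'}/\eta^s$ for arbitrary $j'\prec j$ should be ``$\leq$'' (Claim~\ref{cl:sumnew} gives equality only for active $j'$), which you handle correctly; (ii) your factoring $a^n-b^n=(a-b)\sum_{i}a^{n-1-i}b^i$ implicitly needs $k=1/\eps$ to be an integer --- the target inequality $\rho^{k+1}-\sigma^{k+1}\leq(1+1/k)\rho(\rho^k-\sigma^k)$ does hold for all real $k\geq 1$ (with $t=\sigma/\rho$, the function $1-(1+k)t^k+kt^{k+1}$ is nonnegative on $[0,1]$ since it vanishes at $t=1$ and is decreasing), but the symmetric-function step as stated is integer-specific, so you should either assume $1/\eps\in\integers$ or replace that step by this one-line calculus argument.
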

\begin{proof}
The second term in definition of $\alpha_{j,s}$ from \eqref{def:alphaScalable} is easy to bound because
\[ \frac{\btR_j}{m} \cdot \sum_{j' \in \Jact_s: j' \prec j} w_{j'} \leq \bR_j \cdot \eta^s_j \]
by the definition of $\eta^s_j$. It remains to bound the first term. Assume that $j$ is active. By Claim~\ref{cl:sumnew} and the definition of $\wh^{s}_{j'}$, we get 
$$ \frac{w_j}{m} \cdot  \sum_{j' \in J_s, j' \prec j} \bR^s_{j'} \quad = \quad \frac{w_j}{m \cdot \eta^s} \cdot \sum_{j' \in J_s, j' \prec j} \wh^s_{j'} 
\quad= \quad \frac{w_j}{m \cdot \eta^s} \cdot \frac{w(J_{< j,s})^k}{w(J_s)^k}.
$$
Using Fact~\ref{fact:convexity}, the above can be upper bounded by 
$$ \frac{\wh_{j,s} \cdot w(J_{<j,s})}{k \cdot m \cdot \eta^s} = \frac{\bR^s_j \cdot \eta^s_j}{k},$$
where the last term follows from the fact that $j$ is active. This proves the claim because
\[ \alpha_{j,s} \quad \leq \quad \frac{\btR_j}{m} \cdot \sum_{j' \in \Jact_s: j' \prec j} w_{j'}  + \frac{w_j}{m} \cdot  \sum_{j' \in J_s, j' \prec j} \bR^s_{j'} \quad \leq \quad  \bR_j \cdot \eta^s_j + \frac{\bR^s_j \cdot \eta^s_j}{k}. \qedhere
\]
\end{proof}

The rest of the arguments follow as in the previous section. We can show
in a similar manner that for any job $j$ and time $t$:
\begin{align}\label{eq:dualf0New}
\alpha_j +  \redd{(1+\eps)} \cdot \sum_{s \geq t} \left( \gout_{s,j} - \gin_{s,j} \right) \leq \beta_t \cdot  p_j  + 
\redd{ke} \cdot w_j \cdot (t-r_j). 
\end{align}
This suffices because it implies that $\frac{\alpha_j}{ke}, \frac{\beta}{ke},$ and  $(1+\eps) \frac{\gamma}{ke}$ are feasible dual solutions, which loses only another factor of $ke$ in the objective function $\sum_j \alpha_j - m\sum_t \beta_t$.

We first argue using Lemma~\ref{lem:aflaps} that it suffices to show
\[  \sum_{s\geq t} \alpha_{j,s} +  {(1+\eps)} \cdot \sum_{s \geq t} \left( \gout_{s,j} - \gin_{s,j} \right) \leq \beta_t \cdot  p_j ,
\]
and then further simplify it to showing
\begin{align} \label{eq:dualf2New}
  \sum_{s\geq \max\{t,t_j^{\star}\}} \alpha_{j,s} +  {(1+\eps)} \cdot \sum_{s\geq \max\{t,t_j^{\star}\}} \left( \gout_{s,j} - \gin_{s,j} \right) \leq \beta_t \cdot  p_j 
\end{align}
because for any time $s$  satisfying $r_j \leq s < t_j^\star$, a variant of Claim~\ref{cl:sf} shows $\alpha_{j,s} + (1+\epsilon)(\gout_{s,j} - \gin_{s,j}) \leq 0$. Here we get a factor $(1+\eps)$ instead of factor $2$ in Claim~\ref{cl:sf}  because Lemma~\ref{lem:afnewlaps} has $(1+\eps)$ factor unlike Lemma~\ref{lem:afnew}.
Finally, Claim~\ref{cl:sf1} now gets modified as follows; we omit the proof since it is essentially unchanged.
\begin{claim}
  Let $s$ be a time satisfying $s \geq \max\{t,t_j^\star\}$. Then
  $\alpha_{j,s} + (1+\eps )(\gout_{s,j} - \gin_{s,j}) \leq (1+3\eps) \beta_t
  \cdot \bL^s_j. $
\end{claim}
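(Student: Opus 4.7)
The plan is to mirror the proof of Claim~\ref{cl:sf1} in Section~\ref{sec:flow}, with two adjustments: invoke Lemma~\ref{lem:afnewlaps} (which carries a $(1+\eps)$-factor rather than $2$) to bound $\alpha_{j,s}$, and use the no-surprises assumption together with the structure of the total order $\prec$ to show $\eta^s_{j''} \leq (1+\eps)\beta_t$ for every DAG-descendant $j''$ of $j$.  First observe that, since $s \geq t_j^\star$, the job $j$ has entered $I_s$ and stays there until $C_j$; minimality of $j$ in $G_s$ forces $\gin_{s,j}=0$ and forbids $j$ from being an intermediate vertex on any path $P_e$ (that would require a DAG-predecessor of $j$ still in $G_s$).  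Hence only edges $e=(j,j_r)\in H_s$ with $j_r\neq j$ contribute to $\gout_{s,j}$, giving
\[
\gout_{s,j} \;=\; \sum_{e=(j,j_r)\in H_s,\; j_r\neq j} \eta^s_{j_1(e)} \cdot \bz^s_e,
\]
where $j_1(e)$ is the first DAG-successor of $j$ on $P_e$.  Since the self-loop $(j,j)$ is the unique $H_s$-edge incident to $j$ on the right side, $\bR^s_j = \bz^s_{(j,j)}$ and $\sum_{e=(j,j_r),\, j_r\neq j} \bz^s_e = \bL^s_j - \bR^s_j$.  Lemma~\ref{lem:afnewlaps} gives $\alpha_{j,s} \leq (1+\eps)\,\eta^s_j\,\bR^s_j$.

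The key inequality is $\eta^s_{j''} \leq (1+\eps)\beta_t$ for every DAG-descendant $j''$ of $j$ (including $j$ itself).  Let $D_i$ be the DAG containing $j$; by no-surprises $r_{j''}=r_j\leq t$, so any $j''\in J_s$ with $s\geq t$ is also in $J_t$.  Moreover, the total order $\prec$ places every job from a DAG released strictly after time $t$ after all jobs of $D_i$, so the set $\{j' : j'\preceq j''\}$ appearing in the definition of $\eta^s_{j''}$ contains only jobs from DAGs released by time $t$; any such $j'$ alive at $s\geq t$ is also alive at $t$ (since its release time is at most $t$).  Thus $\{j'\in J_s : j'\preceq j''\} \subseteq \{j'\in J_t : j'\preceq j''\}$, and so $\eta^s_{j''} \leq \eta^t_{j''} \leq w(J_t)/m = (1+\eps)\beta_t$.

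Combining, $(1+\eps)\gout_{s,j} \leq (1+\eps)^2\beta_t(\bL^s_j - \bR^s_j)$ and $\alpha_{j,s} \leq (1+\eps)^2 \beta_t \bR^s_j$.  Adding, and using $(1+\eps)^2 \leq 1+3\eps$ for $\eps\in(0,1]$,
\[
\alpha_{j,s} + (1+\eps)(\gout_{s,j} - \gin_{s,j}) \;\leq\; (1+\eps)^2 \beta_t \bL^s_j \;\leq\; (1+3\eps)\beta_t \bL^s_j,
\]
which is exactly the desired inequality.  The main obstacle is the descendant bound on $\eta^s_{j''}$ above: it genuinely relies on the no-surprises hypothesis and on the consistency of $\prec$ with release dates, since without them DAGs arriving between $t$ and $s$ could cause $\eta^s_{j''}$ to exceed $\eta^t_{j''}$ and the entire estimate would break.
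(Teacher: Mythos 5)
Your proof is correct and follows the same route the paper intends (the paper omits the proof of this claim, noting it is "essentially unchanged" from Claim~\ref{cl:sf1}, and your argument is exactly that adaptation). You correctly substitute Lemma~\ref{lem:afnewlaps} for Lemma~\ref{lem:afnew}, observe that minimality of $j$ at times $s\geq t_j^\star$ kills $\gin_{s,j}$ and prevents $j$ from being an interior vertex of any $P_e$, and use the no-surprises assumption together with consistency of $\prec$ with release dates to get $\eta^s_{j''}\leq w(J_t)/m=(1+\eps)\beta_t$; the final arithmetic $(1+\eps)^2\leq 1+3\eps$ closes the argument.
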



Hence, the left-hand side of~\eqref{eq:dualf2New} is at most
$(1+3\eps) \beta_t \cdot \sum_{s \geq  \max\{t,t_j^\star\} } \bL^s_j$. However,
since job $j$ is assigned a rate of $\bL^s_j$ and the machines run at
speed $(1+3\eps)$, we get that this expression is at most
$p_j \cdot \beta_t$, which is the right-hand side
of~(\ref{eq:dualf2New}). This proves the feasibility of the dual
constraint~(\ref{eq:dualf0New}).

\subsection{Wrapping Up}

\begin{proof}[Proof of Theorem~\ref{thm:flowtimenew}]
  In the preceding \S\ref{sec:dualFeasibScalable} we proved that the
  variables $\frac{\alpha_j}{ke}, \frac{\beta}{ke},$ and  $(1+\eps) \frac{\gamma}{ke}$
   satisfy  the dual constraint for the flow-time relaxation.

  Since $\sum_j \big(\alpha_j/(ke)) - m \sum_t (\beta_t/(ke)\big)$ is a feasible dual,
  it gives a lower bound on the cost of the optimal solution. Moreover,
  $\sum_j w_j \cdot \chain_j$ is another lower bound on the cost of the
  optimal schedule. Now using the bound on the weighted flow-time of our
  schedule given by Claim~\ref{cl:objflaps}, this shows that we have an
  $O(1/\eps^2)$-approximation with $(1+3\eps)$-speedup.
\end{proof}



\section{Lower Bounds}
\label{sec:lower-bounds}

For the problem of minimizing weighted completion time under precedence
constraints, we allow the jobs in the DAG to arrive over time, and hence
different jobs can have different release dates. (All we require is that
the release dates respect the order given by the DAG, so a job with an
earlier release date cannot depend on a job with a later one.)
However, in the case of weighted \emph{flow-time} minimization, we
insist that  jobs in the same DAG have the same release date. We now show that
this assumption is necessary: if we allows jobs in a DAG to arrive over
time, there are strong lower bounds even for a single machine and in the
clairvoyant setting (i.e., when the algorithm knows the size of a job
when it arrives).

\begin{theorem}[Lower Bound]
  \label{thm:lower}
  Any randomized online algorithm for the problem of minimizing
  unweighted flow-time on a single machine with precedence constraints
  and release dates has an unbounded (expected) competitive ratio even
  in the clairvoyant setting. This lower bound holds even if we allow
  the speed of the machine to be augmented by a factor of $c$, for any
  constant $c > 0$.
\end{theorem}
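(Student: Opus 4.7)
The approach is to apply Yao's principle: exhibit a distribution over instances on which every deterministic clairvoyant online algorithm has arbitrarily large expected competitive ratio. Fix the speedup constant $c > 0$ and let $N$ be a parameter tending to infinity. At time $0$ release $N$ independent unit-size jobs $j_1, \dots, j_N$ with no precedences, and choose an index $i^* \in [N]$ uniformly at random. At time $T := N/(2c)$, release $M := N^3$ ``dependent'' jobs, each of size $\epsilon := 1/(M^2 c)$, each with a precedence constraint pointing to $j_{i^*}$. Crucially, this instance violates the no-surprises assumption: the dependents share a connected precedence component with $j_{i^*}$ but are released strictly later.

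The offline optimum, knowing $i^*$, first runs $j_{i^*}$ during $[0,1]$, processes $T-1$ other unit jobs during $[1,T]$, runs the $M$ dependents during $[T, T + M\epsilon]$, and then processes the remaining unit jobs serially. A direct calculation gives
\[
\OPT \;\leq\; \tfrac{N^2}{2} + \tfrac{N^2}{2c} + \tfrac{M^2 \epsilon}{2} + o(N^2) \;=\; O(N^2),
\]
since $M^2 \epsilon = 1/c$ with our choice of $\epsilon$.

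For any deterministic online algorithm, observe that before time $T$ the job $j_{i^*}$ is indistinguishable from the other $N-1$ unit jobs: same size, and no precedences are visible yet. By the $c$-speed budget, the total work the algorithm does in $[0,T]$ is at most $cT = N/2$. By symmetry and linearity over the random choice of $i^*$, the expected work done on $j_{i^*}$ is at most $cT/N = 1/2$, so the expected remaining work $w^* := 1 - (\text{work done on } j_{i^*} \text{ by time } T)$ satisfies $\E[w^*] \geq 1/2$. Let $\tau$ denote the (random) completion time of $j_{i^*}$ by the online algorithm. Since a single machine of speed $c$ can process at most $c(\tau - T)$ work in $[T,\tau]$, and all remaining work on $j_{i^*}$ must be completed in this interval, we have $(\tau - T)^+ \geq w^*/c$, and hence $\E[(\tau - T)^+] \geq 1/(2c)$.

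Because the $M$ dependents have release date $T$, cannot start until $j_{i^*}$ completes, and must be processed serially on a single machine, the flow time of the $k$-th dependent processed is at least $(\tau - T)^+ + k\epsilon/c$. Summing and taking expectations gives
\[
\E[\text{online flow time of dependents}] \;\geq\; \tfrac{M}{2c} + \tfrac{M^2 \epsilon}{2c} \;=\; \Omega(N^3/c).
\]
Thus the expected competitive ratio is $\Omega(N^3/c) / O(N^2) = \Omega(N/c)$, which tends to infinity as $N \to \infty$. The only step requiring care is the inequality $(\tau - T)^+ \geq w^*/c$ for an arbitrary preemptive online strategy; this is immediate from the single-machine speed-$c$ work budget, since the machine must devote at least $w^*$ units of processing to $j_{i^*}$ inside $[T,\tau]$. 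By Yao's minimax principle, the same lower bound transfers to randomized algorithms against a worst-case deterministic instance.
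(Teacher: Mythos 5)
Your proof is correct and takes essentially the same approach as the paper: a Yao's-principle argument where a uniformly random one of $N$ identical-looking unit jobs becomes the root of a large star of dependent jobs released later, so the online algorithm cannot have finished it and pays $\Omega(N^3)$ against an $O(N^2)$ offline optimum. The only differences are cosmetic — you release the dependents at time $N/(2c)$ with size $\epsilon$ and use an averaging argument, while the paper releases them at time $1$ with size $0$ and argues that at least half the jobs are under-half-processed.
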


\begin{proof}
  We give a probability distribution over inputs, and show that the
  expected competitive ratio of any deterministic algorithm is
  unbounded. By Yao's Lemma, this implies the desired lower bound.

  Initially, $n$ jobs arrive at time $0$, each of them has size 1. At
  time $1$, we choose one of these jobs uniformly at random, say
  $j \in [n]$, and release $n^3$ new jobs where each new job $j'$ depends
   on $j$, i.e., $j \prec j'$.    Hence the
  precedence graph is a star with $n^3$ leaves, rooted at $j$, along
  with the items in $[n]\setminus \{j\}$ which are unrelated to elements
  of this star. These  $n^3$ new jobs have 0 size. The parameter $n$ is
  assumed to be much larger than the speedup $c$.

  Let us first consider the offline optimum. It schedules the job $j$
  in the interval $[0,1]$ and so completes it---the $n^3$ jobs arriving
  at time $1$ can now be finished immediately, and hence the flow-time
  for them is zero. It finally schedules the remaining $n-1$ jobs of size $1$
   that had arrived at time 0. Their total flow-time is $O(n^2)$.

  Now consider any deterministic online algorithm. By time $1$, it can
  perform $c \ll n$ amounts of processing, and so at least half the jobs
  will have seen less than $1/2$ amount of processing. The randomly
  chosen job $j$ is such a job with probability at least $1/2$. If this
  event happens, the flow-time of the arriving $n^3$ jobs would be at
  least $n^3/2$, and hence the expected flow-time of this algorithm is
  $\Omega(n^3)$.
\end{proof}

This shows why we need our assumption that the release times of any two
related jobs is the same. This is a reasonable assumption for many
settings, e.g., in~\cite{RS,ALLM} where each job is a DAG of
tasks. We extend their model  from minimizing unweighted
flow-time of jobs to weighted flow-time of tasks. 



\newcommand{\Hp}{H^{(p)}}
\newcommand{\Ip}{I^{(p)}}
\newcommand{\Jp}{J^{(p)}}
\newcommand{\Dp}{{(\Delta_p)}}

\section{Solving the Convex Program}
\label{sec:convsolv}

Our results in the previous
sections 
rely on solving the convex program~\eqref{cp} to assign rates to the
minimal jobs.  In this section we show that we do not need a generic
convex program solver for this purpose: we can run an efficient
``water-filling'' algorithm instead. Indeed, combinatorial algorithms to
solve the Eisenberg–Gale convex program (and other problems in market
equilibria) have been studied widely, starting with the work of Devanur
et al.~\cite{DPSV-JACM08}. Specifically, the constraints~\eqref{eq:c1},
\eqref{eq:c2}, \eqref{eq:c3}, and \eqref{eq:c5} in~\eqref{cp} are a
special case of the Eisenberg–Gale convex program for linear Fisher
markets when the utility derived from different goods is the same. On
one hand, this means our setting is easier and we can use water-filling
to solve the program (whereas such a simple algorithm does not suffice
with general utilities~\cite{DPSV-JACM08}). On the other hand it does
not seem possible to use the prior results directly, since we have an 
additional global constraint~\eqref{eq:c4} in~\eqref{cp}.

Since this convex program is solved once at every time $t$
during the online algorithm, we consider a fixed time $t$ and remove all
subscripts involving $t$ in this section.
We have a bipartite graph $H$ with the left side being $I$ and the right side denoted by $J$. We shall use $E$ to denote the set of edges here. Every vertex $j \in I$ has an associated variable $L_j$ and the vertices $j \in J$ have variables $R_j$ associated with them. Further  we have a variable $z_e$ for every edge $e \in E$. For a subset $J'$ of $J$, define $\Gamma(J')$ as its set of neighbors in $I$. For a vertex $v$, define $\delta(v)$ to be the set of edges incident to it. 
 There is a notion of time  in our algorithm that increases at a uniform rate. We  use $T$ to denote this time variable. Our algorithm maintains a feasible solution  at all times $T$. 

The idea of the algorithm is to proceed in phases, and to simultaneously increase all $R_j$ values (initialized at $0$) at rate $w_j$ while maintaining feasibility. A phase ends when the algorithm can no longer perform this increase. This could be because  of two reasons: (i) there is a tight set $J' \subseteq J$ with $|\Gamma(J')| = w(J')\cdot T$ or (ii) the constraint $\sum_{j\in J} R_j \leq m$ is tight. In the former case we make progress by removing sets $J'$ and $\Gamma(J')$, and in the latter case we finish with an optimal solution to~\eqref{cp}.

Formally, in a phase $p$ we shall consider a  sub-graph $\Hp$ of $H$. The left and the right sides of $\Hp$ are denoted $\Ip$ and $\Jp$, respectively. In fact, $\Hp$ is the subgraph of $H$ induced by $\Ip$ and $\Jp$, and so, it will suffice to specify the latter two sets. The algorithm is described in Algorithm~\ref{alg:cp}. Although in this description we raise $T$ (and hence $R_j$) continuously, this can be implemented in polynomial time using parametric-flows~\cite{GGT-SICOMP89}.
We now argue the algorithm's correctness (i.e., it outputs  a feasible solution)  and then prove its optimality. 

\subsection{Correctness}
In order to prove correctness we need to show that the fractional assignments mentioned in Steps~\ref{m1} and~\ref{m2} can always be found. 
 We first show the algorithm always maintains  $L_j \leq 1$ for all $j \in I$. In Claim~\ref{cl:m2} we  argue that $\sum_{j \in I} L_j \leq m$, which implies feasibility for~\eqref{cp}.

We show that the following invariant is always maintained at any time $T$ during the algorithm. 

\begin{claim}
\label{cl:Hall}
Consider a time $T$ during a phase $p$ of the algorithm. There exist non-negative values $z_e$ for all edges $e$ in the graph $H_p$ such that the following conditions are satisfied:
\begin{itemize}
\item For every $j \in \Jp$, we have $\sum_{e \in \delta(j)} z_e = w_j \cdot T. $
\item For every $j \in \Ip$, we have $\sum_{e \in \delta(j)} z_e \leq 1. $
\end{itemize}
\end{claim}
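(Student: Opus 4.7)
The plan is to reduce the existence of feasible $z_e$ to a combinatorial Hall-type condition via LP duality, and then establish that condition by induction on the phases of the algorithm.

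\textbf{Reduction to Hall's condition.} First I would observe that a non-negative $z$ satisfying $\sum_{e \in \delta(j)} z_e = w_j T$ for all $j \in \Jp$ and $\sum_{e \in \delta(j)} z_e \leq 1$ for all $j \in \Ip$ is precisely a feasible $s$--$t$ flow of value $w(\Jp)\cdot T$ in the bipartite network with source $s$, arcs $(s,j)$ of capacity $w_j T$ for each $j \in \Jp$, the edges of $\Hp$ oriented from $\Jp$ to $\Ip$ with infinite capacity, and arcs $(j,t)$ of capacity $1$ from each $j \in \Ip$ to the sink $t$. By max-flow/min-cut (equivalently, a weighted Hall theorem), such a flow exists iff every subset $J' \subseteq \Jp$ satisfies
\[ w(J') \cdot T \;\leq\; |\Gamma_{\Hp}(J')|, \qquad (\star) \]
where $\Gamma_{\Hp}(J') \subseteq \Ip$ is the neighborhood of $J'$ in the induced subgraph $\Hp$. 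Hence it suffices to verify $(\star)$ at every time $T$ occurring during phase $p$.

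\textbf{Induction on phases.} The base case is phase $1$ at $T = 0$, where $(\star)$ is trivial. For the inductive step, assume $(\star)$ held throughout phase $p-1$, and let $T_{p-1}$ be the time at which phase $p-1$ terminates due to some tight set $J^{\star} \subseteq J^{(p-1)}$ with $w(J^{\star}) T_{p-1} = |\Gamma_{H^{(p-1)}}(J^{\star})|$; the algorithm then sets $\Jp = J^{(p-1)} \setminus J^{\star}$ and $\Ip = I^{(p-1)} \setminus \Gamma_{H^{(p-1)}}(J^{\star})$. For any $J' \subseteq \Jp$, applying $(\star)$ from phase $p-1$ to $J^{\star} \cup J'$ at time $T_{p-1}$ gives
\[ \bigl(w(J^{\star}) + w(J')\bigr)\, T_{p-1} \;\leq\; |\Gamma_{H^{(p-1)}}(J^{\star} \cup J')| \;=\; |\Gamma_{H^{(p-1)}}(J^{\star})| + |\Gamma_{\Hp}(J')|, \]
since by the definition of $\Ip$, the $I^{(p-1)}$-neighbors of $J'$ that survive the removal of $\Gamma_{H^{(p-1)}}(J^{\star})$ are exactly $\Gamma_{\Hp}(J')$. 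Substituting the tight-set equality and cancelling yields $(\star)$ for $J'$ in $\Hp$ at $T = T_{p-1}$. As $T$ grows continuously through the remainder of phase $p$, the right-hand side of $(\star)$ is fixed while the left-hand side grows linearly, and by construction the phase halts at the first moment $(\star)$ becomes tight; hence $(\star)$ holds for every $T$ in phase $p$.

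\textbf{Main obstacle.} The one delicate step is the neighborhood decomposition $|\Gamma_{H^{(p-1)}}(J^{\star} \cup J')| = |\Gamma_{H^{(p-1)}}(J^{\star})| + |\Gamma_{\Hp}(J')|$, which I would verify directly from the fact that $\Hp$ is the subgraph of $H^{(p-1)}$ induced on $(\Ip, \Jp)$. The alternative phase-termination criterion $\sum_j R_j = m$ does not enter the induction, since it halts the entire algorithm rather than opening a new phase. Once $(\star)$ is established, the required $z_e$ values are produced by the max-flow/min-cut reduction, completing the proof.
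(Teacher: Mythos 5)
Your proposal is correct and follows essentially the same route as the paper: prove by induction on phases that the weighted Hall/min-cut condition $w(J')\cdot T \le |\Gamma_{\Hp}(J')|$ holds throughout each phase, then invoke max-flow/min-cut (the paper calls it Hall's theorem) to obtain the required $z_e$. The only cosmetic differences are that you pass the inductive step by direct substitution and cancellation of the tight-set equality, whereas the paper argues by contradiction, and that you frame the existence of $z$ via a flow network rather than a fractional matching; the underlying neighborhood decomposition $|\Gamma_{H^{(p-1)}}(J^\star \cup J')| = |\Gamma_{H^{(p-1)}}(J^\star)| + |\Gamma_{\Hp}(J')|$ is the same key fact in both.
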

\begin{proof}
We prove the following statement by induction on phase $p$: at any time $T$ during a phase $p$ of the algorithm, $w(J') \cdot T \leq |I'|$ for every subset $J' \subseteq \Jp$ and $I' = \Gamma(J')$. It is easy to see that once we show this statement, the desired result follows by Hall's matching theorem.

It is clearly true for  $p=0$. Suppose it is true for some time $T = T_1$ in phase $p$, and we increase $T$ from $T_1$ to $T_2$ during this phase. Consider a subset $J'$ of $\Jp$, and let $I'$ denote $\Gamma(J')$. By induction hypothesis, $w(J') \cdot T_1 \leq |I'|$. As we raise $T$, the LHS will increase but the RHS remains unchanged. If the two become equal, this phase will end. Since $T_2$ also lies in this phase, $w(J') \cdot T_2$ must be at most $|I'|$, and the invariant continues to hold at time $T_2$.

 Now suppose we go from phase $p$ to phase $p+1$ at time $T$. Let $I', J'$ be as defined in Step~\ref{i}. 
 Suppose this invariant is violated at time $T$ in phase $p+1$, i.e., there exist subsets $J''$ and $I''= \Gamma(J'')$ of 
 $J^{(p+1)}$ and $I^{(p+1)}$, respectively, for which $w(J'') \cdot T > |I''|.$ Now consider the set of vertices
 $J' \cup J''$ in $\Hp$. Clearly $\Gamma(J' \cup J'') = I' \cup I''$. But then $w(J' \cup J'') \cdot T > |I'| + |I''| = |I' \cup I''|$, which contradicts the fact that the invariant condition always holds in phase $p$. 
\end{proof}

\begin{corollary}
The algorithm will find the desired matching is Steps~\ref{m1} and~\ref{m2}. 
\end{corollary}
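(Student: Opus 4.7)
The plan is to derive the corollary immediately from Claim~\ref{cl:Hall} by phrasing the existence of the required fractional assignments as a max-flow feasibility question, and then invoking max-flow/min-cut (equivalently, the fractional version of Hall's theorem).

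\textbf{Step 1 (Flow network).} Fix a time $T$ in phase $p$. Build an auxiliary directed network as follows. Introduce a source $s$ and a sink $\tau$. For each $j \in \Jp$ add an arc $s \to j$ of capacity $w_j \cdot T$; for each $i \in \Ip$ add an arc $i \to \tau$ of capacity $1$; and for each edge $\{j,i\} \in \Hp$ with $j \in \Jp$, $i \in \Ip$, add an arc $j \to i$ of capacity $+\infty$. Any $(z_e)_{e \in \Hp}$ satisfying the two bullets of Claim~\ref{cl:Hall} is exactly an $s$--$\tau$ flow of value $w(\Jp) \cdot T$ that saturates every source arc, and conversely; so Steps~\ref{m1} and~\ref{m2} amount to exhibiting such a flow.

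\textbf{Step 2 (Cut analysis).} By max-flow/min-cut, the required flow exists if and only if every $s$--$\tau$ cut has capacity at least $w(\Jp) \cdot T$. A finite-capacity cut cannot sever an $\infty$-capacity arc, so it has the form $\{s\} \cup J' \cup \Gamma(J')$ for some $J' \subseteq \Jp$, with capacity
\[
 w(\Jp \setminus J') \cdot T \;+\; |\Gamma(J')|.
\]
The cut condition $w(\Jp \setminus J') \cdot T + |\Gamma(J')| \geq w(\Jp) \cdot T$ simplifies to $|\Gamma(J')| \geq w(J') \cdot T$, which is precisely the invariant established (for \emph{every} $J' \subseteq \Jp$) in the inductive statement proved inside Claim~\ref{cl:Hall}.

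\textbf{Step 3 (Conclusion).} Therefore at every time $T$ within phase $p$ the network admits a feasible flow of the stated value, and reading off $z_e$ from any such flow yields the fractional assignment used in Steps~\ref{m1} and~\ref{m2}. The only subtle point is the boundary case when $T$ reaches the end of a phase: at that moment the inequality $|\Gamma(J')| \geq w(J') \cdot T$ becomes an equality for some $J'$, but equality is still compatible with feasibility of the flow, and it is exactly this tight $J'$ that the algorithm removes when transitioning to phase $p{+}1$. Hence the desired assignments exist throughout the execution, proving the corollary. The main (and only) obstacle is the translation between the combinatorial ``deficiency-free'' condition of Claim~\ref{cl:Hall} and the flow formulation; once that bridge is set up the rest is a standard min-cut calculation.
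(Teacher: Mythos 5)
Your max-flow/min-cut formulation is a perfectly legitimate alternative to the (fractional/weighted) Hall's theorem argument that the paper uses inside Claim~\ref{cl:Hall}; the two are equivalent, and your cut analysis correctly recovers the deficiency condition $|\Gamma(J')| \ge w(J')\cdot T$. For Step~\ref{m2} your argument is complete: a feasible flow on all of $\Hp$ with saturated source arcs is exactly the assignment Step~\ref{m2} asks for.

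However, there is a genuine gap for Step~\ref{m1}. Step~\ref{m1} asks for an assignment supported only on $E'$, the edges between $I'$ and $J'$, with the column sums satisfying the \emph{exact} equality $\sum_{e\in E',\, e\in\delta(j)} z_e = 1$ for every $j\in I'$ — not merely $\le 1$. Your flow lives on all of $\Hp$, so some mass on edges from $I'$ to $\Jp\setminus J'$ may be discarded when you restrict to $E'$, and a priori the restricted column sums could fall short of $1$. What closes this is a counting argument you do not give: since $\Gamma(J')=I'$ in $\Hp$, the restriction to $E'$ preserves $R_j$ for $j\in J'$; summing, $\sum_{j\in I'}\sum_{e\in E'\cap\delta(j)} z_e=\sum_{j\in J'}R_j=w(J')\cdot T=|I'|$, and since each term is $\le 1$ they must all equal $1$. (One could phrase the same fact in your flow language as ``every arc $I'\to\tau$ is saturated and no flow enters $I'$ from $\Jp\setminus J'$,'' but this also requires an explicit argument.) Your sentence ``reading off $z_e$ from any such flow yields the fractional assignment used in Steps~\ref{m1} and~\ref{m2}'' asserts exactly the step that needs to be proved, and your boundary-case discussion observes that the tight set $J'$ exists but does not draw the saturation conclusion from it. Without this piece the proof only establishes Step~\ref{m2}.
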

\begin{proof}
Consider the assignment required in Step~\ref{m1}. Let $z$ be the assignment guaranteed by Claim~\ref{cl:Hall}, and consider its restriction to edges in $E'$. Since $I' = \Gamma(J')$, it follows that 
$$ \sum_{e \in E', e \in \delta(j)} z_e \quad = \quad \sum_{e \in \Hp, e \in \delta(j)} z_e  \quad = \quad R_j. $$
We also know that for any $j \in I'$, 
$$\sum_{e \in E', e \in \delta(j)} z_e \leq 1. $$
But note that $\sum_{j \in J'} R_j = \sum_{j \in I'} L_j$. The former quantity is equal to $w(J') \cdot T$, while the latter is at most $|I'|$. But we know from the condition in Step~\ref{i} that they are equal. Therefore $L_j = 1$ 
for all $j \in I'$. This yields the desired assignment for Step~\ref{m1}. 
The desired assignment for Step~\ref{m2} follows directly from Claim~\ref{cl:Hall}. 
\end{proof}

\begin{algorithm*}[h]
\caption{Solving the Convex Program~\eqref{cp}}
\label{alg:cp}
\begin{algorithmic}[1]
\State Initialize $T \leftarrow 0, p \leftarrow 0$. 
\State Initialize $\Hp \leftarrow H, \Ip \leftarrow I,
\Jp \leftarrow J$. 
\State Initialize the variables $z, L, R$ to 0. 
\Repeat{} \label{repeat}
\State Raise $T$ at a uniform rate till one of the following two events happen:
\Statex
\State (i) There is a subset $J' \subseteq \Jp$ for which the set $I' = \Gamma(J')$ has cardinality $w(J') \cdot T$. \label{i}
\State \qquad \quad  For every $j \in J'$, set $R_j \leftarrow w_j \cdot T$. 
\State \qquad \quad  For every $j \in I'$, set  $L_j \leftarrow 1$. 
\State \qquad \quad Let $E'$ be the set of edges between $I'$ and $J'$. 
\State \qquad \quad  For every edge $e \in E'$, set $z_e$ to values satisfying : 
$$ \sum_{e \in E', e \in \delta(j)} z_e = R_j, \ \forall j \in J'; \  \sum_{e \in E', e \in \delta(j)} z_e = L_j, \ \forall j \in I'. $$ \label{m1}
\State \qquad \quad  $J^{(p+1)} \leftarrow J^{(p)} \setminus J'$ and $I^{(p+1)} \leftarrow I^{(p)} \setminus I'$. 
\State \qquad\quad Terminate if $I^{(p+1)}= \emptyset$. \label{term(i)}
\State \qquad \quad  $p \leftarrow p+1$, Goto Step~\ref{repeat}. 
\Statex
\State (ii) $\sum_{j \in \Jp} w_j \cdot T + |I \setminus \Ip| = m. $ \label{ii}
\State \qquad \quad  For every $j \in \Jp$, set $R_j \leftarrow w_j \cdot T$. 
\State \qquad \quad  For every edge $e$ in $\Hp$, set $z_e$ to values satisfying : 
$$ \sum_{e \in \Hp, e \in \delta(j)} z_e = R_j, \ \forall j \in \Jp; \  L_j := \sum_{e \in \Hp, e \in \delta(j)} z_e \leq 1, \ \forall j \in \Ip. $$ \label{m2}
\State \qquad \quad Terminate. \label{term(ii)}
\Until{$T$ cannot be raised.}
\end{algorithmic}
\end{algorithm*}

We now know the algorithm always ensures that $L_j \leq 1$ for all $j \in I$. Next we show that it maintains the invariant  $\sum_{j \in I} L_j \leq m$. This will show that these values are feasible for~\eqref{cp}. 

\begin{claim}
\label{cl:m2}
When the algorithm terminates, $\sum_{j \in I} L_j \leq m$. Further, if it terminates after executing Step~\ref{ii}, then
$\sum_{j \in I} L_j = m$. 
\end{claim}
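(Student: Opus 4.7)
The plan is to introduce a single potential
\[
\Phi \;:=\; w(\Jp)\cdot T \;+\; |I \setminus \Ip|,
\]
where $p$ is the current phase and $T$ is the current value of the time variable, and to establish three properties: (a)~within any phase $\Phi$ is monotone non-decreasing in $T$, (b)~$\Phi$ is preserved across a case~(i) transition, and (c)~the triggering condition of Step~\ref{ii} is exactly $\Phi = m$. Property~(a) is immediate since only the $w(\Jp)\cdot T$ summand depends on $T$, and property~(c) is literally Step~\ref{ii}. For~(b), at a case~(i) transition the witness sets $J' \subseteq \Jp$ and $I' = \Gamma(J')$ satisfy $|I'| = w(J')\cdot T$, and the updates $J^{(p+1)} = \Jp \setminus J'$, $I^{(p+1)} = \Ip \setminus I'$ give
\[
\Phi_{\text{new}} \;=\; \bigl(w(\Jp)-w(J')\bigr)\cdot T + |I\setminus \Ip| + |I'| \;=\; w(\Jp)\cdot T + |I \setminus \Ip| \;=\; \Phi_{\text{old}}.
\]
Together, (a)--(c) imply that $\Phi \leq m$ at every instant up to and including termination.

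Given these properties, both halves of the claim follow by directly evaluating $\sum_{j\in I} L_j$ at termination. If termination is via Step~\ref{term(ii)}, then by the matching constructed in Step~\ref{m2}, $\sum_{j\in \Ip} L_j = \sum_{e\in \Hp} z_e = \sum_{j\in \Jp} R_j = w(\Jp)\cdot T$, while every $j \in I\setminus \Ip$ has $L_j = 1$ from some earlier case~(i) round; summing yields $\sum_{j\in I} L_j = w(\Jp)\cdot T + |I\setminus\Ip| = \Phi = m$, establishing the second part of the claim. If instead termination is via Step~\ref{term(i)} with $I^{(p+1)} = \emptyset$, then every $j\in I$ has $L_j = 1$, so $\sum_{j\in I} L_j = |I|$; because case~(ii) never fired and $\Phi$ is preserved across case~(i) transitions, $\Phi \leq m$ just after the final transition, and this $\Phi$ equals $w(J^{(p+1)})\cdot T + |I| \geq |I|$, giving $|I|\leq m$ as desired.

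The main thing to get right is the choice of potential: it must line up with $\sum_{j\in I} L_j$ exactly at case~(ii) termination and upper-bound it at case~(i) termination, and it must remain invariant across case~(i) boundary jumps so that the bound $\Phi \leq m$ carries cleanly from within a phase to just after the phase ends. Once the potential is identified, the preservation calculation and the two endpoint evaluations are routine, and the only mild subtlety is the boundary case where (i) and (ii) are simultaneously enabled, which the argument above handles uniformly because preservation still gives $\Phi = m$ in that situation.
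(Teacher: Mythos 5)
Your proof is correct and follows essentially the same approach as the paper's: both arguments maintain the invariant $w(\Jp)\cdot T + |I \setminus \Ip| \leq m$, show it is exactly preserved across case~(i) transitions via the tightness relation $w(J')\cdot T = |I'|$, and then evaluate $\sum_{j\in I} L_j$ against this invariant at termination. If anything, you are slightly more explicit than the paper in spelling out the Step~\ref{term(i)} termination case, which the paper's proof leaves implicit.
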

\begin{proof}
We first show by induction on  phase $p$ that the following condition always holds for all $T$:
$$ \sum_{j \in \Jp} w_j \cdot T + |I \setminus \Ip| \leq m. $$
It clearly holds for $p = 0$. As in the proof of Claim~\ref{cl:Hall}, if it holds at any time during a phase, it will continue to hold during a later point of time in this phase. Now suppose the condition holds at some time $T$ during a phase $p$ and we go to phase $(p+1)$ at $T$. This happens because we reach Step~\ref{i} during this phase. 
We claim that $$\sum_{j \in \Jp} w_j \cdot T + |I \setminus \Ip| = \sum_{j \in J^{(p+1)}} w_j \cdot T + |I \setminus I^{(p+1)}|. $$
This easily follows from the fact that $\sum_{j \in J'} R_j = w(J') \cdot T = \sum_{j \in I'} L_j = |I'|, $ 
where $I'$ and $J'$ are as defined in Step~\ref{i}. Therefore the invariant continues to hold in phase $(p+1)$. 

Suppose we reach Step~\ref{ii} during phase $p$. Note that for every $j \in I \setminus \Ip$, we have $L_j = 1$. 
In this phase $\sum_{j \in \Jp} w_j \cdot T = \sum_{j \in \Ip} L_j. $ The condition in Step~\ref{ii} shows that 
this quantity is equal to $m - |I \setminus \Ip|$. Therefore, $\sum_{j \in I} L_j = m$. 
\end{proof}

Thus, we have shown that the quantities $z, L, R$ satisfy all the constraints in~\eqref{cp}. Now we prove their optimality. 

\subsection{Optimality}

To prove optimality, we will define non-negative dual variables $\theta_j, \eta, \nu$ which satisfy the KKT conditions~\eqref{k1}--\eqref{k6}. We give some notations first. Let $\ell$ denote the index of the final phase (the algorithm could end because of Steps~\ref{i} or~\ref{ii}). For any phase $p$, let $J^\Dp$ denote $\Jp \setminus J^{(p+1)}$ (this set is same as $J'$ used in Step~\ref{i}). Define $I^\Dp$ similarly. Since $I^\Dp = \Gamma(J^\Dp)$ in the graph $\Hp$, there cannot be an edge in $H$ between $J^\Dp$ and $I^{(\Delta_{p'})}$ for some $p' > p$ (though there could be an edge between $J^{(\Delta_{p'})}$ and $\Ip$). In case $p = \ell$, define $J^\Dp$ and $I^\Dp$ as $\Jp$ and $\Ip$, respectively. 
Let $T_p$ denote the time at which phase $p$ {\em ends}. 

Now we define the dual variables:
\begin{itemize}
\item $\theta_j$: Let $j \in I^\Dp$, where either $p \neq \ell$, or $p = \ell$ but the last phase ends in Step~\ref{i}. 
Define $\theta_j$ to be $1/T_p$. If $j \in I^{(\ell)}$ and the phase $\ell$ ends in Step~\ref{ii}, then define 
$\theta_j$ to be 0. 
\item $\eta$: If the last phase $\ell$ ends in Step~\ref{ii}, define $\eta$ to be $1/T_{\ell}$.  Otherwise, define $\eta$ to be $0$.
\item $\nu_e$: If the end-points of $e$ belong to $J^\Dp$ and $I^\Dp$ for some phase $p$, then $\nu_e$ is defined to be 0. The only other possibility is that the end-points of $e$ belong to $J^{(\Delta_{p'})}$ and $\Ip$, respectively, where $p' > p$. In this case, define $\nu_e$ to be $1/T_p - 1/T_{p'}$. Clearly, $\nu_e \geq 0$ for all edges $e$. 
\end{itemize}

Checking KKT conditions is easy. To check~\eqref{k4}, note that if $\theta_j > 0$ then $j$ is assigned $L_j$ value in Step~\ref{i} of a phase, and so, $L_j = 1$.  
To check~\eqref{k5}, note that if $\eta > 0$ then we are in Step~\ref{ii} of the last phase, and so, Claim~\ref{cl:m2} shows that $\sum_{j \in I} L_j = m$. 
To check~\eqref{k6}, clearly if $\nu_e > 0$, then $z_e = 0$.
Finally, to check~\eqref{k1}, consider an edge $e = (j,j')$ with $j \in I^\Dp$ and $j' \in J^{(\Delta_{p'})}$  for some $p' \geq p$. Note that $\theta_j + \eta = \frac{1}{T_p}$ and $\frac{w_j}{R_j} = \frac{1}{T_{p'}}$. But then $\nu_e$ is exactly the difference between these two terms. 

Since the KKT conditions~\eqref{k1}--\eqref{k6} are satisfied, this proves the optimality of our algorithm.



\section{The Missing Proofs}
\label{sec:proofs}

\subsection{Proofs for Section~\ref{sec:compleTime}}

\begin{proof}[Proof of Claim~\ref{cl:lprelx}]
Consider an optimal schedule $\calS$, and let $x_{j,t}$ be the volume of
  $j$ processed during $[t,t+1]$. Constraint~(\ref{eq:1}) states that
  the total amount of processing on $j$ must be at least (in fact, it
  will be equal to) $p_j$. Constraint~(\ref{eq:2}) requires that the
  total amount of processing that can happen during a slot $[t,t+1]$ is
  at most $m$ because each machine can perform 1 unit of processing
  during this time slot. Constraint~(\ref{eq:3}) can be justified as
  follows: suppose $j$ precedes $j'$, and consider a time $t$. Then the
  LHS of this constraint denotes the fraction to which $j$ has been
  processed till time $t$, and the RHS denotes this quantity for
  $j'$. In the schedule $\calS$, if the RHS is positive, then it must be
  the case that $j$ has been completed by time $t$, and so the LHS would
  be 1.
  Finally, we consider the objective function. Let $C_j$ be the
  completion time of $j$. Clearly, $x_{j,t} = 0$ for $t > C_j$, and so,
  $\sum_t \frac{t \cdot x_{j,t}}{p_j} \leq C_j \cdot \frac{\sum_t
    x_{j,t}}{p_j} = C_j. $
\end{proof}

 \begin{proof}[Proof of Claim~\ref{cl:full}]
 Suppose $\sum_{j \in I_t} \btL_j < m$, but $\btL_j < 1$ for some
  $j \in I_t$. Let $e$ be an edge incident with $j$ (since there is a
  copy of $j$ on the right side of the bipartite graph, we know that $j$
  has at least one edge incident with it). We can raise the $z_e$ value
  of this edge while maintaining feasibility. But this will increase the
  objective value, a contradiction.
\end{proof}

\begin{proof}[Proof of Claim~\ref{cl:sum0}]
  Constraint~\eqref{k1} implies that $\btR_j >
  0$ and so there is a vertex $j' \in I_t$ such that $e=(j',j) \in
  H_t$ with $\btz_e > 0$. Now~\eqref{k6} shows that $\nut_e =
  0$, and so $w_j/\bR^t_j = \etat + \theta^t_{j'} \geq
  \etat$. Hence the proof.
\end{proof}

\begin{proof}[Proof of Claim~\ref{cl:sum}]
  Let $e$ denote the edge $(j',j)$. Now~\eqref{k4} and~\eqref{k6} imply
  that $\nut_e = 0$ and $\thetat_{j'} = 0$. The claim now follows
  from~\eqref{k1}.
\end{proof}

\begin{proof}[Proof of Corollary~\ref{cor:sum}]
 By definition there is a neighbor $j' \in I_t$ of $j$ such that
  $\btL_{j'} < 1$.  Let $e'$ denote the edge $(j',j)$. If
  $\btz_{e'} > 0$, then we are done by Claim~\ref{cl:sum} above. So assume
  $\btz_{e'} = 0$.  Since $\btR_j > 0$. There must be an edge
  $e''=(j'',j)$ incident with $j$ such that $\btz_{e''} > 0$.  Again, if
  $\btL_{j''} < 1$, we are done by the Claim above. So, assume that
  $\btL_{j''} = 1$. Now consider reducing $\btz_{e''}$ by a tiny amount
  and increasing $\btz_{e'}$ by the same amount. This maintains
  feasibility of all constraints. Since $\btR_j$ remains unchanged, we
  remain at an optimal solution. Now we can apply Claim~\ref{cl:sum}.
\end{proof}

\begin{proof}[Proof of Claim~\ref{cl:upper}]
  Let us prove the upper bound first. If $\etat = 0$, there is nothing
  to prove. So assume $\etat > 0$. Constraint~\eqref{k5} now implies
  that
  \begin{gather}
    \ts \etat = \etat \cdot \nicefrac1m \cdot \sum_{j \in I_t} \btL_j =
    \etat \cdot \nicefrac1m \cdot
    \sum_{j \in J_t} \btR_j, \label{eq:4}
  \end{gather}
  the latter using~\eqref{eq:c1} and~\eqref{eq:c2}. Now using
  Claim~\ref{cl:sum0}, we can bound $\etat \btR_j \leq w_j$
  in~(\ref{eq:4}), giving us $\etat \leq \frac1m \sum_{j \in J_t} w_j$, and hence
  the upper bound.
  For the lower bound, suppose $\eta_t = 0$. Then for every job $j \in
  J_t$ and for every edge $(j',j) \in E_t$, we must have $\thetat_{j'} >
  0$. This means each of the jobs in $J_t$ are inactive, and hence
  $w(\Jact_t) = 0$, which proves the claim. The other case is when
  $\eta_t > 0$, and then we get: 
  \begin{gather*}
   \ts \etat \stackrel{(\ref{eq:4})}{=} \etat \cdot \nicefrac1m \cdot
    \sum_{j \in J_t} \btR_j \geq \etat \cdot \nicefrac1m \cdot
    \sum_{j \in \Jact_t} \btR_j = w(\Jact_t)/m,
  \end{gather*}
  where the last equality follows from Corollary~\ref{cor:sum}. 
\end{proof}

 \begin{proof}[Proof of Claim~\ref{cl:obj}]
 Fix a job $j$. Let $C$ be the chain in $G$ which ends with $j$ and satisfies $p(C) = \chain_j$. Consider a time $t \leq C_j$, the completion time of $j$. Suppose $\alpha_{j,t} = 0$. Considering $j$ as a vertex in $J_t$ (i.e., right side) in the bipartite graph $H_t$, it must be the case that all its neighbors get rate 1. Exactly one job in the chain $C$, say $j'$, belongs to the set $I_t$. Since $(j',j)$ is an edge in $H_t$, it must be the case that $j'$ gets rate 1. Thus, we conclude that whenever $\alpha_{j,t} = 0$, there is a job in $C$ which is processed for 2 units during $[t,t+1]$ (recall that the machines in $\A$ run at speed 2). Therefore, $w_j(C_j - r_j) \leq \alpha_j + w_j \cdot \chain_j/2$. Summing over all jobs, we get
$$ \sum_j w_j C_j \leq \sum_j w_j (r_j + \chain_j/2) + \sum_j \alpha_j. $$
Now observe that for any time $t$, $m \beta_t$ is equal to $w(U_t)/2$, and so, $m \cdot \sum_t \beta_t = \sum_j w_j C_j/2$. Subtracting this from the inequality above yields the desired result.
\end{proof}

 \subsection{Proofs for Section~\ref{sec:flow}}
 
 \begin{proof}[Proof of Claim~\ref{cl:objf}]
 Suppose $t$ is a time at which all machines are busy (i.e.,
  $\sum_{j \in I_t} \btL_j = m$). We first argue that
  $\sum_{j \in J_t} \alpha_{j,t} $ is equal to $w(\Jact_t)$. Indeed, observe that $w_j \btR_{j'}$ appears in either $\alpha_{j,t}$ or
  $\alpha_{j',t}$ depending on whether $j' \preceq j$ or
  otherwise. Hence, we get
  \begin{align}
    \label{eq:objf}
    \sum_{j \in J_t} \alpha_{j,t}  = \frac{1}{m} \sum_{j \in J_t} w_j
    \cdot\ind{j \in \Jact_t} \cdot    \sum_{j' \in J_t} \btR_j =
    \frac{1}{m} \sum_{j \in \Jact_t} w_j  \cdot    \sum_{j' \in J_t}
    \btL_j  =   w(\Jact_t).  
  \end{align}

  We now argue that
  $$\sum_t w(J_t \setminus \Jact_t) = \sum_j w_j \cdot \chain_j/(2+2\eps).$$ 
  Indeed, consider a job $j \in J_t \setminus \Jact_t$. All its
  neighbors in $G_t$ are running at rate 1. Therefore, we must be
  running a job in the chain which defines $\chain_j$. The factor
  $2(1+\eps)$ comes from the machine speedup. Observe that if all
  machines are not completely busy at time $t$, then all jobs in $J_t$
  are inactive~(Claim~\ref{cl:full}). Combining this
  with~\eqref{eq:objf}, we see that the total weighted flow-time is
  $$ \sum_t w(\Jact_t)  + \sum_t w(J_t \setminus \Jact_t) =   \sum_j
  \alpha_j + \sum_j w_j \cdot \chain_j/(2+2\eps). $$ The claim follows
  because $m \cdot \sum_t \beta_t$ is $1/(1+\eps)$ times the total
  weighted flow-time, which means the difference
  \[ \sum_{j} \alpha_j - m \sum_t \beta_t + \sum_j w_j \cdot \chain_j 
  \quad \geq \quad \Big(1 - \frac{1}{1+\eps}\Big) \cdot \sum_t w(J_t) \quad \geq \quad \frac{\eps}{2}  \cdot \sum_t w(J_t).\] 
 This finishes the proof of the claim because $\sum_t  w(J_t)$ is the total
  weighted flow-time.
\end{proof}

\begin{proof}[Proof of Claim~\ref{cl:sf1}]
  We begin by bounding $2(\gout_{s,j} - \gin_{s,j})$. As in the proof of
  Claim~\ref{cl:sf}, the contribution from paths $P_e$ for which $j$
  lies on the right side of $H_s$ is $-2 \eta^s_j \bR^s_j$, which by
  Lemma~\ref{lem:afnew} cancels $\alpha_{j,s}$. Thus we get
  $$ \alpha_{j,s} + 2(\gout_{s,j} - \gin_{s,j}) = 2 \sum_{j' \in J_s :
    e=(j \rightarrow j') \in E_s} \eta^s_{j'} \cdot \bz^s_{e}. $$
  Finally, recall that all jobs in a DAG have the same release
  time. Hence, any job $j'$ in the summation above is released at the
  same time as $j$. Moreover, any job $j'' \in J_s$ which contributes
  towards
  $\eta^s_{j'} = \frac1m \cdot ( \sum_{j'' \in J_s: j'' \preceq j'}
  w_{j''} )$ has  been  also released at or before $r_j$. Therefore,
  $\eta^s_{j'} \leq w(J_t)/m = (1+\eps)\beta_t$ by definition of $\beta_t$. 
  This implies 
  \[ \alpha_{j,s} + 2(\gout_{s,j} - \gin_{s,j}) \quad \leq \quad 2 \sum_{j' \in J_s :
    e=(j \rightarrow j') \in E_s} (1+\eps)\beta_t \cdot \bz^s_{e} \quad = \quad
     2(1+\eps) \beta_t
  \cdot \bL^s_j,
  \]
  where we use $\bL^s_j = \sum_{j' \in J_s :
    e=(j \rightarrow j') \in E_s} \bz^s_{e}$.
\end{proof}


{\small
\bibliographystyle{alpha}
\bibliography{sched}
}

\end{document}